\newcolumntype{L}[1]{>{$}p{#1}<{$}}
\newcolumntype{C}[1]{>{\centering$}p{#1}<{$}}
\newcolumntype{R}[1]{>{\raggedleft$}p{#1}<{$}}
\newcommand\maketab[2]{\newenvironment{#1}{\begin{quote}\noindent\begin{tabular}{#2}}{\end{tabular}\end{quote}}\newenvironment{noquote#1}{\noindent\begin{tabular}{#2}}{\end{tabular}}}
\newtheoremstyle{jamiestyle}% name of the style to be used
  {4pt}% measure of space to leave above the theorem. E.g.: 3pt
  {0pt}% measure of space to leave below the theorem. E.g.: 3pt
  {\it}% name of font to use in the body of the theorem
  {0pt}% measure of space to indent
  {\bf}% name of head font
  {.}% punctuation between head and body
  { }% space after theorem head; " " = normal interword space
  {}% Manually specify head
\theoremstyle{jamiestyle}
\newtheorem{thrm}{Theorem}[section]
\newtheorem{prop}[thrm]{Proposition}
\newtheorem{lemm}[thrm]{Lemma}
\newtheorem{corr}[thrm]{Corollary}
\newtheoremstyle{jamienfstyle}% name of the style to be used
  {4pt}% measure of space to leave above the theorem. E.g.: 3pt
  {0pt}% measure of space to leave below the theorem. E.g.: 3pt
  {\normalfont}% name of font to use in the body of the theorem
  {0pt}% measure of space to indent
  {\bf}% name of head font
  {.}% punctuation between head and body
  { }% space after theorem head; " " = normal interword space
  {}% Manually specify head
\theoremstyle{jamienfstyle}
\newtheorem{nttn}[thrm]{Notation}
\newtheorem{defn}[thrm]{Definition}
\newtheorem{xmpl}[thrm]{Example}
\newtheorem{rmrk}[thrm]{Remark}
\newlength{\mylength}
\newenvironment{frameqn}%
{\setlength{\fboxsep}{6pt}
\setlength{\mylength}{\linewidth}%{\textwidth}%
\addtolength{\mylength}{-2\fboxsep}%
\addtolength{\mylength}{-2\fboxrule}%
\Sbox
\minipage{\mylength}%
\setlength{\abovedisplayskip}{0pt}%
\setlength{\belowdisplayskip}{0pt}%
$$}%
{$$\endminipage\endSbox
\[\fbox{\TheSbox}\]}
\newenvironment{frametxt}%
{\setlength{\fboxsep}{5pt}
\setlength{\mylength}{\textwidth}%
\addtolength{\mylength}{-2\fboxsep}%
\addtolength{\mylength}{-2\fboxrule}%
\Sbox
\minipage{\mylength}%
\setlength{\abovedisplayskip}{0pt}%
\setlength{\belowdisplayskip}{0pt}%
}%
{\endminipage\endSbox
\[\fbox{\TheSbox}\]}
\newcommand\fini{\mathrm{fin}}
\newcommand{\mn}{\text{-}}
\newcommand{\aeq}{=_{\scriptstyle\alpha}}
\newcommand{\deffont}[1]{\textbf{#1}}
\newcommand{\f}[1]{\ensuremath{\text{$\mathit{#1}$}}}
\newcommand{\fcomp}{{\circ}}
\newcommand{\rulefont}[1]{\ensuremath{(\mathbf{#1})}}
\newcommand{\ssm}{:=}
\newcommand{\supp}{\f{supp}}
\newcommand{\tf}[1]{\mathsf{#1}}
\newcommand{\theory}[1]{\ensuremath{\mathsf{#1}}}
\newcommand\shift{\delta}
\newcommand\lstel[1]{a_{#1}}
\newcommand\limp{{\Rightarrow}}
\newcommand\liff{\Leftrightarrow}
\newcommand{\fa}{\f{fa}}
\newcommand{\fv}{\f{fv}}
\newcommand\fix{\f{fix}}
\newcommand\at{\MVAt}
\newcommand{\atoms}{\f{atoms}}
\newcommand\nontriv{\f{nontriv}}
\newcommand\pmss[1]{\atomsdown}
\newcommand\pmssC{\f{pmss}}
\newcommand\constant{C}
\newcommand\sort{\f{sort}}
\newcommand\den[1]{{\hspace{-.1ex}\scalebox{.65}{$#1$}}}
\newcommand\hden{\den{\interp H}} %{{\scalebox{.4}{$\mathcal I$}}}
\newcommand\flden{\den{[l]\interp H}} %{{\scalebox{.4}{$\mathcal I$}}}
\newcommand\fmden{\den{[m]\interp H}} %{{\scalebox{.4}{$\mathcal I$}}}
\newcommand\iden{\den{\interp I}} %{{\scalebox{.4}{$\mathcal I$}}}
\newcommand{\idenot}[2]{\denot{\interp I}{#1}{#2}}
\newcommand\interp[1]{\ensuremath{\mathscr #1}}
\newcommand\ns{\mathsf}
\newcommand{\denot}[3]{\llbracket #3 \rrbracket_{\scalebox{.6}{$#2$}}^{\hspace{-.1ex}\scalebox{.55}{$#1$}}}
\newcommand{\hdenot}[1]{\denot{\interp H}{}{#1}}
\newcommand{\fdenot}[1]{\denot{\interp F}{}{#1}}
\newcommand{\fmdenot}[1]{\denot{[m]\interp H}{}{#1}}
\newcommand{\act}{{\cdot}}
\newcommand\ment{\vDash}
\newcommand\mentfin{\vDash_{\scalebox{.6}{\it fin}}}
\newcommand\mentint{\vDash_{\dd}}
\newcommand\mone{\ensuremath{{\text{-}1}}}
\newcommand{\id}{{id}}
\newcommand\Ldd{\mathbb L^{\hspace{-.3ex}\dd}}
\newcommand\Ld{\mathbb L^{\hspace{-.3ex}{}_{{}^<}}}
\newcommand\atomsdown{\mathbb A^{\hspace{-.3ex}{}_{{}^<}}}
\newcommand\dd{{\scalebox{.6}{$<\hspace{-3pt}<$}}}
\newcommand\atomsdd{\mathbb A^{\hspace{-.3ex}\dd}}
\newcommand\atomsup{\mathbb A^{\hspace{-.3ex}{}_{{}^>}}} %{\hspace{-.3ex}\scalebox{.6}{$>$}}}
\newcommand\basesort{\tau}
\newcommand\Forall[1]{\forall #1.}
\newcommand\Exists[1]{\exists #1.}
\author{Murdoch J. Gabbay}
\address{\href{http://www.gabbay.org.uk}{\it http://www.gabbay.org.uk}}
\title{Finite and infinite support in nominal algebra and logic: \\ nominal completeness theorems for free}
\date{}
\begin{document}

\begin{abstract}
By operations on models we show how to relate completeness with respect to permissive-nominal models to completeness with respect to nominal models with finite support.
Models with finite support are a special case of permissive-nominal models, so the construction hinges on generating from an instance of the latter, some instance of the former in which sufficiently many inequalities are preserved between elements.
We do this using an infinite generalisation of nominal atoms-abstraction.

The results are of interest in their own right, but also, we factor the mathematics so as to maximise the chances that it could be used off-the-shelf for other nominal reasoning systems too.
Models with infinite support can be easier to work with, so it is useful to have a semi-automatic theorem to transfer results from classes of infinitely-supported nominal models to the more restricted class of models with finite support.

In conclusion, we consider different permissive-nominal syntaxes and nominal models and discuss how they relate to the results proved here.
\end{abstract}

\begin{keyword}
Permissive-nominal techniques, infinite support, nominal algebra, permissive-nominal logic, completeness.
\end{keyword}

\maketitle

\newpage
\tableofcontents
%%%%%%%%%%%%%%%%%%%%%%%%%%%%%%%%%%%%%%%%%%%%%%%%%%
\section{Introduction}
\label{sect.introduction}

Nominal techniques are an approach to variables in syntax and semantics which give variables denotational reality as \emph{names}.
The semantics underlying nominal techniques are \emph{nominal sets} \cite{gabbay:newaas-jv}, which identify variable symbols with names or (for set theorists) \emph{urelemente}. 
We may call names/urelemente \emph{atoms} and we write the set of all atoms as $\mathbb A$.

According to nominal techniques, syntax and semantics both `contain' atoms, in a sense made formal by a notion of \emph{support} (see Definition~\ref{defn.nominal.set}).

The original applications of nominal sets and nominal terms \cite{gabbay:newaas-jv,gabbay:nomu-jv} admitted only finite support (the interested reader can find more applications listed on \cite{mulligan:onlnb}). 

Permissive-nominal terms and models generalise this by allowing infinite support (based on a set of finitely representable but still infinite supporting sets called \emph{permission sets}).
Precise definitions will come later. 
For the benefit of the reader already familiar with nominal techniques we give a simple schematic for how this fits together:
$$
\begin{array}{c@{\quad}c@{\quad}c}
\text{nominal sets} &\leftrightarrow& \text{nominal terms}
\\
\subseteq & & \subseteq
\\
\text{permissive-nominal sets} &\leftrightarrow& \text{permissive-nominal terms}
\end{array}
$$
Both models and syntax seem better-behaved in the permissive case:
we avoid the conditional reasoning typical of more traditional finitely-supported nominal techniques.\footnote{For instance, `nominal algebra' uses equations conditional on freshness constraints saying that `$a$ is fresh for $X$' \cite{gabbay:nomuae}, whereas `permissive-nominal algebra' uses just equations \cite{gabbay:nomtnl}.} 
This makes it possible to unify the semantic and syntactic notions of $\alpha$-equivalence and freshness, to `just quotient' terms by $\alpha$-equivalence, and to cleanly add universal quantification. 
Some complex mathematical proofs become dramatically simpler.
Precise examples are cited in the Conclusions of this paper.

So permissive-nominal techniques are arguably nicer to work with, but `ordinary' nominal techniques are arguably more elementary (no infinities to confuse the reader)---and they are sufficient for many applications.

We indicate subset inclusions in the schematic above because models with finite support are special cases of models with infinite support, and it has been shown by arguments on syntax how to map from `ordinary' nominal syntax to permissive-nominal syntax \cite[Section~4]{gabbay:perntu-jv}. 

But what about the other way around?

In this paper, we explore models with differently-sized sets of atoms, give constructions to move from `larger' to `smaller' support, and test when these size transformations can and cannot be internally detected by the logics concerned.
The main two results are Theorems~\ref{thrm.fin.nonfin} and~\ref{thrm.pnl.fin.nonfin}---these follow from two technical results, Theorem~\ref{thrm.pull.out.l} and Lemma~\ref{lemm.l.varsigma}.

Because our arguments are based on models, it is fairly easy to apply them to different syntaxes.
In this paper we use the two examples studied in previous work by the author and others: nominal algebra \cite{gabbay:nomuae} (an equality reasoning system whose term language is nominal terms) and permissive-nominal logic (ditto, for first-order logic) \cite{gabbay:pernl-jv}.
See also a recent survey paper, which covers both of these in a uniform presentation \cite{gabbay:nomtnl}.

\subsection*{Structure of the paper}
\begin{itemize*}
\item
In Section~\ref{sect.perns} we briefly introduce permissive-nominal sets, with examples.
These will be our semantic universe in this paper; nominal sets from \cite{gabbay:newaas-jv} are a special case.
\item
In Section~\ref{sect.nominal.terms.syntax} we introduce permissive-nominal terms: signatures, terms, $\alpha$-equivalence, and their interpretation in permissive-nominal sets. 
The critical definition is Definition~\ref{defn.interpret.terms}, which maps from syntax to semantics.
\item
Section~\ref{sect.models.with.finite.support} shows how to reduce the size of the support of a interpretation with `large' support, to obtain a interpretation with `smaller' support.
This requires some interesting technical constructions.
Notably, we consider atoms-abstraction by a list of atoms $[l]x$ (Definition~\ref{defn.abstraction.X}), and a permutative notion of restricting a permutation $\pi/S$ (Definition~\ref{defn.pi.S}).
\item
In Section~\ref{sect.two.comm} are three technical commutation results: the common theme is that reducing the size of the support of a interpretation commutes with the structure of that interpretation.
\item
Section~\ref{sect.completeness.fin} proves our first main theorem, that permissive-nominal algebra is complete over finitely-supported interpretations (Theorem~\ref{thrm.fin.nonfin}).
\item
Section~\ref{sect.pnl} introduces a novel notion of `medium support' (Definition~\ref{defn.medium})
and proves our second main theorem, that permissive-nominal logic over interpretations with medium support has the same validity as over interpretations with finite support (Theorem~\ref{thrm.pnl.fin.nonfin}).
We discuss what this means in Subsection~\ref{subsect.pnl.relevance}.
\item
Section~\ref{sect.shift} discusses how the precise design of permission sets and permutations affects the proofs of this paper.
We find that the results are delicate: even quite small changes can break the proofs (Propositions~\ref{prop.upgrade.zero.fail} and~\ref{prop.upgrade.fail}).
\item
We conclude with a technical discussion of our results, related work, and future work.
\end{itemize*}

%%%%%%%%%%%%%%%%%%%%%%%%%%%%%%%%%%%%%%%%%%%%%%%%%%%%%%%%%%%%%%%%
\section{Permissive-nominal sets}
\label{sect.perns}
We start with the basic definitions of permission sets, permissive-nominal sets, and then we give some examples.

\subsection{Atoms, permutations, permission sets}

\begin{defn}
\label{defn.NZ}
Write $\mathbb N=\{0,1,2,3,\ldots\}$ for the natural numbers.
and 
$\mathbb Z=\{0,\text{-}1,1,\text{-}2,2,\ldots\}$ for the integers.
\end{defn} 

\begin{defn}
\label{defn.atoms}
For each $i\in\mathbb N$ fix a pair of disjoint countably infinite sets of \deffont{atoms} $\atomsdown_i$ and $\atomsup_i$.
Write 
$$
\atomsdown=\bigcup\atomsdown_i,
\quad
\atomsup=\bigcup\atomsup_i,
\quad
\mathbb A_i=\atomsdown_i\cup\atomsup_i,
\quad\text{and}\quad
\mathbb A=\atomsdown\cup\atomsup.
$$
$a,b,c,\ldots$ will range over \emph{distinct} atoms: we call this the \deffont{permutative} convention.
\end{defn}

%\begin{frametxt}
\begin{defn}
\label{defn.swap}
\label{def.permutation}
\label{def.nontriv}
Given $a,b\in\mathbb A_i$ for some $i\in\mathbb N$ write $(a\ b)$ for the \deffont{swapping} bijection on atoms mapping $a$ to $b$, $b$ to $a$, and any other $c\in\mathbb A\setminus\{a,b\}$ to $c$.

If $\pi$ is a bijection on atoms define $\nontriv(\pi)=\{a\mid \pi(a)\neq a\}$.
 
Write $\mathbb P_{\fini}$ for the group of bijections (finitely) generated by swappings, and call these bijections \deffont{permutations}.

Write $\pi \fcomp \pi'$ for the \deffont{composition} of $\pi$ and $\pi'$ (so $(\pi\fcomp\pi')(a)=\pi(\pi'(a))$).
Write $\id$ for the \deffont{identity} permutation (so $\id(a)=a$ always). 
\end{defn}
%\end{frametxt}

\begin{lemm}
A bijection $\pi$ on atoms is a permutation if and only if
\begin{itemize*}
\item
$a\in\mathbb A_i$ if and only if $\pi(a)\in\mathbb A_i$.
\item
$\nontriv(\pi)=\{a\mid \pi(a)\neq a\}$ is finite.
\end{itemize*}
\end{lemm}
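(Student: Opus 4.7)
The plan is to prove the two directions separately, noting that the forward direction is about preservation of the stated properties under the generation process, while the backward direction is a constructive decomposition argument by induction on $|\nontriv(\pi)|$.

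For the forward direction, I would first check that each generator — a swapping $(a\ b)$ with $a, b \in \mathbb A_i$ — satisfies both properties: it fixes each $\mathbb A_j$ setwise (trivially on the generator, because only two elements of a single $\mathbb A_i$ move), and its $\nontriv$ has size at most $2$. Then I would note that both properties are closed under composition: if $\pi, \pi'$ both preserve each $\mathbb A_i$ then so does $\pi \fcomp \pi'$, and
$$
\nontriv(\pi \fcomp \pi') \subseteq \nontriv(\pi) \cup \nontriv(\pi'),
$$
so finiteness persists through any finite product of swappings. Since $\mathbb P_\fini$ is generated by swappings, every permutation satisfies the two conditions.

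For the backward direction, suppose $\pi$ is a bijection preserving each $\mathbb A_i$ setwise with finite $\nontriv(\pi)$. I would proceed by induction on $n = |\nontriv(\pi)|$. The base case $n = 0$ forces $\pi = \id$, which is the empty composition of swappings. For the inductive step, pick $a \in \nontriv(\pi)$ and set $b = \pi(a)$, so $b \neq a$; by the stratification hypothesis $a, b \in \mathbb A_i$ for the same $i$, so $(a\ b)$ is a legitimate swapping. Note also that $b \in \nontriv(\pi)$, since otherwise $\pi(b) = b = \pi(a)$ would contradict injectivity. Set $\pi' = (a\ b) \fcomp \pi$; then $\pi'(a) = a$, and a short case analysis (treating $c = b$ and $c \notin \{a,b\}$ separately) shows $\nontriv(\pi') \subseteq \nontriv(\pi) \setminus \{a\}$. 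By the inductive hypothesis $\pi'$ is a product of swappings, and hence so is $\pi = (a\ b) \fcomp \pi'$.

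The main obstacle, though essentially routine, is the case analysis verifying $\nontriv(\pi') \subsetneq \nontriv(\pi)$; in particular one must confirm that no atom outside $\nontriv(\pi)$ is moved into $\nontriv(\pi')$, which requires using both that $(a\ b)$ fixes every atom outside $\{a,b\}$ and that $b$ was already in $\nontriv(\pi)$. Once this bookkeeping is in place the induction closes cleanly and the lemma follows.
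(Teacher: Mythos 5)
Your proposal is correct: the forward direction by closure of the two stated properties under composition of swappings, and the backward direction by induction on $|\nontriv(\pi)|$, peeling off one swapping at a time, is the standard decomposition argument and every step checks out --- in particular, the sort-preservation hypothesis is exactly what licenses $(a\ b)$ with $b=\pi(a)$ as a legitimate swapping, and your observation that $b\in\nontriv(\pi)$ is precisely what prevents $\nontriv((a\ b)\fcomp\pi)$ from acquiring new atoms. The paper states this lemma without any proof at all, treating it as routine, so there is no authorial argument to compare against; your argument fills that gap soundly.
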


\begin{defn}
\label{defn.pointwise}
If $A\subseteq\mathbb A$ define the \deffont{pointwise} action by $\pi\act A=\{\pi(a)\mid a\in A\}$.

\label{defn.the.comb}
A \deffont{permission set} $S$ is a set of the form $\pi\act \atomsdown$.
$S,T$ will range over permission sets.
\end{defn}

The choices made in Definitions~\ref{def.permutation} and~\ref{defn.the.comb} make Theorems~\ref{thrm.fin.nonfin} and~\ref{thrm.pnl.fin.nonfin} work.
These choices are possible within the framework of \cite{gabbay:nomtnl}.

%%%%%%%%%%%%%%%%%%%%%%%%%%%%%%%%%%%%%%%%%%%%%%%%%%%%%%%%%%%%%%%%%%%%%%
\subsection{Permissive-nominal sets}

\begin{defn}
\label{defn.perm.set} 
A \deffont{set with a permutation action} $\ns X$ is a pair $(|\ns X|,\act)$ of 
a \deffont{carrier set} $|\ns X|$ and 
a group action on the carrier set $(\mathbb P_{\fini}\times |\ns X|)\to |\ns X|$, written infix as $\pi\act x$.\footnote{So, $\id\act x=x$ and $\pi\act(\pi'\act x)=(\pi\fcomp\pi')\act x$ for every $\pi$ and $\pi'$ and every $x\in|\ns X|$.}

\label{defn.support}
Say $A\subseteq\mathbb A$ \deffont{supports} $x\in |\ns X|$ when for every (finite) permutation $\pi\in\mathbb P_{\fini}$, if $\pi(a) =a$ for all $a \in A$ then $\pi\act x =x$.
\end{defn}

\begin{frametxt}
\begin{defn}
\label{defn.nominal.set} 
A \deffont{permissive-nominal set} is a set with a permutation action such that every element has a unique least supporting set $\supp(x)$ such that $\supp(x)\subseteq S$ for some permission set $S$.
We call this the \deffont{support} of $x$. 

$\ns X$, $\ns Y$ will range over permissive-nominal sets.
\end{defn}
\end{frametxt}

In fact, if $x\in|\ns X|$ has \emph{some} supporting set $A\subseteq S$, then it has a \emph{least} one; see e.g. \cite[Theorem~4.3]{gabbay:pernl}.

\begin{defn}
\label{defn.restrict}
If $\pi$ is a permutation and $A\subseteq\mathbb A$ write $\pi|_A$ for the \deffont{restriction} of $\pi$ to $A$.
This is the partial function such that $\pi|_A(a)=\pi(a)$ when $a\in A$, and is undefined otherwise.
\end{defn}

\begin{lemm}
\label{lemm.supp.restricted}
Suppose $\ns X$ is a nominal set.
Suppose $x\in|\ns X|$ and $A\subseteq\mathbb A$ supports $x$.

Then $\pi|_A=\pi'|_A$ implies $\pi\act x=\pi'\act x$.
\end{lemm}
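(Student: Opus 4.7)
The plan is a standard group-theoretic manoeuvre: build a permutation out of $\pi$ and $\pi'$ that fixes $A$ pointwise, and use the definition of support to conclude.

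Concretely, I would set $\sigma = \pi^{\mone} \fcomp \pi'$. Because $\mathbb P_{\fini}$ is a group (Definition~\ref{def.permutation}), $\sigma \in \mathbb P_{\fini}$. For any $a \in A$ the hypothesis $\pi|_A = \pi'|_A$ gives $\pi'(a) = \pi(a)$, so
$$
\sigma(a) \;=\; \pi^{\mone}(\pi'(a)) \;=\; \pi^{\mone}(\pi(a)) \;=\; a.
$$
Thus $\sigma$ fixes every element of $A$ pointwise. Since $A$ supports $x$ (Definition~\ref{defn.support}), this yields $\sigma \act x = x$.

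Now apply $\pi$ to both sides and use the group action axioms (the footnote to Definition~\ref{defn.perm.set}):
$$
\pi \act x \;=\; \pi \act (\sigma \act x) \;=\; (\pi \fcomp \sigma) \act x \;=\; (\pi \fcomp \pi^{\mone} \fcomp \pi') \act x \;=\; \pi' \act x,
$$
which is the desired equality.

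I do not foresee a serious obstacle: the only thing worth double-checking is that $\pi^{\mone}$ and $\pi \fcomp \sigma$ genuinely live in $\mathbb P_{\fini}$ (so that the support hypothesis, stated over $\mathbb P_{\fini}$, applies), but this is immediate from $\mathbb P_{\fini}$ being a group. The hypothesis that $\ns X$ is a \emph{nominal} set (rather than just a set with a permutation action) is not actually needed in the argument; only the supporting property of $A$ is used.
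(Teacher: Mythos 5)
Your proposal is correct and is exactly the paper's argument: the paper's one-line proof says ``From the definition of support, considering $\pi^{\mone}\fcomp\pi'$,'' and you have simply written out that computation in full. Nothing to add.
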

\begin{proof}
From the definition of support, considering $\pi^\mone\fcomp\pi'$.
\end{proof}

\begin{lemm}
\label{lemm.supp.pi.x}
Suppose $\ns X$ is a permissive-nominal set and $x\in |\ns X|$.
Then $\supp(\pi\act x)=\pi\act\supp(x)$.
\end{lemm}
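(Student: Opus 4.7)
The plan is to prove the two inclusions $\supp(\pi\act x)\subseteq\pi\act\supp(x)$ and $\pi\act\supp(x)\subseteq\supp(\pi\act x)$ by exploiting the uniqueness of least supporting sets guaranteed by Definition~\ref{defn.nominal.set}. The workhorse observation is standard: \emph{if $A$ supports $y$ then $\pi\act A$ supports $\pi\act y$}. To see this, take any permutation $\pi'$ with $\pi'(a)=a$ for all $a\in\pi\act A$. Then for every $b\in A$ we have $\pi(b)\in\pi\act A$ and hence $(\pi^\mone\fcomp\pi'\fcomp\pi)(b)=\pi^\mone(\pi'(\pi(b)))=\pi^\mone(\pi(b))=b$, so $\pi^\mone\fcomp\pi'\fcomp\pi$ fixes $A$ pointwise. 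Since $A$ supports $y$ this gives $(\pi^\mone\fcomp\pi'\fcomp\pi)\act y=y$, which rearranges to $\pi'\act(\pi\act y)=\pi\act y$, as required.

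Applied to $y=x$ and $A=\supp(x)$, this tells us that $\pi\act\supp(x)$ supports $\pi\act x$. To invoke Definition~\ref{defn.nominal.set} and conclude that $\supp(\pi\act x)$ exists and is contained in $\pi\act\supp(x)$, I need $\pi\act\supp(x)$ to lie inside some permission set. But $\supp(x)\subseteq\sigma\act\atomsdown$ for some permutation $\sigma$ by Definition~\ref{defn.the.comb}, and therefore $\pi\act\supp(x)\subseteq(\pi\fcomp\sigma)\act\atomsdown$, which is again a permission set. Uniqueness of the least supporting set inside a permission set (cited just after Definition~\ref{defn.nominal.set}) then yields $\supp(\pi\act x)\subseteq\pi\act\supp(x)$.

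For the reverse inclusion, I apply the same observation with $y=\pi\act x$ and the permutation $\pi^\mone$: since $\supp(\pi\act x)$ supports $\pi\act x$, the set $\pi^\mone\act\supp(\pi\act x)$ supports $\pi^\mone\act(\pi\act x)=x$ and is contained in the permission set $\pi^\mone\act(\pi\fcomp\sigma')\act\atomsdown$ for whichever permission set $\sigma'\act\atomsdown$ contains $\supp(\pi\act x)$. By minimality of $\supp(x)$ we get $\supp(x)\subseteq\pi^\mone\act\supp(\pi\act x)$, and acting by $\pi$ on both sides gives $\pi\act\supp(x)\subseteq\supp(\pi\act x)$.

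No real obstacle is expected; the only point demanding care is verifying that the supporting sets one produces actually sit inside permission sets, because Definition~\ref{defn.nominal.set} only guarantees a least supporting set under that hypothesis. Both checks reduce to the elementary fact that permission sets are closed under the pointwise action of permutations, which is immediate from Definition~\ref{defn.the.comb}.
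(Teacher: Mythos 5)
Your proof is correct and is precisely the ``routine calculation using the group action'' that the paper's one-line proof alludes to: the conjugation argument showing that $A$ supports $y$ implies $\pi\act A$ supports $\pi\act y$, applied in both directions together with minimality of least supporting sets. Your extra care in checking that the candidate supporting sets lie inside permission sets is a welcome detail the paper leaves implicit.
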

\begin{proof}
By a routine calculation using the group action. 
\end{proof}

We conclude with a useful condition for checking whether $a\in\supp(x)$:
\begin{corr}
\label{corr.notinsupp}
Suppose $\ns X$ is a permissive-nominal set and $x\in |\ns X|$.
Suppose $b\not\in\supp(x)$.
Then $(b\ a)\act x= x$ if and only if $a\not\in\supp(x)$.
\end{corr}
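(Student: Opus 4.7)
The proof splits naturally into the two directions of the biconditional, and both are short applications of results already established in the excerpt. The plan is to handle the $(\Leftarrow)$ direction directly from the definition of support, and then to handle the $(\Rightarrow)$ direction using Lemma~\ref{lemm.supp.pi.x}.

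For the $(\Leftarrow)$ direction, suppose $a \notin \supp(x)$. Together with the hypothesis $b \notin \supp(x)$, this means that for every $c \in \supp(x)$ we have $c \neq a$ and $c \neq b$, so $(b\ a)(c) = c$. Thus the permutation $(b\ a)$ pointwise fixes $\supp(x)$, and since $\supp(x)$ supports $x$ (Definition~\ref{defn.nominal.set}), the definition of support (Definition~\ref{defn.support}) immediately gives $(b\ a) \act x = x$.

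For the $(\Rightarrow)$ direction, suppose $(b\ a) \act x = x$. By Lemma~\ref{lemm.supp.pi.x}, we have $\supp(x) = \supp((b\ a) \act x) = (b\ a) \act \supp(x)$. Now suppose for contradiction that $a \in \supp(x)$. Then $(b\ a)(a) = b \in (b\ a) \act \supp(x) = \supp(x)$, contradicting the assumption $b \notin \supp(x)$. Hence $a \notin \supp(x)$. (The degenerate case $a = b$ is automatic: then $(b\ a) = \id$ and $b \notin \supp(x)$ already gives $a \notin \supp(x)$.)

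There is no real obstacle here; the corollary is essentially a direct repackaging of Lemma~\ref{lemm.supp.pi.x} plus the definition of support, and its purpose is presumably to serve as a convenient test for membership in $\supp(x)$ in later sections. The only small care needed is the implicit assumption that $a, b \in \mathbb A_i$ for a common $i$ so that the swapping $(b\ a)$ is defined as a permutation in $\mathbb P_{\fini}$ (Definition~\ref{def.permutation}).
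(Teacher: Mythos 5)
Your proof is correct and follows essentially the same route as the paper's: the right-to-left direction is immediate from the definition of support, and the left-to-right direction uses Lemma~\ref{lemm.supp.pi.x} to observe that $a\in\supp(x)$ would force $b\in(b\ a)\act\supp(x)=\supp((b\ a)\act x)$, which is incompatible with $(b\ a)\act x=x$. The only cosmetic differences are that you argue by contradiction where the paper argues by contrapositive, and that your remark about the case $a=b$ is unnecessary given the paper's permutative convention that distinct letters denote distinct atoms.
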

\begin{proof}
Suppose $b\not\in\supp(x)$.
The right-to-left implication is by the definition of support. %Theorem~\ref{thrm.supp}.
For the left-to-right implication, we prove the contrapositive.
Suppose $a\in\supp(x)$.
By Lemma~\ref{lemm.supp.pi.x} $\supp((b\ a)\act x)=(b\ a)\act \supp(x)$.
By our suppositions, $(b\ a)\act\supp(x)\neq \supp(x)$.
It follows that $(b\ a)\act x\neq x$. 
\end{proof}

%%%%%%%%%%%%%%%%%%%%%%%%%%%%%%%%%
\subsection{Examples} 
\label{subsect.pn.sets.examples}
We briefly consider examples of permissive-nominal sets, which will be useful shortly.

%\subsubsection{Atoms}

\begin{defn}
\label{defn.atoms.perm}
$\mathbb A$ the set of atoms can be considered a permissive-nominal set with a natural permutation action $\pi\act a=\pi(a)$.

%The set $\{0,1\}$ can be considered a permissive-nominal set with the natural \deffont{trivial} permutation action $\pi\act x=x$ for all $\pi\in\mathbb P_{\fini}$ and $x\in\{0,1\}$. 

In the case of $\mathbb A$ %and $\{0,1\}$ 
only, we will be lax about the distinction between the set, and the permissive-nominal set with its natural permutation action.
\end{defn}

%\subsubsection{Atoms-abstraction}

\begin{defn}
\label{defn.abstraction.sets}
Suppose $\ns X$ is a permissive-nominal set and $\mathbb A_\nu$ is a set of atoms.
Suppose $x\in|\ns X|$ and $a\in\mathbb A_\nu$.
Define \deffont{atoms-abstraction} $[a]x$ and $[\mathbb A_\nu]\ns X$ by:
\begin{frameqn}
\begin{array}{r@{\ }l}
[a]x =& \{(a,x)\}\cup \{(b,(b\ a)\act x)\mid b\in\mathbb A_\nu{\setminus} \f{supp}(x)\}
\\
|[\mathbb A_\nu]\ns X| =& \{[a]x\mid a\in\mathbb A_\nu,\ x\in|\ns X|\} 
\\
\pi\act [a]x =& [\pi(a)]\pi\act x
\end{array}
\end{frameqn}
\end{defn}
\noindent (Compare Definition~\ref{defn.abstraction.sets} with Definition~\ref{defn.abstraction.X}.)

\begin{rmrk}
In the definition of $[a]x$ in Definition~\ref{defn.abstraction.sets} recall that by our permutative convention $b\neq a$.
An equivalent and more compact way of writing this is $[a]x=\{(\pi(a),\pi\act x)\mid \pi\in\f{fix}(\f{supp}(x){\setminus}\{a\})\}$ where $\f{fix}(A)=\{\pi\mid\Forall{a{\in}A}\pi(a)=a\}$.
\end{rmrk}

\begin{lemm}
\label{lemm.supp.abstraction}
\begin{enumerate*}
\item
$[\mathbb A_\nu]\ns X$ is a permissive-nominal set.
\item
$[a]x{=}[a]x'$ if and only if $x{=}x'$, for $a{\in}\mathbb A_\nu$ and $x{\in} |\ns X|$.
\item
$[a]x{=}[a']x'$ if and only if $a'{\not\in}\f{supp}(x)$ and $(a'\, a)\act x{=}x'$, for $a,a'{\in}\mathbb A_\nu$ and $x,x'{\in}|\ns X|$.
\end{enumerate*}
\end{lemm}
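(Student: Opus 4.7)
I will prove the three parts in the order (2), (3), (1). Part~(2) follows directly from the set-of-pairs definition, part~(3) is the main calculation and uses Lemmas~\ref{lemm.supp.restricted} and~\ref{lemm.supp.pi.x}, and part~(1) uses (2) and (3) to pin down the support and to verify well-definedness of the action.

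Part~(2) is immediate: by the permutative convention every $b$ in the indexing of $[a]x$ satisfies $b \neq a$, so the unique pair in $[a]x$ with first coordinate $a$ is $(a,x)$; comparing first coordinates in $[a]x = [a]x'$ forces $x = x'$. For part~(3) forward, the element $(a',x') \in [a']x'$ lies in $[a]x$ but cannot equal $(a,x)$ (the first coordinates differ by the permutative convention), so it has the form $(b,(b\,a)\act x)$, which forces $b = a' \in \mathbb A_\nu \setminus \supp(x)$ and $x' = (a'\,a)\act x$. For part~(3) backward, I check each generating pair of $[a]x$ appears in $[a']x'$. The typical case is $(b,(b\,a)\act x)$ with $b \neq a'$: Lemma~\ref{lemm.supp.pi.x} gives $\supp(x') = (a'\,a)\act\supp(x)$, from which $b \notin \supp(x')$; and the permutations $(b\,a)$ and $(b\,a')(a'\,a)$ agree on $\supp(x)$ (they disagree only at $a'$ and $b$, both fresh for $x$), so Lemma~\ref{lemm.supp.restricted} makes them act identically on $x$.

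For part~(1), the group axioms fall out of $\pi\act[a]x = [\pi(a)](\pi\act x)$. Well-definedness of this action on representatives uses part~(3) together with the conjugation identity $(\pi(a')\,\pi(a)) \fcomp \pi = \pi \fcomp (a'\,a)$ and Lemma~\ref{lemm.supp.pi.x}. I then claim $\supp([a]x) = \supp(x) \setminus \{a\}$. To see this set supports $[a]x$, take any $\pi$ fixing it: if $\pi(a) = a$ the result is trivial, while if $\pi(a) = b \neq a$, injectivity of $\pi$ forces $b \notin \supp(x)$, and part~(3) then yields $[b](\pi\act x) = [a]x$. Containment in a permission set is inherited from $\supp(x) \setminus \{a\} \subseteq \supp(x) \subseteq S$.

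The main obstacle is minimality of $\supp(x) \setminus \{a\}$. Given an omitted $c \in \supp(x) \setminus \{a\}$, I pick some $b$ with $b \notin \supp(x) \cup \{a\}$ (such $b$ exists because the complement of any permission set is infinite) and consider $\pi = (b\,c)$: this fixes every proper candidate support $A \subseteq \supp(x) \setminus \{a,c\}$ and also fixes $a$, so $\pi\act[a]x = [a]((b\,c)\act x)$; but $(b\,c)\act x \neq x$ by Corollary~\ref{corr.notinsupp}, and part~(2) then gives $\pi\act[a]x \neq [a]x$, contradicting the hypothesis. This is the step where the presence of infinitely many atoms outside any permission set is genuinely used.
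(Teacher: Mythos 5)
The paper offers no proof of this lemma to compare against: it is stated bare, with the convention (made explicit for the neighbouring Lemma~\ref{lemm.properties.of.support}) that such facts are proved as in the cited references. Your argument is correct and is the standard one: parts (2) and (3) by inspecting the pairs in the graph representation of $[a]x$ and using Lemmas~\ref{lemm.supp.restricted} and~\ref{lemm.supp.pi.x}, and part (1) by exhibiting $\supp(x)\setminus\{a\}$ as a supporting set inside a permission set (which, by the paper's remark following Definition~\ref{defn.nominal.set}, already suffices for the existence of a least support). Two small points of polish: in part (3) backward you only check $[a]x\subseteq[a']x'$, and should note the reverse inclusion follows by the symmetric hypothesis $a\notin\supp(x')$, $x=(a\ a')\act x'$; and in the minimality step your chosen $b$ must also avoid the candidate supporting set $A$ itself (not just $\supp(x)\cup\{a\}$) if $A$ is not assumed to lie inside $\supp(x)\setminus\{a\}$ --- still possible since $A$ is contained in a permission set, whose complement in $\mathbb A_\nu$ is infinite.
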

%\subsubsection{Product}

\begin{defn}
\label{defn.times}
If $\ns X_i$ are permissive-nominal sets for $1\leq i\leq n$ then define $\ns X_1\times\ldots\times \ns X_n$ by:
$$
\begin{array}{r@{\ }l}
|\ns X_1\times\ldots\times\ns X_n|=&|\ns X_1|\times\ldots\times|\ns X_n|
\\
\pi\act (x_1,\ldots,x_n)=&(\pi\act x_1,\ldots,\pi\act x_n)
\end{array}
$$
\end{defn}

\begin{lemm}
\label{lemm.properties.of.support}
\begin{itemize*}
\item
$\f{supp}(a)=\{a\}$.
\item
$\f{supp}([a]x)=\f{supp}(x)\setminus\{a\}$.
\item
$\f{supp}((x_1,\ldots,x_n))=\bigcup\{\f{supp}(x_i)\mid 1\leq i\leq n\}$.
\end{itemize*}
\end{lemm}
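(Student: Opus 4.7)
The plan is to prove each clause by the same two-step recipe dictated by Definition~\ref{defn.nominal.set}: exhibit the candidate set as a supporting set contained in some permission set, then prove minimality by showing that removing any element produces a set that no longer supports. Minimality will in each case be obtained by the standard \emph{fresh swap} trick: choose $b \in \mathbb{A}_i$ not in the candidate support nor in any other relevant finite set of obstructions, and use $(b\ c)$ to derive a contradiction via Corollary~\ref{corr.notinsupp}. Containment in a permission set will be automatic from the hypothesis that $\ns X$ (or $[\mathbb A_\nu]\ns X$, or the product) is a permissive-nominal set.

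For clause~1, $\{a\}$ obviously supports $a$ since $\pi(a)=a$ gives $\pi\act a=a$ directly. Minimality: if $A$ supported $a$ and omitted $a$, choose some $b\in\mathbb{A}_i$ (where $a\in\mathbb A_i$) with $b\notin A\cup\{a\}$---available because $A$ lives in a permission set, which is cofinite in $\mathbb A_i$---and then $(a\ b)$ fixes $A$ pointwise but sends $a$ to $b$, a contradiction. For clause~3, if $\pi$ fixes $\bigcup\supp(x_i)$ pointwise then it fixes each $\supp(x_i)$ hence each $x_i$, and the tuple too. Minimality is the same swap argument applied component-wise: if $c\in\supp(x_i)$ is missing from some supporting set $A$, the swap $(b\ c)$ fixes $A$ but moves $x_i$ by Corollary~\ref{corr.notinsupp}.

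The work lies in clause~2, where I expect the main obstacle. To show $\supp(x)\setminus\{a\}$ supports $[a]x$, let $\pi$ fix $\supp(x)\setminus\{a\}$ pointwise and aim at $[\pi(a)](\pi\act x)=[a]x$. The case $\pi(a)=a$ is immediate: then $\pi$ fixes all of $\supp(x)$, so $\pi\act x=x$. The interesting case is $\pi(a)=b\neq a$. First I argue $b\notin\supp(x)$: if $b$ lay in $\supp(x)\setminus\{a\}$ then $\pi(b)=b$, and injectivity of $\pi$ would force $a=b$. Second I argue $(b\ a)\act x=\pi\act x$ by checking that $\tau=(b\ a)\pi$ fixes $\supp(x)$ pointwise (on $a$: $\tau(a)=(b\ a)(b)=a$; on $c\in\supp(x)\setminus\{a\}$: since $c\neq a$ and $c\neq b$, $\tau(c)=(b\ a)(c)=c$), hence $\tau\act x=x$. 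With $b\notin\supp(x)$ and $(b\ a)\act x=\pi\act x$, Lemma~\ref{lemm.supp.abstraction}(3) delivers $[a]x=[b](\pi\act x)=\pi\act[a]x$. Minimality then runs as before: pick $c\in\supp(x)\setminus\{a\}$ outside some putative smaller supporting set, swap with a suitably fresh $b$, and combine Lemma~\ref{lemm.supp.abstraction}(2) with Corollary~\ref{corr.notinsupp} to get the contradiction.

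The only subtlety I anticipate is keeping track in clause~2 of the two separate uses of freshness: once at the semantic level, to choose $b\notin\supp(x)$ from the induced argument, and once at the syntactic level of the counterexample permutation, where I need $b$ outside $A$, $\supp(x)$, and $\{a,c\}$ simultaneously. Both choices are legitimate because the complement of any permission set intersected with each $\mathbb{A}_i$ is infinite.
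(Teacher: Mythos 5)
Your proof is correct. The paper itself gives no argument for this lemma---it simply defers to \cite{gabbay:newaas-jv} and \cite{gabbay:fountl}---and what you have written is precisely the standard argument found there: direct verification that the candidate set supports the element, plus the fresh-swap trick (via Corollary~\ref{corr.notinsupp} and Lemma~\ref{lemm.supp.abstraction}) for minimality, with the only genuinely delicate case being the abstraction clause, which you handle correctly by splitting on whether $\pi(a)=a$ and invoking Lemma~\ref{lemm.supp.abstraction}(3). One small slip: a permission set is not \emph{cofinite} in $\mathbb A_i$ (its complement contains all but finitely many atoms of $\atomsup_i$ and so is infinite); your closing paragraph states the correct fact---that the complement is infinite---which is what the freshness choices actually need.
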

\begin{proof}
Proofs are as in \cite{gabbay:newaas-jv} or \cite{gabbay:fountl}. 
\end{proof}

%%%%%%%%%%%%%%%%%%%%%%%%%%%%%%%%%%%%%%%%%%%%
\section{Permissive-nominal terms syntax and its interpretation}
\label{sect.nominal.terms.syntax}

\subsection{Signatures}

\begin{defn}
\label{defn.sort.sig}
A \deffont{sort-signature} is a tuple $(\mathcal A,\mathcal B)$ of \deffont{name} and \deffont{base} sorts $\mathcal A\subseteq \mathbb N$ and $\mathcal B$.

$\nu$ will range over name sorts; $\basesort$ will range over base sorts.

A \deffont{sort language} is defined by
\begin{frameqn}
\alpha ::= \nu \mid \basesort \mid 
(\alpha,\ldots,\alpha) \mid [\nu]\alpha
.
\end{frameqn}
\end{defn}

\begin{frametxt}
\begin{defn}
\label{defn.term.signature}
A \deffont{term-signature} over a sort-signature $(\mathcal A,\mathcal B)$ is a tuple $(\mathcal C,\mathcal X,\mathcal F,\f{ar},\pmssC)$ where:
\begin{itemize*}
\item
$\mathcal C$ is a set of \deffont{constants}.
\item
$\mathcal X$ is a set of \deffont{unknowns}. 
\item
$\mathcal F$ is a set of \deffont{term-formers}. 
\item 
$\f{ar}$ assigns 
\begin{itemize*}
\item
to each constant $\constant\in\mathcal C$ a base sort $\tau$ which we may write $\sort(C)$, 
\item
to each unknown $X\in\mathcal X$ a sort $\alpha$ which we write may $\sort(X)$, and 
\item
to each ${\tf f\in\mathcal F}$ a \deffont{term-former arity} $(\alpha)\tau$, where
$\alpha$ and $\tau$ are in the sort-language determined by $(\mathcal A,\mathcal B)$.
\end{itemize*}
\item
$\pmssC$ assigns to each constant a set $\pmssC(C)\subseteq\atomsdown$. % and $\pmss{X}$ respectively. 
\end{itemize*}
\label{defn.signature}
A \deffont{(nominal terms) signature} $\Sigma$ is then a tuple $(\mathcal A,\mathcal B,\mathcal C,\mathcal X,\mathcal F,\f{ar},\pmssC)$.
\end{defn}
\end{frametxt}

We may write $((\alpha_1,\ldots,\alpha_n))\tau$ just as $(\alpha_1,\ldots,\alpha_n)\tau$.
%We write $\tf f:(\alpha)\tau$ for $\f{ar}(\tf f)=(\alpha)\tau$.
%and similarly we write $\tf P:\alpha$ for $\f{ar}(\tf P)=\alpha$. 

%%%%%%%%%%%%%%%%%%%%%%%%%%%%%%%%%%%%%%%
\subsection{Terms}

\begin{defn}
\label{defn.syntax}
\label{defn.terms}
For each signature $\Sigma=(\mathcal A,\mathcal B,\mathcal C,\mathcal X,\mathcal F,\f{ar},\pmssC)$, define \deffont{terms} over $\Sigma$ by: 
\begin{frameqn}
\begin{array}{c@{\qquad}c@{\qquad}c}
\begin{prooftree}
(a\in\mathbb A_\nu,\ \nu\in\mathcal A)
\justifies
a:\nu
\end{prooftree}
&
\begin{prooftree}
(\sort(\constant)=\tau)
\justifies
\pi\act\constant:\tau
\end{prooftree}
&
\begin{prooftree}
(\sort(X)=\alpha)
\justifies
\pi\act X:\alpha
\end{prooftree}
\\[4ex]
\begin{prooftree}
r:\alpha\quad (\f{ar}(\tf f)=(\alpha)\tau)
\justifies
\tf f(r):\tau
\end{prooftree}
&
\begin{prooftree}
r_1:\alpha_1 \ \ldots\ r_n:\alpha_n
\justifies
(r_1,\ldots,r_n):(\alpha_1,\ldots,\alpha_n)
\end{prooftree}
&
\begin{prooftree}
r:\alpha\quad (a\in\mathbb A_\nu,\ \nu\in\mathcal A)
\justifies
[a]r:[\nu]\alpha
\end{prooftree}
\end{array}
\end{frameqn}
\end{defn}

We may write $\tf f((r_1,\ldots,r_n))$ as $\tf f(r_1,\ldots,r_n)$.

\maketab{tab2}{@{\hspace{0em}}R{8em}@{\ }L{6em}@{\ }R{8em}@{\ }L{10em}}
\begin{defn}
\label{defn.fa}
Define \deffont{free atoms} and the \deffont{permutation action}, and \deffont{free variables} on terms $r$ as follows:
%$$
%\begin{array}{r@{\ }l@{\qquad}r@{\ }l}
\begin{tab2}
\fa(a) =& \{a\}
&
\fa(\tf f(r)) =& \fa(r) 
\\
\fa(\pi\act C) =& \pi\act\pmssC(C) 
&
\fa((r_1,\ldots,r_n)) =& \bigcup_{1\leq i\leq n} \fa(r_i) 
\\
\fa(\pi\act X) =& \pi\act\pmss{X} 
&
\fa([a]r) =& \fa(r){\setminus}\{a\} 
\\[2ex]
\pi\act a=&\pi(a)
&
\pi\act\tf f(r)=&\tf f(\pi\act r)
\\
\pi\act (\pi'\act C)=&(\pi\fcomp\pi')\act C
&
\pi\act(r_1,\ldots,r_n)=&(\pi\act r_1,\ldots,\pi\act r_n)
\\
\pi\act (\pi'\act X)=&(\pi\fcomp\pi')\act X
&
\pi\act[a]r=&[\pi(a)]\pi\act r
\\[2ex]
\fv(a) =& \varnothing 
&
\fv(\tf f(r)) =& \fv(r) 
\\
\fv(\pi\act C) =& \varnothing 
&
\fv((r_1,\ldots,r_n)) =& \bigcup_{1\leq i\leq n} \fv(r_i) 
\\
\fv(\pi\act X) =& \{X\} 
&
\fv([a]r) =& \fv(r)
\end{tab2}
\end{defn}

\begin{rmrk}
In Definition~\ref{defn.fa} we in effect give every unknown permission set $\atomsdown$ (so that $\fa(\pi\act X)=\pi\act\atomsdown$).
We obtain the effect of an unknown with permission set $\pi\act\atomsdown$ just by writing $\pi\act X$. 
This simplified design makes Proposition~\ref{prop.commute.pi} easier to express. 
It corresponds roughly to \cite[Example~3.1.7(2)]{gabbay:nomtnl}.
\end{rmrk}

\begin{lemm}
\label{lemm.pi.ftma}
$\fa(\pi\act r)=\pi\act\fa(r)$.
\end{lemm}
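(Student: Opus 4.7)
The plan is a straightforward structural induction on the term $r$, following the six production rules of Definition~\ref{defn.terms}: atoms, constants, unknowns, term-formers, tuples, and abstractions. The clauses for $\fa$ and for the permutation action on terms are given in parallel in Definition~\ref{defn.fa}, so each inductive case reduces to a short equational manipulation of pointwise set actions.

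The atomic cases are essentially by unfolding of definitions. For $r=a$ we have $\fa(\pi\act a)=\fa(\pi(a))=\{\pi(a)\}=\pi\act\{a\}=\pi\act\fa(a)$. For a constant $\pi'\act C$ we use $\pi\act(\pi'\act C)=(\pi\fcomp\pi')\act C$ and then the definition of $\fa$ to get $(\pi\fcomp\pi')\act\pmssC(C)=\pi\act(\pi'\act\pmssC(C))=\pi\act\fa(\pi'\act C)$, using associativity of the pointwise action from Definition~\ref{defn.pointwise}. The case of an unknown $\pi'\act X$ is identical, with $\atomsdown$ in place of $\pmssC(C)$ (see the remark following Definition~\ref{defn.fa}). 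The cases of $\tf f(r)$ and $(r_1,\ldots,r_n)$ are immediate applications of the inductive hypothesis, since $\fa$ and $\pi\act{-}$ simply descend into subterms and (for tuples) commute with unions via $\pi\act\bigcup_i A_i=\bigcup_i\pi\act A_i$.

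The only case with even a whiff of content is the abstraction $[a]r$. Here
\[
\fa(\pi\act[a]r)=\fa([\pi(a)](\pi\act r))=\fa(\pi\act r)\setminus\{\pi(a)\},
\]
and by the inductive hypothesis this equals $(\pi\act\fa(r))\setminus\{\pi(a)\}$. What we want is $\pi\act(\fa(r)\setminus\{a\})$. These agree because $\pi$ is a bijection on $\mathbb A$, so for any $A\subseteq\mathbb A$ and $a\in\mathbb A$ we have $\pi\act(A\setminus\{a\})=(\pi\act A)\setminus\{\pi(a)\}$; this is the one small set-theoretic observation needed.

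No step is really an obstacle: the lemma is a routine structural-induction compatibility result, and the mild point worth noting is only the commutation of the pointwise action with set-difference in the abstraction case.
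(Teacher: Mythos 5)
Your proof is correct and is exactly the routine structural induction the paper leaves implicit (the lemma is stated without proof there). All six cases check out against Definition~\ref{defn.fa}, including the one mildly non-trivial observation that a bijective pointwise action commutes with set-difference in the abstraction case.
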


\begin{lemm}
\label{lemm.fa.pi.r}
If $\pi(a)=\pi'(a)$ for all $a\in\fa(r)$ then $\pi\act r=\pi'\act r$.
\end{lemm}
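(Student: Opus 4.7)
The plan is to proceed by induction on the structure of $r$ as laid out in Definition~\ref{defn.terms}. For each leaf case the conclusion will unfold directly from the permutation action and the definition of $\fa$; only the abstraction case requires genuine work.

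For $r=a$ we have $\fa(a)=\{a\}$, so the hypothesis gives $\pi(a)=\pi'(a)$, which is exactly $\pi\act a=\pi'\act a$. For $r=\sigma\act C$ (respectively $r=\sigma\act X$) the free atoms are $\sigma\act\pmssC(C)$ (respectively $\sigma\act\atomsdown$), and the permutation action collapses compositions so that $\pi\act(\sigma\act C)=(\pi\fcomp\sigma)\act C$ and similarly for $\pi'$. The hypothesis translates to $(\pi\fcomp\sigma)(a)=(\pi'\fcomp\sigma)(a)$ for every $a\in\pmssC(C)$ (respectively every $a\in\atomsdown$), and one appeals to the fact that the syntax of Definition~\ref{defn.terms} identifies two moderated constants $\rho\act C$ and $\rho'\act C$ exactly when $\rho$ and $\rho'$ agree on $\pmssC(C)$, and similarly for unknowns on $\atomsdown$. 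The term-former and tupling cases are immediate from the inductive hypothesis applied componentwise, since $\fa$ and the permutation action distribute through these constructors.

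The main obstacle is the abstraction case $r=[b]s$, because $\fa([b]s)=\fa(s){\setminus}\{b\}$ does not force $\pi(b)=\pi'(b)$; we must still produce $[\pi(b)]\pi\act s=[\pi'(b)]\pi'\act s$. The plan is to reduce to the inductive hypothesis by $\alpha$-renaming the bound atom. Invoke the permutative convention to pick an atom $c\in\mathbb A_\nu$ of the same name-sort as $b$ with $c\notin\fa(s)\cup\{b\}\cup\nontriv(\pi)\cup\nontriv(\pi')$. By $\alpha$-equivalence of term-level abstractions we have $[b]s=[c]\,(c\ b)\act s$, and since $\pi(c)=c=\pi'(c)$ the permutation action gives $\pi\act[b]s=[c]\pi\act(c\ b)\act s$ and $\pi'\act[b]s=[c]\pi'\act(c\ b)\act s$. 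It therefore suffices to show $\pi\act(c\ b)\act s=\pi'\act(c\ b)\act s$, and by Lemma~\ref{lemm.pi.ftma} and the induction hypothesis applied to $(c\ b)\act s$ this reduces to checking that $\pi$ and $\pi'$ agree on $(c\ b)\act\fa(s)$. A typical element of this set is either $c$ (coming from $b\in\fa(s)$), on which both permutations act as the identity by the freshness of $c$, or some $a\in\fa(s){\setminus}\{b\}\subseteq\fa([b]s)$, on which they agree by hypothesis.

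The expected difficulty lies entirely in the abstraction step, and specifically in choosing the witness atom $c$ correctly and verifying that $\alpha$-equivalence is preserved through the permutation action; once $c$ has been fixed, the rest is bookkeeping. All other cases are routine unfoldings of the definitions of Section~\ref{sect.nominal.terms.syntax}.
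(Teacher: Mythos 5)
The paper states Lemma~\ref{lemm.fa.pi.r} without proof (it is filed under ``proved by routine inductions''), so there is no official argument to compare against; judged on its own terms your plan has a genuine gap. The crux is that this paper explicitly does \emph{not} quotient terms by $\alpha$-equivalence: immediately after defining $\aeq$ it is stressed that $[a]X$ and $[b](b\ a)\act X$ are \emph{different} concrete terms. Your abstraction case rests on the identity $[b]s=[c](c\ b)\act s$, and your constant/unknown cases rest on identifying $\rho\act C$ with $\rho'\act C$ whenever $\rho$ and $\rho'$ agree on $\pmssC(C)$; both of these are instances of $\aeq$ (they are exactly the clauses in the footnote to the definition of $\alpha$-equivalence), not of syntactic equality. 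As concrete terms, $\rho\act C=\rho'\act C$ forces $\rho=\rho'$ as permutations, so your chain of equalities breaks at precisely these points.

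Indeed, read with literal equality of concrete terms the statement itself fails: take $r=[b]b$, so that $\fa(r)=\varnothing$ and the hypothesis is vacuous, yet $(b\ c)\act r=[c]c\neq[b]b=\id\act r$; similarly $r=X$ with $\pi,\pi'$ differing only on atoms in $\atomsup$ gives distinct concrete terms $\pi\act X$ and $\pi'\act X$. So the conclusion can only be meant as $\pi\act r\aeq\pi'\act r$ (the form in which this lemma normally appears in the nominal-terms literature), or else the hypothesis must quantify over more atoms than $\fa(r)$. Your argument, with every use of ``$=$'' at an abstraction, constant or unknown replaced by ``$\aeq$'' and an appeal to $\aeq$ being a congruence, is essentially the standard proof of that $\aeq$-version --- the choice of the fresh witness $c$ and the case analysis on $(c\ b)\act\fa(s)$ are fine --- but as written it asserts syntactic identities that do not hold in this paper's unquotiented syntax, and it does not flag the discrepancy.
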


\subsection{$\alpha$-equivalence}

\begin{defn}
A \deffont{congruence} is an equivalence relation $R$ such that if $r\mathrel{R} s$ then $\tf f(r)\mathrel{R} \tf f(s)$ and $(t_1,\ldots,r,\ldots,t_n)\mathrel{R}(t_1,\ldots,s,\ldots,t_n)$ and $[a]r\mathrel{R}[a]s$.

\deffont{$\alpha$-equivalence} is then the least congruence such that if $a,b\not\in\fa(r)$ then $(b\ a)\act r\aeq r$.\footnote{This characterisation, which follows \cite{gabbay:forcie}, captures in slightly abstract form three more syntax-directed rules: $b\not\in\fa(r)$ then $[b](b\ a)\act r\aeq [a]r$,\ and if $\pi|_{\pmss{X}}=\pi'|_{\pmss{X}}$ then $\pi\act X\aeq\pi'\act X$,\ and if $\pi|_{\pmssC(C)}=\pi'|_{\pmssC(C)}$ then $\pi\act C\aeq\pi'\act C$.}
\end{defn} 

We do not quotient terms by $\alpha$-equivalence.
The syntax $[a]r$ is a formal pair of $a$ and $r$.
So for example, $[a]X$ and $[b](b\ a)\act X$ for $b\not\in\pmss{X}$ are different concrete terms.

In fact, we never use $\alpha$-equivalence $\aeq$ directly in this paper (it would be needed if we proved soundness and completeness, but these proofs are in other papers and are not included here). 
However $\aeq$ lurks in the background, hard-wired into the denotation: it can be proved that if $r\aeq s$ then $r$ and $s$ will always denote the same element in Definition~\ref{defn.interpret.terms}. 

%%%%%%%%%%%%%%%%%%%%%%%%%%%%%%%%%%%%%%%%%%%%%%%%%%%%
\subsection{Interpretation of signatures and terms} 

\begin{defn}
\label{defn.f.equivar}
Suppose $\ns X$ and $\ns Y$ are permissive-nominal sets and $F\in |\ns X|\to|\ns Y|$ is a function.
Call $F$ \deffont{equivariant} when $F(\pi\act x)=\pi\act F(x)$ for all permutations $\pi\in\mathbb P_{\fini}$ and $x\in|\ns X|$.
\end{defn}

\begin{defn}
\label{defn.interpretation}
Suppose $(\mathcal A,\mathcal B)$ is a sort-signature (Definition~\ref{defn.sort.sig}).

\begin{frametxt}
A \deffont{interpretation} $\interp I$ for $(\mathcal A,\mathcal B)$ consists of an assignment of a permissive-nominal set $\idenot{}{\alpha}$ to each sort $\alpha$ in $(\mathcal A,\mathcal B)$, along with equivariant maps
\begin{itemize*}
\item
for each $\nu\in\mathcal A$ an equivariant and injective map $\mathbb A_\nu\to\idenot{}{\nu}$ which we write $a^\iden$, 
\item
for each $\nu\in\mathcal A$ and $\alpha$ an equivariant and injective map $[\mathbb A_\nu]\idenot{}{\alpha}\to\idenot{}{[\nu]\alpha}$ which we write $[a]^\iden x$, and
\item
for each $\alpha_i$ for $1\leq i\leq n$ an equivariant and injective map $\Pi_i\idenot{}{\alpha_i}\to \idenot{}{(\alpha_1,\dots,\alpha_n)}$ which we write $(x_1,\ldots,x_n)^\iden$.
\end{itemize*}
\end{frametxt}
\end{defn}

\begin{defn}
\label{defn.Sigma.interpretation}
Suppose $\Sigma=(\mathcal A,\mathcal B,\mathcal C,\mathcal F,\f{ar},\pmssC)$ is a signature (Definition~\ref{defn.signature}).

\begin{frametxt}
A \deffont{($\Sigma$-)interpretation} $\interp I$ for $\Sigma$, or \deffont{$\Sigma$-algebra}, consists of the following data:
\begin{itemize*}
\item
An interpretation for the sort-signature $(\mathcal A,\mathcal B)$ (Definition~\ref{defn.interpretation}).
\item
For every $\tf f\in\mathcal F$ with $\f{ar}(\tf f)=(\alpha)\tau$ an equivariant function $\tf f^\iden$ from $\idenot{}{\alpha}$ to $\idenot{}{\tau}$. 
\item
An assignment of a $C^\iden\in\idenot{}{\sort(\constant)}$ to $C\in\mathcal C$, such that $\supp(C^\iden)\subseteq\pmssC(C)$.
\end{itemize*}
\end{frametxt}
\end{defn}

\begin{defn}
\label{defn.valuation}
Suppose $\interp I$ is a $\Sigma$-algebra. 
A \deffont{valuation} $\varsigma$ to $\interp I$ is an equivariant function on unknowns $\mathcal X$ such that for each unknown $X$,\ 
$\varsigma(X)\in\idenot{}{\sort(X)}$.

$\varsigma$ will range over valuations.
\end{defn}

\begin{defn}
\label{defn.interpret.terms}
Suppose $\interp I$ is a $\Sigma$-algebra. 
Suppose $\varsigma$ is a valuation to $\interp I$.

Extend $\interp I$ to an \deffont{interpretation} on terms $\idenot{\varsigma}{r}$ (where of course $r$ is a term in the signature $\Sigma$) by:
\begin{frameqn}
\begin{noquotetab2}
\idenot{\varsigma}{a} =& a^\iden
&
\idenot{\varsigma}{\tf f(r)} =& 
\tf f^\iden(\idenot{\varsigma}{r})
\\
\idenot{\varsigma}{\constant} =& \constant^\iden 
&
\idenot{\varsigma}{(r_1,\ldots,r_n)} =& 
(\idenot{\varsigma}{r_1},\ldots,\idenot{\varsigma}{r_n})^\iden
\\
\idenot{\varsigma}{\pi\act X} =& \pi\act\varsigma(X)
&
\idenot{\varsigma}{[a]r} =& [a]^\iden\idenot{\varsigma}{r}
\end{noquotetab2}
\end{frameqn}
\end{defn}

Lemmas~\ref{lemm.sort.r} to~\ref{lemm.supp.r} are proved by routine inductions:
\begin{lemm}
\label{lemm.sort.r}
If $r:\alpha$ then $\idenot{\varsigma}{r}\in\idenot{}{\alpha}$.
\end{lemm}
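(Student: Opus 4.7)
The plan is to prove Lemma~\ref{lemm.sort.r} by structural induction on the derivation of the sorting judgement $r:\alpha$ given in Definition~\ref{defn.terms}. There are six cases corresponding to the six rules (atoms, constants, unknowns, term-formers, tuples, abstractions), and in each case the conclusion follows almost directly by unfolding Definition~\ref{defn.interpret.terms} and appealing to the typing of the corresponding piece of data supplied with the $\Sigma$-algebra $\interp I$ in Definitions~\ref{defn.interpretation} and~\ref{defn.Sigma.interpretation}.

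Concretely, for $r=a:\nu$ with $a\in\mathbb A_\nu$ we use that $a\mapsto a^\iden$ is a map $\mathbb A_\nu\to\idenot{}{\nu}$. For $r=\pi\act C$ with $\sort(C)=\tau$, $C^\iden\in\idenot{}{\tau}$ by the signature interpretation, and since $\idenot{}{\tau}$ is a permissive-nominal set, the action $\pi\act C^\iden$ remains in $\idenot{}{\tau}$. For $r=\pi\act X$ with $\sort(X)=\alpha$, by definition of a valuation $\varsigma(X)\in\idenot{}{\alpha}$, and again the permutation action stays inside $\idenot{}{\alpha}$. For $r=\tf f(r')$ with $\f{ar}(\tf f)=(\alpha')\tau$ and $r':\alpha'$, the inductive hypothesis gives $\idenot{\varsigma}{r'}\in\idenot{}{\alpha'}$, and then $\tf f^\iden$ maps into $\idenot{}{\tau}$. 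The tuple case $r=(r_1,\ldots,r_n)$ uses the inductive hypotheses componentwise and the tupling map $(x_1,\ldots,x_n)^\iden$.

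The only case requiring any real care is the abstraction $r=[a]r'$ with $a\in\mathbb A_\nu$ and $r':\alpha'$, giving $[a]r:[\nu]\alpha'$. Here the inductive hypothesis supplies $\idenot{\varsigma}{r'}\in\idenot{}{\alpha'}$; one then needs to observe that $[a]\idenot{\varsigma}{r'}$ (in the sense of Definition~\ref{defn.abstraction.sets}) lies in $[\mathbb A_\nu]\idenot{}{\alpha'}$, so that the interpretation map $[a]^\iden:[\mathbb A_\nu]\idenot{}{\alpha'}\to\idenot{}{[\nu]\alpha'}$ can be applied. This is immediate from the construction of $[\mathbb A_\nu]\ns X$ together with $a\in\mathbb A_\nu$.

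I do not expect any real obstacle: the proof is entirely a type-checking exercise once the correct clause of Definition~\ref{defn.interpret.terms} is matched against the corresponding typing constraint in Definitions~\ref{defn.interpretation} and~\ref{defn.Sigma.interpretation}. The mild subtlety, if any, is simply that the cases for $\pi\act C$ and $\pi\act X$ silently depend on each interpreting set being closed under the permutation action, which is guaranteed because each $\idenot{}{\alpha}$ is a permissive-nominal set.
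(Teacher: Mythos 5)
Your proof is correct and is exactly what the paper intends: the paper dispatches Lemma~\ref{lemm.sort.r} with the remark that it is ``proved by routine inductions'', and your case analysis on the sorting derivation, matching each clause of Definition~\ref{defn.interpret.terms} against the typing data in Definitions~\ref{defn.interpretation} and~\ref{defn.Sigma.interpretation}, is that routine induction spelled out. The observations you flag (closure of each $\idenot{}{\alpha}$ under the permutation action for the $\pi\act C$ and $\pi\act X$ cases, and that $[a]\idenot{\varsigma}{r'}$ lands in $[\mathbb A_\nu]\idenot{}{\alpha}$ before $[a]^\iden$ is applied) are the right places to be careful.
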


\begin{lemm}
\label{lemm.fv.varsigma}
If $\varsigma(X)=\varsigma'(X)$ for every $X\in\fv(r)$ then $\idenot{\varsigma}{r}=\idenot{\varsigma'}{r}$.
\end{lemm}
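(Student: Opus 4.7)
The plan is a straightforward structural induction on the term $r$, using the defining clauses in Definition~\ref{defn.interpret.terms} and the defining clauses of $\fv$ in Definition~\ref{defn.fa}. At each case we observe that $\fv$ is set up so that the induction hypothesis applies to all immediate subterms, and the denotation is defined pointwise from those sub-denotations via the equivariant maps supplied by the interpretation.

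\textbf{Base cases.} For $r = a$, neither side depends on $\varsigma$ since $\idenot{\varsigma}{a} = a^\iden = \idenot{\varsigma'}{a}$. For $r = C$ (or $\pi\act C$), again the denotation is $C^\iden$ (respectively $\pi\act C^\iden$) and does not mention $\varsigma$ at all, which is consistent with $\fv(\pi\act C) = \varnothing$. For $r = \pi\act X$, we have $\fv(\pi\act X) = \{X\}$, so the hypothesis gives $\varsigma(X) = \varsigma'(X)$ and thus $\idenot{\varsigma}{\pi\act X} = \pi\act\varsigma(X) = \pi\act\varsigma'(X) = \idenot{\varsigma'}{\pi\act X}$.

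\textbf{Inductive cases.} For $r = \tf f(r_0)$ we have $\fv(\tf f(r_0)) = \fv(r_0)$, so the hypothesis transfers unchanged to $r_0$; the inductive hypothesis gives $\idenot{\varsigma}{r_0} = \idenot{\varsigma'}{r_0}$, and applying $\tf f^\iden$ to both sides yields the required equality. For $r = (r_1,\ldots,r_n)$ we have $\fv(r) = \bigcup_i \fv(r_i)$, so the hypothesis restricts to each $r_i$; apply the inductive hypothesis componentwise and then the tupling map $(-,\ldots,-)^\iden$. For $r = [a]r_0$ we have $\fv([a]r_0) = \fv(r_0)$; apply the inductive hypothesis and then the abstraction map $[a]^\iden$.

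There is really no hard step: the lemma is designed so that $\fv$ tracks exactly which unknowns can influence the denotation, and the inductive structure of Definition~\ref{defn.interpret.terms} is fully compositional in the subterms. The only place where one might momentarily pause is the $\pi\act X$ case, where one must notice that the $\pi$ in the denotation $\pi\act\varsigma(X)$ acts on the semantic side and is insensitive to whether we used $\varsigma$ or $\varsigma'$ to obtain $\varsigma(X) = \varsigma'(X)$.
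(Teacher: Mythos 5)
Your proof is correct and is exactly the argument the paper intends: the paper dispatches Lemmas~\ref{lemm.sort.r}--\ref{lemm.supp.r} collectively as ``routine inductions'', and your structural induction on $r$, using the clauses of Definition~\ref{defn.interpret.terms} and the fact that $\fv$ from Definition~\ref{defn.fa} passes the hypothesis down to all subterms, is that routine induction spelled out. Nothing is missing.
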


\begin{lemm}
\label{lemm.pi.r.model}
$\pi\act\idenot{\varsigma}{r} = \idenot{\varsigma}{\pi\act r}$.
\end{lemm}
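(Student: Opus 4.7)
The plan is to proceed by induction on the derivation of $r$, i.e.\ on the structure of the term $r$ as given by Definition~\ref{defn.terms}. There are six cases, corresponding to the six term-formation rules: atoms $a$, moderated constants $\pi'\act\constant$, moderated unknowns $\pi'\act X$, term-former applications $\tf f(r')$, tuples $(r_1,\ldots,r_n)$, and abstractions $[a]r'$. In each case, I would unfold the left-hand side using Definition~\ref{defn.interpret.terms}, push $\pi$ through the $\iden$-structure using the appropriate structural equivariance from Definition~\ref{defn.interpretation} and Definition~\ref{defn.Sigma.interpretation}, apply the inductive hypothesis where necessary, and then refold using the permutation action on terms from Definition~\ref{defn.fa} and Definition~\ref{defn.interpret.terms} again.

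More concretely, for $r=a$ the equation $\pi\act a^\iden=(\pi(a))^\iden$ is exactly the equivariance of the injection $\mathbb A_\nu\to\idenot{}{\nu}$. For $r=\pi'\act\constant$ and $r=\pi'\act X$, no induction is needed: the equality reduces to the group-action law $\pi\act(\pi'\act y)=(\pi\fcomp\pi')\act y$ (applied to $\constant^\iden$ and to $\varsigma(X)$ respectively), matched against the term-level rewrite $\pi\act(\pi'\act\constant)=(\pi\fcomp\pi')\act\constant$ from Definition~\ref{defn.fa}, and similarly for $X$. For $r=\tf f(r')$, equivariance of $\tf f^\iden$ gives $\pi\act\tf f^\iden(\idenot{\varsigma}{r'})=\tf f^\iden(\pi\act\idenot{\varsigma}{r'})$, and then the induction hypothesis on $r'$ finishes the case; the tuple case is analogous, using equivariance of the tupling map.

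The abstraction case $r=[a]r'$ is the only one with a little content: we compute
\[
\pi\act\idenot{\varsigma}{[a]r'}=\pi\act[a]^\iden\idenot{\varsigma}{r'}=[\pi(a)]^\iden(\pi\act\idenot{\varsigma}{r'}),
\]
where the second equality uses the equivariance of the abstraction map $[\mathbb A_\nu]\idenot{}{\alpha}\to\idenot{}{[\nu]\alpha}$ together with the action on abstraction sets $\pi\act[a]x=[\pi(a)]\pi\act x$ from Definition~\ref{defn.abstraction.sets}. Applying the inductive hypothesis to $r'$ and then refolding via $\pi\act[a]r'=[\pi(a)]\pi\act r'$ finishes this case.

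I do not anticipate a genuine obstacle: the lemma is precisely the assertion that the denotation is a morphism of permutation actions, and it holds because every piece of structure used in Definition~\ref{defn.interpret.terms} has been assumed equivariant. The only thing to be careful about is not conflating the syntactic permutation action (rewriting $\pi\act(\pi'\act\constant)$ to $(\pi\fcomp\pi')\act\constant$) with the semantic one (the group action of $\mathbb P_{\fini}$ on $|\interp I_\alpha|$); keeping these bookkeeping steps explicit in the constant and unknown cases is what makes the induction go through cleanly.
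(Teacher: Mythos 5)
Your proposal is correct and matches the paper's intent exactly: the paper dismisses Lemmas \ref{lemm.sort.r}--\ref{lemm.supp.r} as ``proved by routine inductions'', and what you have written out is precisely that routine structural induction, pushing $\pi$ through each clause of Definition~\ref{defn.interpret.terms} via the equivariance assumptions of Definitions~\ref{defn.interpretation} and~\ref{defn.Sigma.interpretation} and the group-action law in the moderated-constant and moderated-unknown cases. Nothing further is needed.
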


\begin{lemm}
\label{lemm.supp.r}
$\supp(\idenot{\varsigma}{r})\subseteq\f{fa}(r)$.
\end{lemm}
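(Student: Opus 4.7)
The plan is a routine structural induction on the derivation of $r$ from Definition~\ref{defn.terms}. The general principle driving every case is that equivariant maps never enlarge support: if $F:|\ns X|\to|\ns Y|$ is equivariant then $\supp(F(x))\subseteq\supp(x)$, because any permutation fixing $\supp(x)$ pointwise also fixes $F(x)$, by the group action. Since Definition~\ref{defn.interpretation} and Definition~\ref{defn.Sigma.interpretation} guarantee that all interpretation maps ($a^\iden$, $[a]^\iden$, the tupling, and $\tf f^\iden$) are equivariant, each inductive step reduces to matching the syntactic $\fa$ clauses against the semantic support bounds provided by Lemma~\ref{lemm.properties.of.support} (atoms, abstraction, tuples) combined with the induction hypothesis.

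For the base cases: if $r=a$ then $\supp(a^\iden)\subseteq\supp(a)=\{a\}=\fa(a)$ by equivariance of $a^\iden$. If $r=\pi\act C$, then by Lemma~\ref{lemm.pi.r.model} $\idenot{\varsigma}{\pi\act C}=\pi\act C^\iden$, and by Lemma~\ref{lemm.supp.pi.x} together with the constraint $\supp(C^\iden)\subseteq\pmssC(C)$ from Definition~\ref{defn.Sigma.interpretation}, we get $\supp(\pi\act C^\iden)=\pi\act\supp(C^\iden)\subseteq\pi\act\pmssC(C)=\fa(\pi\act C)$. If $r=\pi\act X$, we have $\idenot{\varsigma}{\pi\act X}=\pi\act\varsigma(X)$ and similarly $\supp(\pi\act\varsigma(X))=\pi\act\supp(\varsigma(X))\subseteq\pi\act\atomsdown=\fa(\pi\act X)$, using that $\supp(\varsigma(X))\subseteq\atomsdown$ (the implicit permission set for unknowns noted in the remark after Definition~\ref{defn.fa}).

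For the inductive cases: for $\tf f(r)$, equivariance of $\tf f^\iden$ reduces support, so by the induction hypothesis $\supp(\tf f^\iden(\idenot{\varsigma}{r}))\subseteq\supp(\idenot{\varsigma}{r})\subseteq\fa(r)=\fa(\tf f(r))$. For the tuple case, Lemma~\ref{lemm.properties.of.support} and equivariance of the tupling give $\supp((\idenot{\varsigma}{r_1},\ldots,\idenot{\varsigma}{r_n})^\iden)\subseteq\bigcup_i\supp(\idenot{\varsigma}{r_i})\subseteq\bigcup_i\fa(r_i)=\fa((r_1,\ldots,r_n))$. For abstraction $[a]r$, Lemma~\ref{lemm.properties.of.support} (the $[a]x$ clause) and equivariance of $[a]^\iden$ yield $\supp([a]^\iden\idenot{\varsigma}{r})\subseteq\supp(\idenot{\varsigma}{r})\setminus\{a\}\subseteq\fa(r)\setminus\{a\}=\fa([a]r)$.

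I do not expect any substantial obstacle: every clause is forced by equivariance plus the matching clause of Lemma~\ref{lemm.properties.of.support}. The only subtlety worth flagging is the unknown case, which relies on the valuation constraint $\supp(\varsigma(X))\subseteq\atomsdown$; this is built into the design via the equivariance requirement on $\varsigma$ in Definition~\ref{defn.valuation} together with the convention, flagged in the remark after Definition~\ref{defn.fa}, that unknowns carry the permission set $\atomsdown$.
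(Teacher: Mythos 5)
Your proof is correct and is essentially the routine structural induction the paper has in mind (the paper gives no details, merely asserting Lemmas~\ref{lemm.sort.r}--\ref{lemm.supp.r} hold ``by routine inductions''): equivariant maps do not enlarge support, and each clause of Definition~\ref{defn.interpret.terms} is then matched against the corresponding clause of $\fa$ via Lemma~\ref{lemm.properties.of.support}. You also correctly isolate the only genuinely non-routine point, namely that the unknown case needs $\supp(\varsigma(X))\subseteq\atomsdown$, which indeed comes from the equivariance requirement on valuations together with the convention that every unknown carries permission set $\atomsdown$.
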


Looking ahead, later on in Section~\ref{sect.completeness.fin}, we use interpretations to define a notion of validity with respect to a model or a collection of models, written $\interp H\ment r=s$ and $\theory T\ment r=s$.

%%%%%%%%%%%%%%%%%%%%%%%%%%%%%%%%%%%%%%%%%%%%
\section{Reducing support of an interpretation}
\label{sect.models.with.finite.support}

In this section we show how, given an interpretation $\interp H$, to build an interpretation $[m]\interp H$ with `smaller' support.

$[m]\interp H$ will have `almost the same structure' as $\interp H$.
If two terms have a distinct denotation in $\interp H$ then their interpretation in $[m]\interp H$ is also distinct (Proposition~\ref{prop.FT}, which is essentially Theorem~\ref{thrm.pull.out.l} combined with Lemma~\ref{lemm.equality.of.abstractions.X}).

As we shall see in Section~\ref{sect.completeness.fin}, this result can be leveraged to proofs of completeness with respect to interpretations with finite support, assuming completeness with respect to all interpretations.

The idea of the construction is simple: in Definition~\ref{defn.abstraction.X} we take $\interp H$ and abstract all but finitely many atoms in its elements---in Definition~\ref{defn.F} we show how to combine this with the interpretation of the term-formers of $\interp H$.

One way to think of this, is that we replace atoms by numerical indexes (where $a$ is identified with its position in the infinite list of abstractions which we impose).
We can think of $[m]\interp H$ as an abstract `de Bruijn indexes' version of $\interp H$, 
where we recall that de Bruijn indexes are a method of representing object-level variables as numerical indexes \cite{bruijn:lamcnn} typically applied concretely to formal syntax rather than to models.
More on this in the Conclusions.

%%%%%%%%%%%%%%%%%%%%%%%%%%%%%%%%%%%%%%%%%
\subsection{Abstraction by atoms and by infinite lists of distinct atoms: $[a]x$ and $[l]x$}
\label{subsect.abs.X}

\begin{defn}
\label{defn.L}
Choose a fixed but arbitrary enumeration $\lstel{\mn 1},\lstel{\mn 2},\lstel{\mn 3},\ldots$ of some subset of $\atomsdown$---since atoms are countable, this can be done.
Write this enumeration as a list, $l_\ast=[\lstel{\mn 1},\lstel{\mn 2},\lstel{\mn 3},\ldots]$.\footnote{We use negative indexes because we wrote $\atomsdown$ with a $<$.  Of course this does not matter, but it does allow the diagram in Section~\ref{sect.shift} to make geometric sense.}

Define a permissive-nominal set $\mathbb L$ (parameterised by $l_\ast$) by:
$$
\begin{array}{r@{\ }l}
\pi\act l_\ast=&[\pi(\lstel{\mn 1}),\pi(\lstel{\mn 2}),\pi(\lstel{\mn 3}),\ldots]\qquad\qquad\qquad\quad\ \,  
\\
|\mathbb L|=&\{\pi\act l_\ast\mid \text{all }\pi\}
\end{array}
$$
$l$ will range over elements of $|\mathbb L|$.
\end{defn}

It is very easy to check that $\mathbb L$ is indeed a permissive-nominal set, and that $\supp(l)$ is equal to the atoms in $l$.

We will be most interested in the cases of Definition~\ref{defn.L} when $l_\ast$ enumerates all of $\atomsdown$ (Section~\ref{sect.completeness.fin}) and when $l_\ast$ enumerates `half' of $\atomsdown$ (Section~\ref{sect.pnl}).
However, nothing in the mathematics below will depend on this.

\begin{defn}
If $A\subseteq\mathbb A$ define $\fix(A)$ by:
$$
\fix(A)=\{\pi \mid \Forall{a{\in}A}\pi(a)=a\}
$$
\end{defn}

\begin{defn}
\label{defn.abstraction.X}
Suppose $\ns X$ is a permissive-nominal set and $x\in|\ns X|$. 
Suppose $l\in|\mathbb L|$.
Define 
$[l]x$ and $[\mathbb L]\ns X$ as follows: %---the abstraction of $x$ by the atoms in $l$---by 
\begin{frameqn}
\begin{array}{r@{\ }l}
[l]x =& \{(\pi\act l,\pi\act x) \mid \pi\in\f{fix}(\f{supp}(x){\setminus} \f{supp}(l))\}
\\
|[\mathbb L]\ns X|=&\{[l]x \mid x\in|\ns X|,\ l\in|\mathbb L|\}
\\
\pi\act[l]x =& [\pi\act l]\pi\act x 
\end{array}
\end{frameqn}
\end{defn}

\begin{rmrk}
$[l]x$ and $[\mathbb L]\ns X$ mirror $[a]x$ and $[\mathbb A]\ns X$ from Definition~\ref{defn.abstraction.sets}, and have broadly similar properties.
The idea of abstracting over infinitely many atoms was investigated in \cite{gabbay:genmn} (see equation (2) in Subsection~2.1).
\end{rmrk}

\begin{lemm}
\label{lemm.equality.of.abstractions.X}
Suppose $\ns X$ is a permissive-nominal set and $x,y\in|\ns X|$.
Suppose $l\in|\mathbb L|$.

Then $[l]x=[l]y$ if and only if $x=y$.
\end{lemm}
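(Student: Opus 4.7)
The reverse implication is immediate: $x=y$ makes the two set-comprehensions defining $[l]x$ and $[l]y$ literally identical.

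For the forward implication, my plan is to exhibit a distinguished element of $[l]x$ and chase what membership in $[l]y$ forces. Taking $\pi=\id$ (which trivially lies in $\fix(\supp(x)\setminus\supp(l))$) shows that $(l,x)\in[l]x$. Assuming $[l]x=[l]y$, the pair $(l,x)$ must also belong to $[l]y$, so by the defining comprehension there exists $\pi\in\fix(\supp(y)\setminus\supp(l))$ with $\pi\act l=l$ and $\pi\act y=x$. The goal is then to conclude that $\pi$ actually fixes all of $\supp(y)$, whence $\pi\act y=y$ by the definition of support, giving $x=y$.

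The key step, and the one piece where the list structure of $l$ (as opposed to merely the set of its atoms) is doing real work, is the observation that $\pi\act l=l$ forces $\pi$ to fix $\supp(l)$ pointwise. This is because $l$ is, by Definition~\ref{defn.L}, a specific enumeration $[\pi'(\lstel{\mn 1}),\pi'(\lstel{\mn 2}),\ldots]$; the action $\pi\act l$ applies $\pi$ position-by-position, so $\pi\act l=l$ amounts to $\pi(a)=a$ for every atom $a$ appearing in $l$. Since $\supp(l)$ is precisely the set of atoms in $l$ (noted immediately after Definition~\ref{defn.L}), this yields $\pi\in\fix(\supp(l))$.

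Combining $\pi\in\fix(\supp(y)\setminus\supp(l))$ with $\pi\in\fix(\supp(l))$ gives $\pi\in\fix(\supp(y))$, and Definition~\ref{defn.support} then yields $\pi\act y=y$. Comparing with $\pi\act y=x$ concludes $x=y$. The only place where I expect to pause and check carefully is the pointwise-fixing argument above — everything else is a direct unfolding of Definition~\ref{defn.abstraction.X} and the defining property of support.
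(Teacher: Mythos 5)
Your proof is correct and follows essentially the same route as the paper's: extract $(l,x)\in[l]x$, use $[l]x=[l]y$ to obtain $\pi$ with $\pi\act l=l$ and $\pi\act y=x$, observe that $\pi\act l=l$ forces $\pi\in\fix(\supp(l))$, and combine with $\pi\in\fix(\supp(y)\setminus\supp(l))$ to conclude $\pi\act y=y$. Your careful justification of the pointwise-fixing step (via the positional action on lists) is exactly the content the paper leaves implicit, and invoking the definition of support directly in place of Lemma~\ref{lemm.supp.restricted} is an inessential difference.
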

\begin{proof}
Clearly if $x=y$ then $[l]x=[l]y$.
Suppose $[l]x=[l]y$.
By construction $(l,x)\in [l]x$, so also $(l,x)\in [l]y$.
It follows that there exists $\pi$ such that $\pi\act l=l$ and $\pi\act y=x$, and $\pi\in\fix(\supp(y)\setminus\supp(l))$.
From $\pi\act l=l$ follows that $\pi\in\fix(\supp(l))$.
It follows that $\pi\in\fix(\supp(y))$ and so by Lemma~\ref{lemm.supp.restricted} that $\pi\act y=y$.
\end{proof}

\begin{lemm}
\label{lemm.abstraction.support.x}
Suppose $\ns X$ is a permissive-nominal set and $x\in|\ns X|$.
Suppose $l\in|\mathbb L|$.

Then $\supp([l]x) = \supp(x) {\setminus} \supp(l)$.
\end{lemm}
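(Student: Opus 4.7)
The plan is to show the two inclusions $\supp([l]x) \subseteq \supp(x) \setminus \supp(l)$ and $\supp(x) \setminus \supp(l) \subseteq \supp([l]x)$ in sequence, relying on the uniqueness of least support provided by the fact that $[\mathbb L]\ns X$ will inherit the permissive-nominal structure from $\ns X$.

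For the first inclusion, I would verify directly from Definition~\ref{defn.abstraction.X} that $A := \supp(x)\setminus\supp(l)$ is a supporting set for $[l]x$. Take any $\pi \in \fix(A)$ and show $[\pi\act l]\pi\act x = [l]x$ as sets. Unfolding, an element of the left-hand side has the form $(\pi'\pi\act l, \pi'\pi\act x)$ with $\pi' \in \fix(\supp(\pi\act x)\setminus\supp(\pi\act l))$, which using Lemma~\ref{lemm.supp.pi.x} equals $\fix(\pi\act A)$. Setting $\sigma = \pi'\pi$ and checking that $\sigma \in \fix(A)$ (since both $\pi$ fixes $A$ pointwise and $\pi'$ fixes $\pi\act A = A$) gives one inclusion; the converse, writing a given $\sigma \in \fix(A)$ as $(\sigma\pi^{\mone})\pi$, gives the other. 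Since $\supp(x) \subseteq S$ for some permission set $S$, also $A \subseteq S$, so Definition~\ref{defn.nominal.set} applies and we conclude $\supp([l]x) \subseteq A$.

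For the reverse inclusion, I would argue by contrapositive. Suppose $a \in \supp(x)\setminus\supp(l)$ and, for contradiction, $a \notin \supp([l]x)$. Using Definition~\ref{defn.atoms}, pick $b \in \mathbb A_i$ (where $a \in \mathbb A_i$) with $b \notin \supp(x) \cup \supp(l) \cup \supp([l]x)$; such a $b$ exists since each of these supports is contained in a permission set, so its complement in $\mathbb A_i$ is infinite. By Corollary~\ref{corr.notinsupp} applied to $[l]x$, we have $(b\ a)\act [l]x = [l]x$, i.e.\ $[(b\ a)\act l]\,(b\ a)\act x = [l]x$. Because $a,b \notin \supp(l)$, Lemma~\ref{lemm.supp.restricted} yields $(b\ a)\act l = l$, so we get $[l]\,(b\ a)\act x = [l]x$. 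Applying Lemma~\ref{lemm.equality.of.abstractions.X} gives $(b\ a)\act x = x$, and then Corollary~\ref{corr.notinsupp} in the other direction (since $b\notin\supp(x)$) forces $a \notin \supp(x)$, contradicting our assumption.

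The main obstacle is the first step: the set-theoretic verification that $[\pi\act l]\pi\act x = [l]x$ for $\pi \in \fix(A)$, because one has to be careful that the witness permutations $\pi'$ indexing the right-hand set correspond bijectively with those indexing the left. Everything else is bookkeeping given the lemmas already proved in the excerpt.
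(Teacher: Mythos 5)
Your proof is correct, and its skeleton matches the paper's: one inclusion by checking directly from Definition~\ref{defn.abstraction.X} that $\supp(x)\setminus\supp(l)$ supports $[l]x$, the other by picking a fresh $b$ and playing $(b\ a)$ against Corollary~\ref{corr.notinsupp}. The differences are in how each half is finished. For the first inclusion the paper simply asserts the invariance ``by properties of the group action'', whereas you carry out the verification that the two sets of pairs coincide via the substitution $\sigma=\pi'\fcomp\pi$ and its inverse---this is exactly the bookkeeping the paper elides, and your identification of it as the one non-trivial point is accurate. For the reverse inclusion the paper argues directly that $(b\ a)\act[l]x\neq[l]x$ by inspecting the supports of the second components of all pairs in each set (via Lemma~\ref{lemm.supp.pi.x}); you instead note that $(b\ a)\act l=l$ by Lemma~\ref{lemm.supp.restricted}, so that $(b\ a)\act[l]x=[l]((b\ a)\act x)$, and then invoke the injectivity of $[l]({-})$ (Lemma~\ref{lemm.equality.of.abstractions.X}) to push the equality down to $(b\ a)\act x=x$ and contradict $a\in\supp(x)$. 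Your route is arguably cleaner, since it reuses an already-proved lemma instead of re-examining the raw set of pairs, at the cost of needing $b$ fresh for $\supp([l]x)$ as well (which you correctly arrange). Both arguments share the same mild circularity---$\supp([l]x)$ must be known to exist before one reasons about it---and you resolve it the same way the paper implicitly does, by first exhibiting a supporting set inside a permission set and appealing to the existence of a least one.
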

\begin{proof}
By properties of the group action if $\pi\in\fix(\supp(x){\setminus}\supp(l))$ then $\pi\act [l]x=[\pi\act l]\pi\act x$.

Now suppose $a\in\supp(x){\setminus}\supp(l)$ and choose any $b$ fresh (so $b\not\in\supp(x)\cup\supp(l)$).
It is easy to use Lemma~\ref{lemm.supp.pi.x} to verify that every $(l',x')\in[l]x$ satisfies $a\in\supp(x')$ whereas every $(l',x')\in (b\ a)\act [l]x$ satisfies $a\not\in\supp(x')$.
It follows that $(b\ a)\act [l]x\neq [l]x$ and so by Corollary~\ref{corr.notinsupp} $a\in \supp([l]x)$.
\end{proof}

\begin{corr}
\label{corr.LX.pns}
$[\mathbb L]\ns X$ from Definition~\ref{defn.abstraction.X} \emph{is} a permissive-nominal set.
\end{corr}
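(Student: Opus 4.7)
The plan is to verify the two clauses of Definition~\ref{defn.nominal.set}: that $[\mathbb L]\ns X$ is a set with a permutation action, and that every element has a supporting set contained in some permission set (from which the existence of a unique least such set follows by the remark after that definition).

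First I would check that the operation $\pi\act[l]x = [\pi\act l]\pi\act x$ is well-defined on representatives and satisfies the group action axioms. Well-definedness amounts to showing that if $[l]x = [l']x'$ (as sets of pairs, per the explicit description in Definition~\ref{defn.abstraction.X}), then $[\pi\act l]\pi\act x = [\pi\act l']\pi\act x'$. The cleanest route is to observe that $[\pi\act l]\pi\act x$ coincides with the \emph{pointwise} action of $\pi$ on the set $[l]x$: starting from $\{(\sigma\act l,\sigma\act x)\mid \sigma\in\fix(\supp(x){\setminus}\supp(l))\}$ and applying $\pi$ componentwise, the substitution $\sigma' = \pi\fcomp\sigma\fcomp\pi^\mone$ together with Lemma~\ref{lemm.supp.pi.x} (giving $\pi\act(\supp(x){\setminus}\supp(l)) = \supp(\pi\act x){\setminus}\supp(\pi\act l)$) turns this set into $\{(\sigma'\act\pi\act l,\sigma'\act\pi\act x)\mid \sigma'\in\fix(\supp(\pi\act x){\setminus}\supp(\pi\act l))\} = [\pi\act l]\pi\act x$. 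Since the pointwise action is manifestly well-defined on sets and satisfies the group action axioms (these being inherited from the actions on $|\ns X|$ and $|\mathbb L|$), the action on $[\mathbb L]\ns X$ inherits the same properties.

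Next I would establish the support condition. Fix $[l]x\in|[\mathbb L]\ns X|$. Lemma~\ref{lemm.abstraction.support.x} already gives $\supp([l]x) = \supp(x){\setminus}\supp(l)$, so in particular $\supp([l]x)\subseteq\supp(x)$. Because $\ns X$ is itself a permissive-nominal set, there is a permission set $S$ with $\supp(x)\subseteq S$; hence $\supp([l]x)\subseteq S$ as well. This gives a supporting set of $[l]x$ inside a permission set, and by the fact quoted immediately after Definition~\ref{defn.nominal.set} (that having \emph{any} supporting subset of a permission set forces the existence of a least one) the uniqueness clause is satisfied.

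The only step requiring any care is the well-definedness of the permutation action; once that is reduced to the pointwise action as above, the rest is essentially bookkeeping and an appeal to Lemma~\ref{lemm.abstraction.support.x}. The group action axioms and the support bound come almost for free from the corresponding facts in $\ns X$ and from Lemma~\ref{lemm.supp.pi.x}, so I would not expect to encounter any serious obstacle beyond making the well-definedness argument precise.
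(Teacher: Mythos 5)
Your proposal is correct and follows essentially the same route as the paper, which simply declares the permutation-action part ``clear'' and derives the support condition from Lemma~\ref{lemm.abstraction.support.x}; you fill in the well-definedness of the action (correctly, via the conjugation $\sigma\mapsto\pi\fcomp\sigma\fcomp\pi^\mone$ and Lemma~\ref{lemm.supp.pi.x}) and then make the same appeal to Lemma~\ref{lemm.abstraction.support.x} and the existence of permission sets supporting elements of $\ns X$. No gaps; your version is just a more explicit rendering of the paper's two-line argument.
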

\begin{proof}
That it is a set with a permutation action is clear.
That every element has a supporting permission set follows from Lemma~\ref{lemm.abstraction.support.x}.
\end{proof}

\begin{lemm}
\label{lemm.at.l}
Suppose $\ns X$ is a permissive-nominal set. 
Suppose $\hat x\in|[\mathbb L]\ns X|$ and $l\in|\mathbb L|$ is such that $\supp(\hat x)\cap\supp(l)=\varnothing$.
Then there exists a unique element, write it $\hat x\at l\in|\ns X|$, such that $\hat x=[l](\hat x\at l)$.
\end{lemm}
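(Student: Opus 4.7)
The plan is to unwrap $\hat x$ using Definition~\ref{defn.abstraction.X}, manufacture a suitable permutation $\tau$ that simultaneously sends $l'$ to $l$ and fixes $\supp(\hat x)$, and then set $\hat x\at l=\tau\act x'$.

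First I would fix a representation $\hat x=[l']x'$ with $x'\in|\ns X|$ and $l'\in|\mathbb L|$, which exists by the definition of $|[\mathbb L]\ns X|$. By Lemma~\ref{lemm.abstraction.support.x} we have $\supp(\hat x)=\supp(x'){\setminus}\supp(l')$; write $A$ for this set. The hypothesis $\supp(\hat x)\cap\supp(l)=\varnothing$ becomes $A\cap\supp(l)=\varnothing$, and we also have $A\cap\supp(l')=\varnothing$ directly from the definition of $A$.

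The heart of the argument is to build a permutation $\tau\in\mathbb P_\fini$ such that $\tau\act l'=l$ and $\tau|_A=\id|_A$. Starting from any permutation $\sigma$ with $\sigma\act l'=l$ (which exists since both $l$ and $l'$ lie in the orbit of $l_\ast$), the set of indices $F=\{i\mid a_i'\neq a_i\}$ is finite because $\nontriv(\sigma)$ is finite. The forced assignments $\tau(a_i')=a_i$ (for every $i$) and $\tau(a)=a$ (for $a\in A$) are compatible: their source sets $\{a_i'\}$ and $A$ are disjoint, their target sets $\{a_i\}=\supp(l)$ and $A$ are disjoint, and both forced maps respect atom sorts because $a_i'$ and $a_i$ lie in the same $\atomsdown_{k}$ as $a_{{\mn}i}$. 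Outside of $\{a_i'\mid i\in F\}\cup\{a_i\mid i\in F\}$, a finite set, both source and target defaults agree with the identity, so $\tau$ can be extended to an element of $\mathbb P_\fini$. This is the step I expect to be the main obstacle: one has to be careful that the ``forced'' partial map does not already clash, and that its completion really is finitely supported and sort-respecting. The hypothesis $A\cap\supp(l)=\varnothing$ is precisely what prevents a clash between the forced identity on $A$ and the forced images $\{a_i\}$.

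With $\tau$ in hand, set $\hat x\at l:=\tau\act x'$. Then by the group-action clause of Definition~\ref{defn.abstraction.X},
\[
[l](\tau\act x')=[\tau\act l'](\tau\act x')=\tau\act[l']x'=\tau\act\hat x,
\]
and $\tau\act\hat x=\hat x$ by Lemma~\ref{lemm.supp.restricted} since $\tau|_A=\id|_A$ and $A$ supports $\hat x$. Hence $[l](\hat x\at l)=\hat x$, giving existence. For uniqueness, if $[l]y=[l]y'$ then Lemma~\ref{lemm.equality.of.abstractions.X} yields $y=y'$, so the element $\hat x\at l$ is well-defined independent of the choices of $x'$, $l'$, and $\tau$.
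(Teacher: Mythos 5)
Your proposal is correct and follows essentially the same route as the paper: both exploit that $\mathbb L$ is a single permutation orbit to get a permutation carrying $l'$ to $l$, use the disjointness hypotheses to adjust it so that it fixes $\supp(\hat x)$ pointwise (your explicit construction of $\tau$ is just a detailed justification of the paper's ``without loss of generality'' step), and then take $\hat x\at l$ to be that permutation applied to $x'$, with uniqueness from Lemma~\ref{lemm.equality.of.abstractions.X}.
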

\begin{proof}
By Lemma~\ref{lemm.equality.of.abstractions.X} $\hat x\at l$ is unique if it exists.

Suppose $\supp(l)\cap\supp(\hat x)=\varnothing$.
By construction (Definition~\ref{defn.abstraction.X}) $\hat x=[l']x'$ for some $l'\in\mathbb L$ and $x'\in|\ns X|$.
By construction (Definition~\ref{defn.L}) $l'=\pi\act l$ for some $\pi$.\footnote{This is the crux of the proof: $\mathbb L$ is composed of a single orbit under the permutation action.}
It is also a fact that since $\supp(l)\cap\supp(\hat x)=\varnothing$ and (by Lemma~\ref{lemm.abstraction.support.x}) $\supp(l')\cap\supp(\hat x)=\varnothing$, we can suppose without loss of generality that $\nontriv(\pi)\cap\supp(\hat x)=\varnothing$.
It follows that $\hat x=[l]\pi^\mone\act x'$ and so $\hat x\at l$ exists and is equal to $\pi^\mone\act x'$.
\end{proof}

\begin{lemm}
\label{lemm.factor.out}
Suppose $y_1,\ldots,y_n\in |[\mathbb L]\ns X|$.
Then for any $l$ such that $\supp(l)\cap\bigcup \supp(y_i)=\varnothing$,\ there exist $x_1,\ldots,x_n\in|\ns X|$ such that $y_i=[l]x_i$ for $1\leq i\leq n$.
\end{lemm}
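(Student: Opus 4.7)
The plan is to reduce the statement directly to the single-element case already established in Lemma~\ref{lemm.at.l}. That lemma says that whenever $\hat x\in|[\mathbb L]\ns X|$ and $l$ are disjoint in support, there is a unique $\hat x\at l\in|\ns X|$ with $\hat x=[l](\hat x\at l)$. So the task reduces to checking that the single-element hypothesis applies to each $y_i$ in turn, and then simply assembling the outputs.

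First I would observe that the hypothesis $\supp(l)\cap\bigcup_i\supp(y_i)=\varnothing$ immediately yields $\supp(l)\cap\supp(y_i)=\varnothing$ for every $1\leq i\leq n$. Then, for each such $i$, Lemma~\ref{lemm.at.l} is applicable and produces an element $x_i:=y_i\at l\in|\ns X|$ together with the identity $y_i=[l]x_i$. Collecting these $x_1,\ldots,x_n$ gives the required witnesses.

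There is essentially no obstacle here; the lemma is an immediate pointwise application of Lemma~\ref{lemm.at.l} using the fact that a union being disjoint from $\supp(l)$ implies each summand is. If anything, the only subtlety worth mentioning is that we do \emph{not} use uniqueness of $\hat x\at l$ for this statement---only existence---so no extra care with choices between the $y_i$ is required, and there is no need to invoke Lemma~\ref{lemm.equality.of.abstractions.X}.
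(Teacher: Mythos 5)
Your proposal is correct and matches the paper's own proof exactly: the paper likewise just applies Lemma~\ref{lemm.at.l} to each $y_i$ (the disjointness hypothesis for the union immediately gives it for each summand) and sets $x_i = y_i\at l$. Nothing further is needed.
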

\begin{proof}
We use Lemma~\ref{lemm.at.l} and take $x_i=y_i\at l$.
\end{proof}

%%%%%%%%%%%%%%%%%%%%%%%%%%%%%%%%%%%%%%%%%%%%%%%%%%%%%%%%%%
\subsection{Restricting permutations $\pi/S$}
\label{subsect.pi/S}

Intuitively, $\pi/S$ (Definition~\ref{defn.pi.S}) is the `smallest' permutation to agree with $\pi$ on $S$.
$\pi/S$ is `trying' to be $\pi|_S$ (Definition~\ref{defn.restrict}) 
but $\pi/S$ is a total function and furthermore is a permutation.
The main result is Theorem~\ref{thrm.leq.least}, and we use $\pi/S$ in Theorem~\ref{thrm.pull.out.l}.

As nominal techniques demonstrate, permutations are an attractive way to handle name-binding.
Think of $\pi/S$ as a version of $\pi|_S$ that we can use if we want to stay in the world of permutations.

\begin{xmpl}
\label{xmpl.pi/S}
Suppose $\pi=(a\;b\;c\;d\;e)(f\;g)$ (so $\pi$ maps $a$ to $b$ to $c$ to $d$ to $e$ to $a$, and $f$ to $g$ to $f$). 
Then:
$$
\begin{array}{@{\ \ \,}r@{\ }l@{\qquad\qquad\,}r@{\ }l}
\pi/\{a\}=&(a\ b\ e) 
&
\pi/\{a,b\}=&(a\ b\ c\ e)
\\
\pi/\{a,c\}=&(a\ b\ c\ d\ e)
&
\pi/\{a,f\}=&(a\ b\ e)(f\ g)
\end{array}
$$
Suppose $\pi=(a\;b\;c\;d\;e\;f)$.
Then
$$
\begin{array}{r@{\ }l@{\qquad}r@{\ }l}
\pi/\{b,e\} =& (a\;b\;c)(d\;e\;f)
&
\pi/\{b\} =& (a\;b\;c)
\\
\pi/\{b,e,d\} =& (a\;b\;c\;d\;e\;f)
&
\pi/\{a,d\} =& (a\;b\;f)(c\;d\;e)
\end{array}
$$
\end{xmpl}

Recall the definitions of $\nontriv(\pi)$ and $\pi$ from Definition~\ref{def.nontriv}. 
\begin{defn}
\label{defn.pi.S}
Represent permutations $\pi$ as cycles; so we write $\pi$ as a finite set of finite cycles indexed by $i\in I$ where cycle number $i$ has length $\alpha_i>1$: % (if $\alpha_i=\omega$ then the cycle is countably infinite):
$$
\pi = \Pi_{i\in I} (a_{i1}\ a_{i2}\ \ldots\ a_{i\alpha_i}) 
$$
Define $\pi/S$ as that permutation obtained as follows: 
\begin{frametxt}
\begin{itemize*}
\item
Delete from the cycle representation of $\pi$ above any atom $a$ such that $\{a,\pi(a),\pi^\mone(a)\}\cap S=\varnothing$. 
That is, if there is any part of a cycle of the form `$a_1\ a_2\ a_3$' where $a_1\not\in S$, $a_2\not\in S$, and $a_3\not\in S$, then we replace it with `$a_1\ a_3$'.
Repeat, until we cannot proceed.
\item
If there is any part of a cycle of the form `$a_1\ a_2\ a_3\ a_4$' where $a_1\in S$ and $a_4\in S$ but $a_2\not\in S$ and $a_3\not\in S$, break the cycle into two subcycles as follows:\ `$a_1\ a_2)(a_3\ a_4$'. 
\end{itemize*}
\end{frametxt}
\end{defn} 
In words: 
\begin{quote}
$\pi/S$ is obtained from $\pi$ by eliding sequences of three or more consecutive atoms not in $S$, and then by splitting cycles at any two consecutive atoms not in $S$.
\end{quote}

\begin{lemm}
$\pi/S$ is well-defined.
\end{lemm}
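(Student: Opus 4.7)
The plan is to verify three things: that the ambiguity in writing $\pi$ as a product of cycles does not affect the output, that the algorithm terminates, and that the result is a permutation. The first is essentially free, because a permutation has a unique cycle decomposition up to reordering of disjoint cycles and cyclic rotation within each cycle, and both rewrite rules are phrased purely in terms of cyclically consecutive atoms. Hence I would fix an arbitrary cycle representation at the outset and argue that every rewrite step is equivariant under these representational choices, so applying a step and then re-rotating gives the same outcome as re-rotating first and applying the step.

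For termination I would observe that each deletion strictly decreases the total number of atoms appearing in non-trivial cycles, while each split leaves that total fixed but strictly increases the number of non-trivial cycles. Both quantities are bounded above by $|\nontriv(\pi)|$, which is finite by Definition~\ref{def.nontriv}, so the process halts after finitely many steps. That the result is a permutation is preserved at each step because deletions and splits are local operations on a disjoint-cycle decomposition, producing again a product of disjoint cycles over a finite subset of $\nontriv(\pi)$, and therefore lying in $\mathbb P_{\fini}$.

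The main obstacle is confluence: different rewrite sequences must yield the same normal form. The cleanest route is to give a choice-free description of $\pi/S$ directly from the cycle structure of $\pi$ and the set $S$, and then show that any legitimate reduction sequence computes it. Concretely, for each $\pi$-cycle $c$ I would partition its atoms into the maximal arcs of consecutive non-$S$ atoms delimited by the $S$-atoms of $c$ (or, if $c$ contains no $S$-atom, take the whole cycle as a single arc). A case analysis on arc lengths then pins down the output: arcs of length $\ge 3$ are truncated by iterated deletion to their two endpoints, arcs of length $\le 1$ are left untouched, and the configuration of arc lengths around $c$ determines whether the splitting rule applies and where. The examples in Example~\ref{xmpl.pi/S} are exactly the small-case instances of this description.

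The delicate point will be verifying that deletions and splits applied in different orders interleave correctly---in particular that a split never disables a later deletion that was available before, and vice versa. I would check this by establishing local confluence (any critical pair of competing rewrites closes in one further step on each side) and then invoking Newman's lemma, since termination is already in hand. This reduces the whole question to a finite combinatorial case analysis on overlapping triples-of-three and triples-of-four in a single cycle.
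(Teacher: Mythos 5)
Your proof is correct and takes essentially the same route as the paper's: establish termination, check local confluence of the rewrites, and conclude by Newman's Lemma. If anything your termination argument is the more careful one --- the paper asserts that $\nontriv$ shrinks at every step, which fails for the splitting rule (it preserves the atoms and merely breaks a cycle in two), whereas your measure (atom count strictly decreasing under deletion, cycle count strictly increasing under splitting, both bounded by $|\nontriv(\pi)|$) covers both rewrites.
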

\begin{proof}
At each step the size of $\nontriv$ reduces, so the rewrite system is terminating.
It is not hard to check that rewrites are locally confluent.
The result follows by Newman's Lemma \cite{NewmanM:thecde}.
\end{proof}

\begin{defn}
\label{defn.pi|S}
Define $\pi'\leq_S \pi$ when:
\begin{itemize*}
\item
$\pi'|_S=\pi|_S$
\item
$(\pi')^\mone|_S=\pi^\mone|_S$
\item
For every cycle with atoms $C'$ in $\pi'$, there is a cycle with atoms $C$ in $\pi$ such that $C'\subseteq C$.
\end{itemize*}
\end{defn}
It is easy to verify that $\leq_S$ is a transitive reflexive relation.
$\leq_S$ is not antisymmetric: if $\pi=(a\ b\ c)$ and $\pi'=(a\ c\ b)$ and $S=\varnothing$ then $\pi\leq_S \pi'$ and $\pi'\leq_S\pi$ yet $\pi\neq \pi'$.

\begin{thrm}
\label{thrm.leq.least}
\begin{enumerate}
\item
$\pi/S$ is the unique $\leq_S$-least permutation beneath $\pi$.
\item
As a corollary, $(\pi/S)|_S=\pi|_S$ and if $\pi|_S=\pi'|_S$ and $\pi^\mone|_S=(\pi')^\mone|_S$ then $\pi/S=\pi'/S$.
\end{enumerate}
\end{thrm}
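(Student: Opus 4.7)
The plan is to prove~(1) by showing that $\pi/S$ lies in the $\leq_S$-downset of $\pi$ and is $\leq_S$-below every other element of this downset; uniqueness as a permutation (rather than merely up to $\leq_S$-equivalence) will then fall out of combining the two cycle-subset clauses with the $S$-restricted data. The corollary~(2) will follow essentially for free, because the argument for the second step only consults $\pi|_S$ and $\pi^\mone|_S$.

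\textbf{Step~1: $\pi/S\leq_S\pi$.} I would induct on the length of the rewrite sequence defining $\pi/S$, checking that each single step $\sigma\mapsto\sigma'$ satisfies $\sigma'\leq_S\sigma$, and then invoke transitivity of $\leq_S$. For a \emph{deletion} removing $a$ with $\{a,\sigma(a),\sigma^\mone(a)\}\cap S=\varnothing$, only the edges at these three non-$S$-atoms are altered, so clauses~(i,ii) of $\leq_S$ hold immediately; and the cycle through $a$ becomes the old cycle minus $a$ together with the singleton $\{a\}$, both subsets of the original. For a \emph{splitting} at a pattern $a_1\,a_2\,a_3\,a_4$ with $a_1,a_4\in S$ and $a_2,a_3\notin S$, the two out-edges that move are at non-$S$-atoms (the partner atom must also lie outside $S$ for the split to actually produce two cycles while preserving $\sigma|_S$), and the two resulting subcycles partition the atoms of the original cycle.

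\textbf{Step~2: $\pi/S\leq_S\rho$ for every $\rho\leq_S\pi$.} Clauses~(i) and~(ii) are immediate since $\rho$ and $\pi/S$ both agree with $\pi$ on $S$ forward and backward. For clause~(iii), fix a cycle $C'$ of $\pi/S$ and show it lives inside a single cycle of $\rho$. Singleton $C'$ are trivial, so suppose $C'$ contains at least one $S$-atom. Analysing the reduction, $C'$ consists of a consecutive run $b_{i_1},\ldots,b_{i_k}$ of $S$-atoms from one cycle of $\pi$, linked either directly (when the intervening $\pi$-block is empty) or via one non-$S$-atom (when the block has length~1), and possibly framed by a boundary non-$S$-atom at each end (namely $\pi^\mone(b_{i_1})$ and $\pi(b_{i_k})$ when these are not in $S$). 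Using $\rho|_S=\pi|_S$ and $\rho^\mone|_S=\pi^\mone|_S$, each consecutive pair $(b_{i_j},b_{i_{j+1}})$ is pinned into a common $\rho$-cycle---directly, or through the single intervening non-$S$-atom, which is pinned by both the forward and the backward agreement---and the same reasoning pulls in the two boundary atoms at the endpoints.

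\textbf{Uniqueness and corollary.} Any second $\leq_S$-least element $\rho$ below $\pi$ satisfies $\pi/S\leq_S\rho\leq_S\pi/S$ by Step~2 applied in both directions; the two cycle-subset clauses force cycles of $\pi/S$ and $\rho$ to share atom-sets, and within each such atom-set the $S$-atoms' positions (together with their pinned non-$S$-neighbours) are completely determined by $\pi|_S$ and $\pi^\mone|_S$, so $\rho=\pi/S$ pointwise. The corollary is then immediate: $(\pi/S)|_S=\pi|_S$ is just clause~(i) of $\pi/S\leq_S\pi$; and if $\pi|_S=\pi'|_S$ and $\pi^\mone|_S=(\pi')^\mone|_S$ then Step~2 (which consulted only the $S$-restricted data of $\pi$) yields both $\pi/S\leq_S\pi'$ and $\pi'/S\leq_S\pi$, so by~(1) applied to each in turn we get $\pi/S=\pi'/S$. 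The main obstacle is clause~(iii) of Step~2: I must carefully track how the forward-and-backward $S$-agreement between $\rho$ and $\pi$ propagates along a $\pi/S$-cycle, including at the boundary non-$S$-atoms that appear at cut points.
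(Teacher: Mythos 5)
Your proof is correct, and it follows the same route the paper takes -- except that the paper's ``proof'' of Theorem~\ref{thrm.leq.least} is a single sentence asserting the result by construction, whereas you actually carry out the verification (step-by-step monotonicity of the rewrites, minimality via the pinned $S$-neighbours, and the at-most-one-undetermined-edge argument for uniqueness). The only point worth tightening is your parenthetical about the splitting rule: a single cut of a cycle does not by itself produce two cycles (cf.\ the example $\pi/\{b,e,d\}$ in Example~\ref{xmpl.pi/S}, where one matching pattern leaves the cycle unchanged), so you should phrase that case as ``the out-edges that change all sit at the non-$S$ atoms $a_2$ of the various cut points,'' rather than appealing to a ``partner atom'' that must exist.
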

\begin{proof}
By construction $\pi/S$ contains only those atoms, in the smallest possible cycles, necessary to agree with $\pi$ and $\pi^\mone$ on $S$.
\end{proof}

%%%%%%%%%%%%%%%%%%%%%%%%%%%%%%%%%%%%%%%%%%%%%%
\subsection{Making support smaller}

Given an interpretation $\interp H$ and a list of atoms $m$, we are interested in `subtracting' $m$ from the support of $\interp H$, in some sense.
The main definition is Definition~\ref{defn.F}, which builds an interpretation with smaller support out of an interpretation.
For the cases we care about, `smaller support' will mean finite support; this will come later in Lemmas~\ref{lemm.LX.ordinary} and~\ref{lemm.LX.medium.ordinary}, which are then used in Theorems~\ref{thrm.fin.nonfin} and Theorem~\ref{thrm.pnl.fin.nonfin} respectively.
Here, we give the relevant construction.

\begin{defn}
\label{defn.F}
Given a signature $\Sigma$,\ a $\Sigma$-interpretation $\interp H$,\ and a list $m\in|\mathbb L|$ construct a $\Sigma$-interpretation $[m]\interp H$ as follows:
\begin{frameqn}
\begin{array}{@{\hspace{-2em}}r@{\ }l@{\hspace{0em}}r@{\ }l}
\fmdenot{\alpha}=&\{[l]x\mid l\in|\mathbb L|,\ x\in\hdenot{\alpha}\} 
\\[2ex]
a^\fmden =& [l](a^\hden) \ \ (\supp(l)\not\ni a)\hspace{-10em}
&
\tf f^\fmden([l]x) =& 
[l]\tf f^\hden(x)
\\
([l]x_1,\ldots,[l]x_n)^\fmden =& 
[l](x_1,\ldots,x_n)^\hden
&
[a]^\fmden([l]x) =& [l]([a]^\hden x)
\\
C^\fmden =& [m]C^\hden
\end{array}
\end{frameqn}
\end{defn} 

\begin{rmrk}
A couple of comments on Definition~\ref{defn.F}:

The index $m$ of $[m]\interp H$ is only used to interpret constants $C$.
We have to choose \emph{some} list of atoms to abstract---if our language did not admit non-equivariant constants, as was the case for the original Urban-Pitts-Gabbay syntax from \cite{gabbay:nomu-jv} or its permissive variant from e.g. \cite{gabbay:perntu-jv}, then we could just write $[\mathbb L]\interp H$.

In the case of tuples, we know we can write every element in the form $[l]x_i$ for $1\leq i\leq n$ for some $x_i$, by Lemma~\ref{lemm.factor.out}.
\end{rmrk}

\begin{prop}
\label{prop.F.finite.interpretation}
$[m]\interp H$ from Definition~\ref{defn.F} is an interpretation.
\end{prop}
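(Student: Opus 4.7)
The plan is to verify each clause of Definitions~\ref{defn.interpretation} and~\ref{defn.Sigma.interpretation} for $[m]\interp H$. Two parts are almost immediate: by Corollary~\ref{corr.LX.pns}, $\fmdenot{\alpha}=[\mathbb L]\hdenot{\alpha}$ is a permissive-nominal set; and for a constant $C$, Lemma~\ref{lemm.abstraction.support.x} gives $\supp(C^\fmden)=\supp(C^\hden)\setminus\supp(m)\subseteq\pmssC(C)\subseteq\atomsdown$, so $C^\fmden$ lies in the permission set $\atomsdown$ and has the required bound on its support. The substantive work is then to check, for each of the four maps $a^\fmden$, $\tf f^\fmden$, the tupling map, and $[a]^\fmden$, that it is (i) well-defined independently of the representative $[l]x$, (ii) equivariant, and (iii) injective (for the first, third, and fourth of these).

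Equivariance in each case is a one-line calculation combining Definition~\ref{defn.abstraction.X}'s formula $\pi\act[l]x=[\pi\act l]\pi\act x$ with equivariance of the corresponding map of $\interp H$; for instance, $\pi\act\tf f^\fmden([l]x)=[\pi\act l]\tf f^\hden(\pi\act x)=\tf f^\fmden(\pi\act[l]x)$, and similarly for the other three. The workhorse for (i) and (iii) is reduction to a common $l$: given any finite collection of elements of $|[\mathbb L]\hdenot{\alpha_i}|$, Lemma~\ref{lemm.factor.out} supplies a single $l$ whose support avoids all of theirs, and Lemma~\ref{lemm.at.l} uniquely determines witnesses $x_i$ with $y_i=[l]x_i$. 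Once representatives are brought into this common form, Lemma~\ref{lemm.equality.of.abstractions.X} turns equations $[l]y=[l]y'$ in $\fmdenot{\alpha}$ into equations $y=y'$ in $\hdenot{\alpha}$, so well-definedness and injectivity of each map factor through to the corresponding property of its counterpart in $\interp H$.

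I expect the main obstacle to be well-definedness of $[a]^\fmden$. An element of $[\mathbb A_\nu]\fmdenot{\alpha}$ admits two kinds of reshuffling: varying $(l,x)$ for the same abstraction $[l]x$ of $\fmdenot{\alpha}$, and replacing $a$ by a fresh $a'$ via the $\alpha$-style equation of Lemma~\ref{lemm.supp.abstraction}(3). I would first use Lemma~\ref{lemm.factor.out} to rename $l$ to be disjoint from $\{a,a'\}$ and from the relevant supports, and then use equivariance of $[a]^\hden$ in $\interp H$ to commute the outer $[a]$ past the inner $[l]$, reducing to the already-established well-definedness of the atoms-abstraction map $[a]^\hden$ of $\interp H$. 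Injectivity of $[a]^\fmden$ then goes through similarly, by running this argument backwards and using injectivity of $[a]^\hden$.
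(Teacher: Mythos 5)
Your proposal is correct and follows the only available route: the paper's own proof simply declares that checking the conditions of Definitions~\ref{defn.interpretation} and~\ref{defn.Sigma.interpretation} is routine, and your verification fills in exactly those checks with the right tools (Corollary~\ref{corr.LX.pns} for the carriers, Lemma~\ref{lemm.abstraction.support.x} for $\supp(C^\fmden)\subseteq\pmssC(C)$, and Lemmas~\ref{lemm.factor.out}, \ref{lemm.at.l} and~\ref{lemm.equality.of.abstractions.X} to reduce well-definedness and injectivity of each map to the corresponding property in $\interp H$). You also correctly isolate the one genuinely delicate point the paper leaves implicit, namely well-definedness of $[a]^\fmden$ under both re-representation of $[l]x$ and the $\alpha$-style identification of Lemma~\ref{lemm.supp.abstraction}, which is handled as you describe by choosing $l$ with support disjoint from the relevant atoms and using equivariance of $[a]^\hden$.
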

\begin{proof} 
It is routine to check that every condition in Definitions~\ref{defn.interpretation} and~\ref{defn.Sigma.interpretation} is satisfied.
\end{proof}

The next step is to build valuations to $[m]\interp H$.
This is Definition~\ref{defn.l.varsigma} and Proposition~\ref{prop.varsigma.Z.is.a.valuation}.
\begin{defn}
\label{defn.l.varsigma}
Suppose $\varsigma$ is a valuation to $\interp H$ and $l\in|\mathbb L|$. 
Define $[l]\varsigma$ by:
\begin{frameqn}
([l]\varsigma)(X) = [l](\varsigma(X)) 
\end{frameqn}
\end{defn}

\begin{prop}
\label{prop.varsigma.Z.is.a.valuation}
If $\varsigma$ is a valuation to $\interp H$ then $[l]\varsigma$ is a valuation to $[m]\interp H$.
\end{prop}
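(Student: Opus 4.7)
The proposition asks us to verify the two clauses of Definition~\ref{defn.valuation} for $[l]\varsigma$: first, that $([l]\varsigma)(X) \in \fmdenot{\sort(X)}$ for every unknown $X$; and second, that $[l]\varsigma$ is equivariant as a function on unknowns. Both checks are direct consequences of the construction of $[m]\interp H$ in Definition~\ref{defn.F} and the construction of $[l]\varsigma$ in Definition~\ref{defn.l.varsigma}.

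For the sort-membership condition, I would unfold the definition: $([l]\varsigma)(X) = [l](\varsigma(X))$. Since $\varsigma$ is a valuation to $\interp H$, we have $\varsigma(X) \in \hdenot{\sort(X)}$. But the first clause of Definition~\ref{defn.F} gives
$$\fmdenot{\alpha} = \{[l']x \mid l' \in |\mathbb L|,\ x \in \hdenot{\alpha}\},$$
so $[l](\varsigma(X)) \in \fmdenot{\sort(X)}$, discharging the sort obligation for every $X$.

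For equivariance, I would use the permutation action on $[\mathbb L]\ns X$ from Definition~\ref{defn.abstraction.X}, namely $\pi \act [l]x = [\pi\act l]\,\pi\act x$. Combined with the equivariance of $\varsigma$ itself, this gives
$$\pi \act ([l]\varsigma)(X) \;=\; \pi\act [l](\varsigma(X)) \;=\; [\pi\act l](\pi\act \varsigma(X)),$$
which is exactly what is needed: the indexed family $\{[l]\varsigma\}_l$ transports under $\pi$ in the same way that $\varsigma$ does, with the auxiliary list $l$ carried along by the action.

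I do not expect any genuine obstacle here; the proposition is essentially a bookkeeping check that the two pieces of data assembled in Definitions~\ref{defn.F} and~\ref{defn.l.varsigma} fit together coherently. The only point requiring care is that the list $l$ appears as a parameter of the construction and therefore participates in the permutation action, but this is already built into Definition~\ref{defn.abstraction.X}, so nothing further is required.
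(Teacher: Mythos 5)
Your sort-membership check is fine and matches the (trivial) half of the paper's argument. The gap is in what you call the equivariance clause. The computation $\pi\act([l]\varsigma)(X)=[\pi\act l](\pi\act\varsigma(X))$ is correct, but it does not establish equivariance of the \emph{fixed} function $[l]\varsigma$: the list $l$ is carried to $\pi\act l$, so for fixed $l$ the function does not commute with the action in any naive sense, and the observation that the indexed family $\{[l]\varsigma\}_l$ transports coherently is a statement about the family, not about the single valuation $[l]\varsigma$ that the proposition concerns. The condition actually carried by the ``equivariant function on unknowns'' clause of Definition~\ref{defn.valuation} --- the one the paper's own proof verifies, and the one used everywhere else (cf.\ Definition~\ref{defn.varsigma.Xx}, and the $\forall$ clause of Definition~\ref{defn.phi.interp}) --- is the support condition $\supp(\varsigma(X))\subseteq\pmss{X}$, which is what makes $\pi\act X\mapsto\pi\act\varsigma(X)$ well defined (via Lemma~\ref{lemm.supp.restricted}). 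You never check this for $[l]\varsigma$, and it is the only non-trivial content of the proposition.

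The missing step is short: by Lemma~\ref{lemm.abstraction.support.x}, $\supp([l]\varsigma(X))=\supp(\varsigma(X))\setminus\supp(l)\subseteq\pmss{X}$; moreover, since in the intended instances $\supp(l)$ is a permission set and permission sets differ finitely from one another, $\pmss{X}\setminus\supp(l)$ and hence $\supp([l]\varsigma(X))$ is \emph{finite}. The paper's proof consists of exactly this observation, and the finiteness is not incidental --- it is what later makes $[m]\interp H$ usable as a finitely-supported interpretation (Lemma~\ref{lemm.LX.ordinary}). So your proposal is repairable in one line, but as written it replaces the one substantive check with a computation that does not do the same job.
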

\begin{proof}
Consider an unknown $X$.
By assumption $\varsigma(X)\in\hdenot{\sort(X)}$ and $\supp(\varsigma(X))\subseteq\pmss{X}$.
By construction in Definitions~\ref{defn.the.comb} and~\ref{defn.abstraction.X},\ $\pmss{X}\setminus\supp(l)$ is finite so by Lemma~\ref{lemm.abstraction.support.x},\ $\supp([l]\varsigma(X))$ is finite.
The result follows.
\end{proof}

%%%%%%%%%%%%%%%%%%%%%%%%%%%%%%%%%%%%%%
\section{Three commutation results}
\label{sect.two.comm}

Theorem~\ref{thrm.pull.out.l},\ Lemma~\ref{lemm.l.varsigma},\ and Proposition~\ref{prop.commute.pi}
are three commutation results.
In Sections~\ref{sect.completeness.fin} and~\ref{sect.pnl} we will use these as the technical `engine' behind main theorems such as Theorems~\ref{thrm.fin.nonfin} and~\ref{thrm.pnl.fin.nonfin}.

%%%%%%%%%%%%%%%%%%%%%%%%%%%%%%%%%%%%
\subsection{Atoms of a term}

First, we need a technical tool $\atoms(r)$.
We need this to express the side-condition $\atoms(r)\cap\f{supp}(l)=\varnothing$ in Theorem~\ref{thrm.pull.out.l}, 
and the side-condition $\atoms(r)\cap\nontriv(\pi)=\varnothing$ in Proposition~\ref{prop.commute.pi}.
Without these side-condition, the results would not hold.

\begin{defn}
\label{defn.atomsof}
Define $\atoms(r)$ inductively by:
\begin{frameqn}
\begin{array}{r@{\ }l@{\quad}r@{\ }l}
\atoms(a)=&\{a\}
&
\atoms(\tf f(r))=&\atoms(r)
\\
\atoms(\pi\act C)=&\nontriv(\pi/\pmssC(C))
&
\atoms((r_1,\ldots,r_n))=&\bigcup \atoms(r_i)
\\
\atoms(\pi\act X)=&\nontriv(\pi/\pmss{X})  
&
\atoms([a]r)=& \atoms(r)\cup\{a\} 
\end{array}
\end{frameqn}
\end{defn}
$\atoms(r)$ collects the atoms `explicit' in $r$.
Contrast this with `free atoms of' $\fa(r)$ from Definition~\ref{defn.fa} which collects the atoms `potentially' in $r$.
For instance, $\fa(X)=\pmss{X}$ and is infinite, but $\atoms(X)=\varnothing$.
This is because $X$ mentions no atoms explicitly, but intuitively it could be instantiated for any term with atoms in $\pmss{X}$.

%%%%%%%%%%%%%%%%%%%%%%%%%%%%%%%%%%%%%%%%%%%%
\subsection{First commutation result}

Recall from Definition~\ref{defn.L} the construction of $\mathbb L$, parameterised over some $l_\ast$.
\begin{thrm}
\label{thrm.pull.out.l}
Suppose $l\in|\mathbb L|$
and $\atoms(r)\cap \f{supp}(l)=\varnothing$. 
%[mjg this suggests that $\pi$ should be finite in terms, if not in models] 
Then $\denot{[l]\interp H}{[l]\varsigma}{r} = [l]\denot{\interp H}{\varsigma}{r}$.
\end{thrm}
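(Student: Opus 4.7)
I would prove this by structural induction on $r$. The cases $r = a$, $r = \tf f(r')$, $r = (r_1,\ldots,r_n)$, and $r = [a]r'$ are routine. For atoms the side condition gives $a \notin \supp(l)$, so the conditional clause $a^\flden = [l](a^\hden)$ in Definition~\ref{defn.F} applies directly. For the structural cases I apply the inductive hypothesis to the subterm(s) and invoke the matching clause of Definition~\ref{defn.F}, each of which is set up exactly so that $[l]\cdot$ commutes past $\tf f^\hden$, tuple-formation, and $[a]^\hden$ respectively.

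The real content lies in the cases $r = \pi \act X$ and $r = \pi \act C$, which are entirely parallel; I sketch the unknown case. Unfolding Definitions~\ref{defn.interpret.terms} and~\ref{defn.l.varsigma}, the target reduces to showing $[\pi \act l]y = [l]y$ where $y = \pi \act \varsigma(X)$. For this I bring in the restriction construction of Subsection~\ref{subsect.pi/S}. Set $\rho = \pi / \pmss{X}$. By Theorem~\ref{thrm.leq.least}, $\rho$ agrees with $\pi$ on $\pmss{X}$, so $\pi \rho^{\mone}$ fixes $\pmss{X}$ pointwise; since $\supp(\varsigma(X)) \subseteq \pmss{X}$, this gives $y = \pi \act \varsigma(X) = \rho \act \varsigma(X)$, and a short direct computation shows that $\pi \rho^{\mone}$ moreover fixes $\supp(y) \subseteq \rho \act \pmss{X}$ pointwise. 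Meanwhile the side condition reads $\nontriv(\rho) \cap \supp(l) = \varnothing$, so $\rho \act l = l$ and $\pi \act l = (\pi \rho^{\mone}) \act l$. Thus $\pi \rho^{\mone}$ is a permutation fixing $\supp(y)$ pointwise and sending $l$ to $\pi \act l$, and unfolding Definition~\ref{defn.abstraction.X} yields $[l]y = [(\pi \rho^{\mone}) \act l]y = [\pi \act l]y$, as required.

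I expect the main obstacle to be precisely this ``atomic'' case: one must recognise that the side condition $\atoms(r) \cap \supp(l) = \varnothing$, together with Definition~\ref{defn.atomsof} and the restriction machinery of Subsection~\ref{subsect.pi/S}, is engineered so that $\pi$ splits as $(\pi \rho^{\mone}) \cdot \rho$ with the ``essential'' part $\rho$ fixing $l$ (by the side condition) and the ``nonessential'' part $\pi \rho^{\mone}$ fixing $\pmss{X}$ pointwise, hence fixing $\supp(y)$ pointwise by the calculation above. After this split is in place, the concluding abstraction-equality step is routine.
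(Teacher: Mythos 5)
Your proposal is correct and follows essentially the same route as the paper: induction on $r$, with the structural cases discharged by the clauses of Definition~\ref{defn.F}, and the moderated-unknown and constant cases handled by passing to $\pi/\pmss{X}$ via Theorem~\ref{thrm.leq.least}, using that $\atoms(\pi\act X)=\nontriv(\pi/\pmss{X})$ is disjoint from $\supp(l)$ and that $\pi$ and $\pi/\pmss{X}$ agree on the relevant supports (Lemmas~\ref{lemm.supp.restricted} and~\ref{lemm.abstraction.support.x}). The only difference is presentational: the paper rewrites $\pi\act[l]\varsigma(X)$ step by step into $[l]\pi\act\varsigma(X)$, whereas you factor $\pi$ as $(\pi\fcomp\rho^{\mone})\fcomp\rho$ and argue directly that $[\pi\act l]y=[l]y$ --- the same two facts doing the same work.
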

\begin{proof}
By induction on $r$:
\begin{itemize}
\item
\emph{The case $a$.}\quad
We reason as follows:
\maketab{tab1}{@{\hspace{-2em}}R{8em}@{\ }L{10em}L{15em}}
\begin{tab1}
%$$
%\begin{array}{r@{\ }l@{\qquad}l}
\denot{[l]\interp H}{[l]\varsigma}{a}
=&
[l]a^{\hden}
&\text{Defs~\ref{defn.interpret.terms},~\ref{defn.F}},\ a\not\in\f{supp}(l) 
\\
=&
[l]\denot{\interp H}{\varsigma}{a}
&\text{Definition~\ref{defn.interpret.terms}}
%\end{array}
%$$
\end{tab1}
We know $a\not\in\f{supp}(l)$ because we assumed $\atoms(r)\cap\f{supp}(l)=\varnothing$, and $\atoms(a)=\{a\}$.
\item
\emph{The case $\pi\act X$.}\quad
We reason as follows:
\begin{tab1}
\denot{[l]\interp H}{[l]\varsigma}{\pi\act X}=&
\pi\act [l]\varsigma(X)
&\text{Definition~\ref{defn.interpret.terms}}
\\
=&(\pi/\pmss{X})\act[l]\varsigma(X)
&
\text{Lems~\ref{lemm.supp.restricted} \&~\ref{lemm.abstraction.support.x},\ Thm~\ref{thrm.leq.least}}
\\
=&[l](\pi/\pmss{X})\act\varsigma(X)
&\text{Fact} %(\pi/(\pmss{X}{\setminus}\supp(l)))\act l=l
\\
=&[l]\pi\act\varsigma(X)
&
\text{Lems~\ref{lemm.supp.restricted} \&~\ref{lemm.abstraction.support.x},\ Thm~\ref{thrm.leq.least}}
\\
=&[l]\denot{\interp H}{\varsigma}{\pi\act X}
&\text{Definition~\ref{defn.interpret.terms}}
\end{tab1}
The fact above follows since we assumed $\atoms(\pi\act X)\cap\supp(l)=\varnothing$.
\item
\emph{The case $[a]r$, where $a\not\in\supp(l)$.}\quad
We reason as follows:
\begin{tab1}
\denot{[l]\interp H}{[l]\varsigma}{[a]r}
=&
[a]^{\flden}\denot{[l]\interp H}{[l]\varsigma}{r}
&\text{Definition~\ref{defn.interpret.terms}}
\\
=&
[a]^{\flden}[l]\denot{\interp H}{\varsigma}{r}
&\text{ind. hyp.}
\\
=&
[l]([a]^{\hden}\denot{\interp H}{\varsigma}{r})
&\text{Definition~\ref{defn.F}} 
\\
=&
[l]\denot{\interp H}{\varsigma}{[a]r}
&\text{Definition~\ref{defn.interpret.terms}} 
\end{tab1}
\item
\emph{The case $\pi\act C$.}\quad
We reason as follows:
\begin{tab1}
\denot{[l]\interp H}{[l]\varsigma}{\pi\act C}
=& \pi\act [l]C^{\hden}
&\text{Defs~\ref{defn.interpret.terms} \&~\ref{defn.F}}
\\
=&(\pi/\pmssC(C))\act[l]C^{\hden}
&
\text{Lems~\ref{lemm.supp.restricted} \&~\ref{lemm.abstraction.support.x},\ Thm~\ref{thrm.leq.least}}
\\
=&[l](\pi/\pmssC(C))\act C^{\hden}
&\text{Fact}
\\
=&[l]\pi\act C^{\hden}
&
\text{Lems~\ref{lemm.supp.restricted} \&~\ref{lemm.abstraction.support.x},\ Thm~\ref{thrm.leq.least}}
\\
=&[l]\denot{\interp H}{\varsigma}{\pi\act C}
&\text{Definition~\ref{defn.interpret.terms}}
\end{tab1}
The fact above follows since we assumed $\atoms(\pi\act C)\cap\supp(l)=\varnothing$.
\item 
\emph{The case $(r_1,\ldots,r_n)$.}\quad
We reason as follows:
\begin{tab1}
\denot{[l]\interp H}{[l]\varsigma}{(r_1,\ldots,r_n)}
=&
(\denot{[l]\interp H}{[l]\varsigma}{r_1},\ldots,\denot{[l]\interp H}{[l]\varsigma}{r_n})^{\flden}
&\text{Definition~\ref{defn.interpret.terms}} 
\\
=&([l]\denot{\interp H}{\varsigma}{r_1},\ldots,[l]\denot{\interp H}{\varsigma}{r_n})^{\flden}
&\text{ind. hyp.}
\\
=&[l](\denot{\interp H}{\varsigma}{r_1},\ldots,\denot{\interp H}{\varsigma}{r_n})^{\hden}
&\text{Definition~\ref{defn.F}}
\\
=&[l]\denot{\interp H}{\varsigma}{(r_1,\ldots,r_n)}
&\text{Definition~\ref{defn.interpret.terms}}
\end{tab1}
\item
\emph{The case $\tf f(r)$ \dots} is routine.
\qedhere
\end{itemize}
\end{proof}

%%%%%%%%%%%%%%%%%%%%%%%%%%%%%%%%%%%%%%%%%%%%
\subsection{Second commutation result}

\begin{defn}
\label{defn.varsigma.Xx}
Given an interpretation $\interp H$, a valuation $\varsigma$ to $\interp H$, and some $X$ and $x\in\hdenot{\sort(X)}$ with $\supp(x)\subseteq\pmss{X}$, define $\varsigma[X\ssm x]$ by:
$$
\varsigma[X\ssm x](X)=x\qquad
\varsigma[X\ssm x](Y)=\varsigma(Y)
$$ 
\end{defn}

\begin{lemm}
\label{lemm.l.varsigma}
Suppose $\varsigma$, $X$, and $x$ are as in Definition~\ref{defn.varsigma.Xx}.
Suppose $l\in|\mathbb L|$.
Then 
$$([l]\varsigma)[X{\ssm} [l]x]) = [l](\varsigma[X{\ssm} x]).$$
\end{lemm}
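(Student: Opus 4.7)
The plan is to prove the equality of two valuations by unfolding Definition~\ref{defn.l.varsigma} and Definition~\ref{defn.varsigma.Xx} and checking, pointwise, that both sides assign the same element of $[l]\interp H$ to every unknown. Since a valuation is a function on $\mathcal X$, it suffices to split on whether the unknown being fed in is $X$ or some other $Y \neq X$.

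First I would handle the case of the unknown $X$ itself. Applied to $X$, the left-hand side $([l]\varsigma)[X \ssm [l]x]$ gives $[l]x$ directly by the update clause in Definition~\ref{defn.varsigma.Xx}. On the right-hand side, $\varsigma[X \ssm x]$ first returns $x$, and then Definition~\ref{defn.l.varsigma} wraps this in an $[l]$-abstraction, giving $[l]x$. So both agree on $X$.

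Next I would handle an arbitrary unknown $Y$ with $Y \neq X$. The left-hand side unfolds to $([l]\varsigma)(Y) = [l](\varsigma(Y))$ using Definitions~\ref{defn.varsigma.Xx} and~\ref{defn.l.varsigma}. The right-hand side unfolds to $[l](\varsigma[X \ssm x](Y)) = [l](\varsigma(Y))$ by the same two definitions in the opposite order. Hence both sides coincide on $Y$.

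There is no real obstacle here: the statement is essentially that $[l](-)$, viewed as an operation on valuations, commutes with the pointwise update $[X \ssm -]$. The only thing worth checking at all carefully is that $[l]x$ is a legitimate value to plug in for $X$, i.e.\ that $\supp([l]x) \subseteq \pmss{X}$ in the interpretation $[m]\interp H$; but this is immediate from $\supp(x) \subseteq \pmss{X}$ together with Lemma~\ref{lemm.abstraction.support.x} (and is precisely the content of Proposition~\ref{prop.varsigma.Z.is.a.valuation} applied to the updated valuation). So the proof is a two-line case split once the definitions are unfolded.
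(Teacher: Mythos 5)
Your proof is correct and matches the paper's, which simply says ``by routine calculations'': the pointwise case split on $X$ versus $Y\neq X$ is exactly the intended unfolding, and your extra check that $[l]x$ is a legitimate value for $X$ is a sensible (if strictly unnecessary for the stated equality) addition.
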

\begin{proof}
By routine calculations.
\end{proof}

%%%%%%%%%%%%%%%%%%%%%%%%%%%%%%%%%%%%%%%%%%%%
\subsection{Third commutation result}

\begin{defn}
\label{defn.varsigma.pi}
Suppose $\varsigma$ is a valuation.
%to an interpretation $\interp H$.
Suppose $\pi$ is a permutation such that $\nontriv(\pi)\subseteq\atomsdown$.

Define $\pi\fcomp\varsigma$ by
\begin{frameqn}
(\pi\fcomp\varsigma)(X)=\pi\act\varsigma(X) .
\end{frameqn}
\end{defn}

\begin{prop}
\label{prop.commute.pi}
Suppose $\nontriv(\pi)\subseteq\atomsdown$ and $\atoms(r)\cap\nontriv(\pi)=\varnothing$.
Then $\denot{\interp H}{\pi\fcomp\varsigma}{r}=\pi\act\denot{\interp H}{\varsigma}{r}$.
\end{prop}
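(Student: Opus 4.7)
The plan is to argue by structural induction on $r$, with the real content concentrated in the case $r = \pi'\act X$ (and the analogous constant case); the rest is bookkeeping.

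First, I would dispose of the base case $r = a$: the hypothesis $\atoms(a)\cap\nontriv(\pi) = \varnothing$ says precisely that $\pi(a) = a$, so the valuation swap has no effect and equivariance of the injection $\mathbb A_\nu\to\idenot{}{\nu}$ gives $a^\iden = (\pi(a))^\iden = \pi\act a^\iden$. The structural cases $\tf f(r)$, $(r_1,\ldots,r_n)$, and $[a]r$ are all routine: $\atoms$ propagates from a term to its subterms, so the inductive hypothesis applies to the immediate subterms, and equivariance of $\tf f^\iden$, of the tupling map, and of $[-]^\iden$ lets one pull $\pi$ out through the interpretation. In the abstraction case one additionally uses $a\in\atoms([a]r)$ to conclude $\pi(a)=a$, so that $[a]^\iden(\pi\act x) = [\pi(a)]^\iden(\pi\act x) = \pi\act[a]^\iden x$.

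The crux is the unknown case $r = \pi'\act X$. Unfolding Definitions~\ref{defn.interpret.terms} and~\ref{defn.varsigma.pi} yields
\[
\denot{\interp H}{\pi\fcomp\varsigma}{\pi'\act X} \;=\; \pi'\pi\act\varsigma(X), \qquad \pi\act\denot{\interp H}{\varsigma}{\pi'\act X} \;=\; \pi\pi'\act\varsigma(X),
\]
so it suffices to show that $\pi$ and $\pi'$ commute as permutations, under the hypotheses. By Theorem~\ref{thrm.leq.least} the restriction $\pi'/\pmss{X}$ agrees with $\pi'$ on $\pmss{X}=\atomsdown$ (recall the convention in the remark after Definition~\ref{defn.fa}), so every $a\in\atomsdown\cap\nontriv(\pi')$ also lies in $\nontriv(\pi'/\pmss{X})=\atoms(\pi'\act X)$. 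Combined with $\nontriv(\pi)\subseteq\atomsdown$ and the assumption $\atoms(\pi'\act X)\cap\nontriv(\pi)=\varnothing$, this forces $\nontriv(\pi)\cap\nontriv(\pi') = \varnothing$; two permutations with disjoint non-trivial supports commute by a one-line case check on an arbitrary atom, giving $\pi\pi'=\pi'\pi$ in $\mathbb P_{\fini}$ and hence equality of the two displayed elements. The constant case $r=\pi'\act C$ is handled in exactly the same way, with $\pmssC(C)$ and $C^\iden$ replacing $\pmss{X}$ and $\varsigma(X)$.

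The step I expect to be the real obstacle is recognising that the hypothesis forces \emph{global} commutation of $\pi$ and $\pi'$ rather than merely agreement on $\supp(\varsigma(X))$: a first attempt using only Lemma~\ref{lemm.supp.restricted} gets bogged down in subcases on whether a given atom of $\supp(\varsigma(X))$ is moved by $\pi'$ or not, whereas the observation that $\atoms(\pi'\act X)$ already captures every atom $\pi'$ moves inside $\atomsdown$ reduces the entire problem to a clean permutation-theoretic fact. Once spotted, both the unknown and constant cases collapse into the same short argument and the induction runs smoothly.
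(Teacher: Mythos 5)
Your proof is essentially the paper's: the paper runs the same induction, and its unknown case is precisely your key observation that $\nontriv(\pi)\cap\nontriv(\pi'/\pmss{X})=\varnothing$ together with $\nontriv(\pi)\subseteq\atomsdown$ forces $\nontriv(\pi)\cap\nontriv(\pi')=\varnothing$, whence $\pi$ and $\pi'$ commute. One caveat, which the paper shares since it also just writes ``as for $\pi'\act X$'': the constant case is not literally the same computation, because $\pi\fcomp\varsigma$ acts only on unknowns and the interpretation of $\pi'\act C$ ignores the valuation, so the two sides there are $\pi'\act C^{\hden}$ and $(\pi\fcomp\pi')\act C^{\hden}$ rather than a pair of oppositely-composed permutations, and what is actually required is that $\pi$ act trivially on $\pi'\act C^{\hden}$ --- a point your ``replace $\varsigma(X)$ by $C^{\hden}$'' gloss passes over.
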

\begin{proof}
By a routine induction on $r$ similar to that in Theorem~\ref{thrm.pull.out.l}:
\begin{itemize*}
\item
\emph{The case $a$.}\quad
By assumption $a\not\in\nontriv(\pi)$.
\item
\emph{The case $\pi'\act X$.}\quad
By assumption $\nontriv(\pi)\cap\nontriv(\pi'/\pmss{X})=\varnothing$.
Since $\nontriv(\pi)\subseteq\atomsdown$ it is a fact that $\nontriv(\pi)\cap\nontriv(\pi')=\varnothing$.
The result follows. 
\item
\emph{The case $[a]r$, where $a\not\in\supp(l)$.}\quad
By assumption $a\not\in\nontriv(\pi)$.
\item
\emph{The case $\pi'\act C$.}\quad
As for $\pi'\act X$.
%By assumption $\nontriv(\pi)\cap\nontriv(\pi')=\varnothing$. 
\item 
\emph{The cases $(r_1,\ldots,r_n)$ and $\tf f(r)$ \dots} are routine.
\qedhere
\end{itemize*}
\end{proof}

%%%%%%%%%%%%%%%%%%%%%%%%%%%%%%%%%%%%%
\section{Nominal algebra completeness relative to interpretations with finite support}
\label{sect.completeness.fin}

We now have everything we need to set up two notions of validity $\ment$ and $\mentfin$ (Definition~\ref{defn.M.ment.fin}) and prove our main result, that they are equal (Theorem~\ref{thrm.fin.nonfin}).

\begin{defn}
Suppose $r$ and $s$ are terms in $\Sigma$, which is the signature of an interpretation $\interp H$.
\begin{frametxt}
\begin{itemize*}
\item
Write $\interp H,\varsigma\ment r=s$ when $\denot{\interp H}{\varsigma}{r}=\denot{\interp H}{\varsigma}{s}$.
\item
Write $\interp H\ment r=s$ when $\interp H,\varsigma\ment r=s$ for every valuation $\varsigma$ to $\interp H$. 
\end{itemize*}
\end{frametxt}
\end{defn}

\begin{nttn}
\label{nttn.last.permission.set}
For the rest of this section, we will take $l_\ast$ from Definition~\ref{defn.L} to enumerate all of $\atomsdown$.
We write the $\mathbb L$ so generated by Definition~\ref{defn.L} as $\Ld$. 
\end{nttn}

Recall the construction of $[m]\interp H$ from Definition~\ref{defn.F}.
\begin{prop}
\label{prop.FT}
Suppose $r$ and $s$ are terms in $\Sigma$, which is the signature of an interpretation $\interp H$.
Suppose $m\in|\Ld|$.
Then:
\begin{enumerate*}
\item
If $\interp H\not\ment r=s$ then $[m]\interp H\not\ment r=s$.
\item
If $\interp H\ment r=s$ then $[m]\interp H\ment r=s$.
\end{enumerate*}
\end{prop}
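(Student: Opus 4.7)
The plan is to prove both parts by relating denotations in $\interp H$ and $[m]\interp H$ via valuations of the form $[l]\varsigma$, using Theorem~\ref{thrm.pull.out.l} and Lemma~\ref{lemm.equality.of.abstractions.X}. In both parts I pick $l \in |\Ld|$ so that $\supp(l) \cap (\atoms(r) \cup \atoms(s)) = \varnothing$; this is possible because $|\Ld|$ is a single $\mathbb P_{\fini}$-orbit of $l_\ast$, so every finite set of atoms can be missed by some permission set of the form $\pi \act \atomsdown$. Proposition~\ref{prop.varsigma.Z.is.a.valuation} then guarantees that $[l]\varsigma$ is a valuation to $[m]\interp H$ for any valuation $\varsigma$ to $\interp H$.

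For Part 1, given $\varsigma$ with $\denot{\interp H}{\varsigma}{r} \neq \denot{\interp H}{\varsigma}{s}$, I take $\varsigma' := [l]\varsigma$ as the candidate counterexample in $[m]\interp H$. Theorem~\ref{thrm.pull.out.l} applied to $\interp H$ yields $\denot{[l]\interp H}{[l]\varsigma}{r} = [l]\denot{\interp H}{\varsigma}{r}$, and similarly for $s$; by Lemma~\ref{lemm.equality.of.abstractions.X} these abstractions are unequal. After bridging $[l]\interp H$ and $[m]\interp H$ (see below), this inequality transfers to $[m]\interp H$ under the same valuation, witnessing $[m]\interp H \not\ment r = s$.

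For Part 2, given any $\varsigma'$ on $[m]\interp H$, I apply Lemma~\ref{lemm.factor.out} to the finite set $\{\varsigma'(X) : X \in \fv(r) \cup \fv(s)\}$ to obtain a common $l$ (chosen to also satisfy the disjointness above) and a valuation $\varsigma$ to $\interp H$ with $\varsigma'(X) = [l]\varsigma(X)$ on those variables. By Lemma~\ref{lemm.fv.varsigma}, the denotations of $r$ and $s$ under $\varsigma'$ and $[l]\varsigma$ coincide; Theorem~\ref{thrm.pull.out.l} combined with the hypothesis $\denot{\interp H}{\varsigma}{r} = \denot{\interp H}{\varsigma}{s}$ gives $\denot{[l]\interp H}{[l]\varsigma}{r} = \denot{[l]\interp H}{[l]\varsigma}{s}$, and the same bridge transfers this to $[m]\interp H$.

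The main obstacle is this bridge between $[l]\interp H$ and $[m]\interp H$: the two interpretations agree on every structural component except the interpretation of constants, where $C^\flden = [l]C^\hden$ while $C^\fmden = [m]C^\hden$. Since only finitely many constants $C_1, \ldots, C_k$ appear in $r$ and $s$, and each has $\supp(C_i^\hden) \subseteq \pmssC(C_i) \subseteq \atomsdown$, the plan is to constrain $l$ further to the form $\tau \act m$ for a permutation $\tau$ fixing $\bigcup_i \supp(C_i^\hden)$; a careful calculation using Lemma~\ref{lemm.equality.of.abstractions.X} then shows $[l]C_i^\hden = [m]C_i^\hden$ for each $i$, making the denotations in $[l]\interp H$ and $[m]\interp H$ agree on the terms at hand. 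Verifying the joint feasibility of this constraint together with $\supp(l) \cap (\atoms(r) \cup \atoms(s)) = \varnothing$ is the technical heart of the argument.
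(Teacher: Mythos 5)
Your skeleton is exactly the paper's: choose $l$ with $\supp(l)\cap(\atoms(r)\cup\atoms(s))=\varnothing$ (additionally avoiding $\bigcup\supp(\varsigma'(X))$ in part~2), pass to the valuation $[l]\varsigma$, and combine Theorem~\ref{thrm.pull.out.l} with Lemmas~\ref{lemm.equality.of.abstractions.X}, \ref{lemm.factor.out} and~\ref{lemm.fv.varsigma}. The one place you diverge is your ``bridge'': the paper constructs no such bridge, but simply invokes Theorem~\ref{thrm.pull.out.l} in the form $\denot{[m]\interp H}{[l]\varsigma}{r}=[l]\denot{\interp H}{\varsigma}{r}$, i.e.\ it applies the theorem to $[m]\interp H$ even though the theorem is stated and proved for $[l]\interp H$. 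You are right that these two interpretations differ precisely at constants ($C^{\flden}=[l]C^{\hden}$ versus $C^{\fmden}=[m]C^{\hden}$), so you have put your finger on a step the paper leaves implicit.

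However, your proposed repair is where the gap lies, and it is a genuine one. You require $l=\tau\act m$ with $\tau\in\fix\bigl(\bigcup_i\supp(C_i^{\hden})\bigr)$ \emph{and} $\supp(l)\cap(\atoms(r)\cup\atoms(s))=\varnothing$, and these constraints can be jointly unsatisfiable. If some atom $a$ lies in $\atoms(r)\cap\supp(C^{\hden})\cap\supp(m)$ --- for instance $r=(a,C)$ with $a\in\supp(C^{\hden})\subseteq\pmssC(C)\subseteq\atomsdown$ and $m=l_\ast$, so that $\supp(m)=\atomsdown\ni a$ --- then $\tau(a)=a$ forces $a\in\tau\act\supp(m)=\supp(l)$, contradicting disjointness from $\atoms(r)$. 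So the feasibility check you defer as ``the technical heart'' actually fails, and with it the identity $C^{\flden}=C^{\fmden}$ that your argument needs; note that in such cases the equation $\denot{[m]\interp H}{[l]\varsigma}{r}=[l]\denot{\interp H}{\varsigma}{r}$ genuinely does not hold, so no cleverer choice of $l$ of the kind you describe can rescue it. A workable repair must instead track the constant discrepancy uniformly through \emph{both} $r$ and $s$ (using that $[m]C^{\hden}$ and $[l]C^{\hden}$ differ by a fixed permutation-shaped perturbation and that only injectivity of the comparison is needed), or reprove Theorem~\ref{thrm.pull.out.l} with the interpretation index and the abstraction index decoupled; as written, your proof (like the paper's own, at this point) is complete only for constant-free $r$ and $s$.
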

\begin{proof}
For the first part, suppose $\interp H\not\ment r=s$.
So there exists a valuation $\varsigma$ to $\interp H$ such that $\denot{\interp H}{\varsigma}{r}\neq\denot{\interp H}{\varsigma}{s}$.
Choose some $l$ such that $\supp(l)\cap (\atoms(r)\cup\atoms(s))=\varnothing$.
We can do this, because $\atoms(r)$ and $\atoms(s)$ are finite. 
By Theorem~\ref{thrm.pull.out.l} 
$\denot{[m]\interp H}{[l]\varsigma}{r}=[l]\denot{\interp H}{\varsigma}{r}$ and
$\denot{[m]\interp H}{[l]\varsigma}{s}=[l]\denot{\interp H}{\varsigma}{s}$.
By Lemma~\ref{lemm.equality.of.abstractions.X} 
$[l]\denot{\interp H}{\varsigma}{r}\neq [l]\denot{\interp H}{\varsigma}{s}$.
It follows that $\denot{[m]\interp H}{[l]\varsigma}{r}\neq\denot{[m]\interp H}{[l]\varsigma}{s}$.

For the second part, suppose that $\interp H\ment r=s$ and suppose $\varsigma'$ is a valuation to $[m]\interp H$.
Choose some $l\in|\Ld|$ such that 
$$
\supp(l)\cap\Bigl(\atoms(r)\cup\atoms(s)\cup\bigcup\{\supp(\varsigma'(X))\mid X\in\fv(r)\cup\fv(s)\}\Bigr) = \varnothing .
$$
We can do this since all the sets on the right-hand side of $\cap$ are finite.

Using Lemmas~\ref{lemm.factor.out} and~\ref{lemm.fv.varsigma} there exists a valuation $\varsigma$ to $\interp H$ such that 
$\denot{[m]\interp H}{\varsigma'}{r}=\denot{[m]\interp H}{[l]\varsigma}{r}$ 
and 
$\denot{[m]\interp H}{\varsigma'}{s}=\denot{[m]\interp H}{[l]\varsigma}{s}$.
We now reason using Theorem~\ref{thrm.pull.out.l} and Lemma~\ref{lemm.equality.of.abstractions.X}, as in the first part.
\end{proof}

The model $[m]\interp H$ is composed of ordinary---i.e. finitely-supported---nominal sets, in the sense of \cite{gabbay:newaas-jv}:
\begin{lemm}
\label{lemm.LX.ordinary}
Every $[l]x\in\fmdenot{\alpha}$ has finite support.
\end{lemm}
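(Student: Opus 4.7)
The plan is to compute $\supp([l]x)$ directly using Lemma~\ref{lemm.abstraction.support.x}, and then show the resulting set is finite by exploiting the fact that in this section $l_\ast$ enumerates \emph{all} of $\atomsdown$ (Notation~\ref{nttn.last.permission.set}), so that $\supp(l)$ is itself a permission set.

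First I would invoke Lemma~\ref{lemm.abstraction.support.x} to reduce the claim to showing that $\supp(x)\setminus\supp(l)$ is finite. Next, since $x\in\hdenot{\alpha}$ and $\hdenot{\alpha}$ is a permissive-nominal set, there is a permutation $\pi$ with $\supp(x)\subseteq \pi\act\atomsdown$. Meanwhile $l\in|\Ld|$ means $l=\pi'\act l_\ast$ for some permutation $\pi'$, and because $l_\ast$ enumerates all of $\atomsdown$ the support of $l$ is precisely the set of atoms appearing in $l$, namely $\pi'\act\atomsdown$. So it suffices to show that $(\pi\act\atomsdown)\setminus(\pi'\act\atomsdown)$ is finite.

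The key observation is that any permutation $\rho\in\mathbb P_{\fini}$ has finite $\nontriv(\rho)$, so $\rho\act\atomsdown$ differs from $\atomsdown$ only on the finite set $\nontriv(\rho)$; concretely, $(\rho\act\atomsdown)\,\triangle\,\atomsdown\subseteq \nontriv(\rho)$. Applying this to both $\pi$ and $\pi'$ and then taking symmetric differences gives $(\pi\act\atomsdown)\,\triangle\,(\pi'\act\atomsdown)\subseteq \nontriv(\pi)\cup\nontriv(\pi')$, which is finite. In particular $(\pi\act\atomsdown)\setminus(\pi'\act\atomsdown)$ is finite, so $\supp(x)\setminus\supp(l)$ is finite, completing the proof.

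I do not expect any real obstacle here; the proof is essentially a one-line computation from Lemma~\ref{lemm.abstraction.support.x} together with the observation that all permission sets generated by finite permutations agree with $\atomsdown$ outside a finite set. The only subtlety worth flagging is that the argument genuinely uses Notation~\ref{nttn.last.permission.set}: without the assumption that $l_\ast$ enumerates all of $\atomsdown$, $\supp(l)$ need not be cofinite in a permission set containing $\supp(x)$, and the conclusion could fail. This is exactly why the analogous result in Section~\ref{sect.pnl} (Lemma~\ref{lemm.LX.medium.ordinary}) requires the separate notion of ``medium support.''
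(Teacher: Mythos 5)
Your proof is correct and follows essentially the same route as the paper's: reduce via Lemma~\ref{lemm.abstraction.support.x} to showing $\supp(x)\setminus\supp(l)$ is finite, then use that $\supp(x)$ lies in a permission set, $\supp(l)$ is a permission set (by Notation~\ref{nttn.last.permission.set}), and permission sets differ finitely from one another. You merely spell out the last fact explicitly via $\nontriv(\pi)$ and symmetric differences, which the paper leaves as ``by construction.''
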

\begin{proof}
It suffices to observe Lemma~\ref{lemm.abstraction.support.x} and note that by assumption $\supp(x)$ is contained in a permission set, and by assumption in Notation~\ref{nttn.last.permission.set} $\supp(l)$ is a permission set, and by construction permission sets differ finitely from one another.
\end{proof}

Definition~\ref{defn.pernat} is standard, e.g. from \cite{gabbay:nomuae} (nominal) or \cite{gabbay:nomtnl} (permissive-nominal):
\begin{defn}
\label{defn.pernat}
A \deffont{(permissive-)nominal algebra theory} $\theory T=(\Sigma,\f{Ax})$ is a pair of a signature $\Sigma$ and a set of equality axioms $\f{Ax}$.  
(So elements of $\f{Ax}$ are pairs $r=s$.)

Suppose $\interp H$ is a $\Sigma$-interpretation (Definition~\ref{defn.Sigma.interpretation}).
Write $\interp H\ment\theory T$ to mean that for every valuation $\varsigma$ to $\interp H$ and every $(r=s)\in\f{Ax}$,\ $\denot{\interp H}{\varsigma}{r}=\denot{\interp H}{\varsigma}{s}$.
\end{defn}

\begin{frametxt}
\begin{defn}
\label{defn.interp.finsupp}
Suppose $\Sigma$ is a signature and $\interp F$ is a $\Sigma$-interpretation.
Say that $\interp F$ has \deffont{finite support} when for every sort $\alpha$ in $\Sigma$ and every $x\in|\fdenot{\alpha}|$, it is the case that $\supp(x)$ is finite.
\end{defn} 
\end{frametxt}

\begin{defn}
\label{defn.M.ment.fin}
Suppose $\theory T=(\Sigma,\f{Ax})$ is a theory.
Then:
\begin{itemize*}
\item
Define ${\theory T\mentfin r=s}$ to mean that $\interp F\ment \theory T$ implies $\interp F\ment r=s$, for every $\Sigma$-interpretation $\interp F$ with finite support. 
\item
Define ${\theory T\ment r=s}$ to mean that $\interp H\ment \theory T$ implies $\interp H\ment r=s$, for every $\Sigma$-interpretation $\interp H$. 
\end{itemize*}
\end{defn}

\begin{thrm}
\label{thrm.fin.nonfin}
Suppose that $\theory T$ is a $\Sigma$-theory.

Then $\theory T\mentfin r=s$ if and only if $\theory T\ment r=s$.
\end{thrm}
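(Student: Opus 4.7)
The plan is to prove the two directions separately. The forward direction, $\theory T \ment r=s$ implies $\theory T \mentfin r=s$, is immediate: any finitely-supported $\Sigma$-interpretation $\interp F$ is in particular a $\Sigma$-interpretation, so validity over the larger class implies validity over the smaller class. No work is required here beyond unfolding definitions.

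For the reverse direction, I would argue by contrapositive: assume $\theory T \not\ment r = s$ and produce a finitely-supported countermodel. By hypothesis, there exists a $\Sigma$-interpretation $\interp H$ with $\interp H \ment \theory T$ but $\interp H \not\ment r = s$. Choose any $m \in |\Ld|$ (for instance $m = l_\ast$) and form $[m]\interp H$ as in Definition~\ref{defn.F}. By Proposition~\ref{prop.F.finite.interpretation} this is a $\Sigma$-interpretation, and by Lemma~\ref{lemm.LX.ordinary} it has finite support in the sense of Definition~\ref{defn.interp.finsupp}.

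It remains to check that $[m]\interp H$ separates $r$ and $s$ while still modelling $\theory T$. The first part is Proposition~\ref{prop.FT}(1): from $\interp H \not\ment r=s$ we get $[m]\interp H \not\ment r=s$. For the second part, for each axiom $(r' = s') \in \f{Ax}$, the assumption $\interp H \ment \theory T$ gives $\interp H \ment r'=s'$, and Proposition~\ref{prop.FT}(2) transfers this to $[m]\interp H \ment r'=s'$; since this holds for every axiom, $[m]\interp H \ment \theory T$. Thus $[m]\interp H$ witnesses $\theory T \not\mentfin r=s$, completing the contrapositive.

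There is no real obstacle here, since all the technical work has already been done: Proposition~\ref{prop.FT} packages the two commutation results (Theorem~\ref{thrm.pull.out.l} and Lemma~\ref{lemm.l.varsigma}) together with the injectivity of $[l](\mn)$ from Lemma~\ref{lemm.equality.of.abstractions.X}, and Lemma~\ref{lemm.LX.ordinary} guarantees finite support. The only thing to be careful about is that Proposition~\ref{prop.FT}(2) must be applied once per axiom rather than to $\theory T$ as a whole, which is trivial since $\interp H \ment \theory T$ is defined pointwise over $\f{Ax}$.
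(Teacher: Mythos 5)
Your proposal is correct and follows essentially the same route as the paper: the easy direction by inclusion of model classes, and the hard direction by contrapositive via $[m]\interp H$, Proposition~\ref{prop.FT}, and Lemma~\ref{lemm.LX.ordinary}. Your remark that Proposition~\ref{prop.FT}(2) is applied axiom-by-axiom is a detail the paper leaves implicit, but it is the intended reading.
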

\begin{proof}
The right-to-left implication is immediate since an interpretation with finite support is an interpretation.

For the left-to-right implication we prove the contrapositive.
Suppose $\theory T\not\ment r=s$.
So there is an interpretation $\interp H$ such that $\interp H\ment\theory T$ and a valuation $\varsigma$ to $\interp H$ such that $\denot{\interp H}{\varsigma}{r}\neq\denot{\interp H}{\varsigma}{s}$.

Choose any $m\in|\Ld|$.
By part~2 of Proposition~\ref{prop.FT} $[m]\interp H\ment \theory T$.
By part~1 of Proposition~\ref{prop.FT} $[m]\interp H\not\ment r=s$, and by Lemma~\ref{lemm.LX.ordinary} we are done.
\end{proof}

Permissive-nominal algebra is sound and complete with respect to permissive-nominal models (the proof is by a Herbrand construction; see \cite[Subsection~7.5]{gabbay:nomtnl}).
So the relevance of Theorem~\ref{thrm.fin.nonfin} is to give completeness also with respect to interpretations with finite support.

%%%%%%%%%%%%%%%%%%%%%%%%%%%%%%%%%%%%%%%%%%%%%%%%%%%%%%%%%%
\section{Permissive-nominal logic}
\label{sect.pnl}
%%%%%%%%%%%%%%%%%%%%%%%%%%%%%%%%%%%%%%%%%%%%%%%%%%%%%%%%%%%%%%%%%%%%%%%

Permissive-nominal logic (\deffont{PNL}) extends signatures with \deffont{proposition-formers} $\tf P$ with arity $\alpha$.  
It is `first-order logic over (permissive-)nominal terms'.

Full details can be found in \cite{gabbay:pernl,gabbay:pernl-jv} or \cite[Section~9]{gabbay:nomtnl}. Here, we only give the necessary outline.

\subsection{Sketch of permissive-nominal logic}

\begin{defn}
PNL \deffont{propositions} are defined by
\begin{frameqn}
\phi,\psi ::= \bot \mid \phi\limp\phi \mid \Forall{X}\phi \mid \tf P(r) 
\end{frameqn}
where we insist that $r:\alpha$ (where $\alpha$ is the arity of $\tf P$).
\end{defn}

\begin{defn}
\label{defn.u.equivariant}
if $\ns X$ is a nominal set and $U\subseteq|\ns X|$ call $U$ \deffont{equivariant} when $x\in U\liff \pi\act x\in U$ for all $x\in|\ns X|$ and all permutations $\pi$.\footnote{This notion of equivariance coincides with that of Definition~\ref{defn.f.equivar}, if we consider $U$ as a function to truth-values $\{\top,\bot\}$, such that $\pi\act \top=\top$ and $\pi\act\bot=\bot$ for all $\pi$.} 
\end{defn}

Definition~\ref{defn.phi.interp} corresponds to e.g. \cite[Definition~5.11]{gabbay:pernl}.
\begin{defn}
\label{defn.phi.interp}
An interpretation $\interp H$ maps a term to an element of a permissive-nominal set as in Definition~\ref{defn.interpret.terms}, and maps each $\tf P$ to an equivariant subset $\tf P^\hden\subseteq\hdenot{\alpha}$.

This extends to propositions $\phi$ just as in first-order logic where $\denot{\interp H}{\varsigma}{\phi}$ is a truth-value $\top$ or $\bot$, as follows:
\begin{itemize*}
\item
$\denot{\interp H}{\varsigma}{\bot}$ (the syntax) is equal to $\bot$ (the truth-value).
\item
The PNL of \cite{gabbay:pernl,gabbay:pernl-jv,gabbay:nomtnl} is classical, so 
$\denot{\interp H}{\varsigma}{\phi\limp\psi}$ is interpreted as `not $\denot{\interp H}{\varsigma}{\phi}$ or $\denot{\interp H}{\varsigma}{\psi}$'.
\item
$\denot{\interp H}{\varsigma}{\tf P(r)}$ is equal to `$\denot{\interp H}{\varsigma}{r}$ is an element of $\tf P^\hden$'.
\item 
The only non-obvious case is the universal quantifier, which gets a denotation as follows:
$$
\denot{\interp H}{\varsigma}{\Forall{X}\phi}=\bigwedge\{\denot{\interp H}{\varsigma[X\ssm x]}{\phi} \mid x\in\hdenot{\sort(X)},\ \supp(x)\subseteq\pmss{X}\}
$$
This is non-obvious because the $\forall X$ in $\Forall{X}\phi$ quantifies only over $x$ with support in $\pmss{X}$.
More discussion on this in the Conclusions.
\end{itemize*}
\end{defn}

%%%%%%%%%%%%%%%%%%%%%%%%%%%%%%%%%%%%%%%%%%%%%%%%%%
\subsection{Three notions of validity in denotations}

Three distinct notions of validity will interest us.
They are parameterised by `how many atoms' they allow in support.
This is Definition~\ref{defn.medium.valid}; to express it, we need Definition~\ref{defn.medium}. 

\begin{defn}
\label{defn.medium}
For each $i\in\mathbb N$ fix some set $\atomsdd_i\subseteq\atomsdown_i$ such that $\atomsdd_i$ and $\atomsdown_i\setminus\atomsdd_i$ are both infinite.
Write $\atomsdd=\bigcup_i\atomsdd_i$ and:
\begin{itemize*}
\item
Say that $x\in\hdenot{\alpha}$ has \deffont{medium} support when $\supp(x)\subseteq\pi\act\atomsdd$ for some $\pi$.
\item
Say that $\interp H$ has \deffont{medium} support when for every sort $\alpha$ and every $x\in\hdenot{\alpha}$,\ $x$ has medium support.
\end{itemize*}
\end{defn}

\begin{rmrk}
The point of Definition~\ref{defn.medium} is that $x$ with medium support may have infinite support, but this support cannot exhaust the atoms in $\atomsdown$.
But did we not see this already in Definition~\ref{defn.atoms} when we split $\mathbb A$ into $\atomsdown$ and $\atomsup$?
Yes, but PNL has a $\forall X$, so that now (and unlike was the case in permissive-nominal algebra) we have to worry about exhausting all the atoms in $\atomsdown$ within nested quantifiers.
To see this idea made concrete, consider the proof of Proposition~\ref{prop.not.necessarily}. 
\end{rmrk}

\begin{defn}
\label{defn.medium.valid}
\begin{itemize*}
\item
Write $\ment\phi$ to mean that $\interp H,\varsigma\ment\phi$ for every interpretation $\interp H$ and valuation $\varsigma$ to $\interp H$.
\item
Write $\mentint\phi$ to mean that $\interp H,\varsigma\mentint\phi$ for every interpretation $\interp H$ with medium support and valuation $\varsigma$ to $\interp H$.
\item
Write $\mentfin\phi$ to mean that $\interp F,\varsigma\ment\phi$ for every interpretation $\interp F$ with finite support and valuation $\varsigma$ to $\interp F$.
\end{itemize*}
\end{defn}

\begin{prop}
\label{prop.not.necessarily}
$\ment\phi$ implies $\mentint\phi$.
The reverse implication does not necessarily hold.
\end{prop}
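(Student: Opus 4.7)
The forward direction is immediate: every medium-support interpretation is an interpretation, so validity over the larger class a fortiori gives validity over the medium-support subclass. Unwinding Definition~\ref{defn.medium.valid} gives this in one line.

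For the strictness of the reverse implication the plan is to exhibit a concrete witness: a signature $\Sigma$, a proposition $\phi$, and a $\Sigma$-interpretation $\interp H$ with $\mentint\phi$ but $\interp H\not\ment\phi$. The key structural asymmetry to exploit is that $\atomsdown$ is itself a permission set but is \emph{not} medium, since $\atomsdown\setminus\atomsdd$ is infinite and $\pi\act\atomsdd$ differs from $\atomsdd$ only finitely, so $\atomsdown\not\subseteq\pi\act\atomsdd$ for any finite $\pi$. A general interpretation may therefore contain an element $x_0\in\hdenot{\tau}$ with $\supp(x_0)=\atomsdown$, whereas no medium-support interpretation can. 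Because $\Forall{X^\tau}$ with $\pmss{X}=\atomsdown$ ranges over all elements of support $\subseteq\atomsdown$ (Definition~\ref{defn.phi.interp}), in the general interpretation $x_0$ enters the range of $X$, while in every medium-support model the range contains only medium elements, for which infinitely many atoms of $\atomsdown$ remain available as ``fresh'' names for nested quantification.

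Concretely I would take $\Sigma$ with a base sort $\tau$, a name sort $\nu$, and an equivariant proposition-former relating atoms to $\tau$-elements, and pick $\phi$ of a nested-universal shape $\Forall{X^\tau}\Forall{Y^\nu}\psi$ designed so that the inner $\Forall{Y^\nu}$ is forced to succeed precisely when some atom of $\atomsdown$ lies outside $\supp(X)$. In any medium-support model the fresh-atom property above delivers such a witness for every $X$ (uniformly in the interpretation of the predicate), giving $\mentint\phi$; in $\interp H$ the valuation sending $X$ to $x_0$ exhausts $\atomsdown$ and defeats the inner quantifier, giving $\interp H\not\ment\phi$.

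The main obstacle will be choosing $\phi$ so that its medium-support validity is genuinely uniform, i.e.\ so that the fresh-name step survives every equivariant interpretation of the predicate symbols rather than only the intended one. Once such a $\phi$ is in hand, both $\mentint\phi$ and $\interp H\not\ment\phi$ reduce to a direct unwinding of Definition~\ref{defn.phi.interp}, using Corollary~\ref{corr.notinsupp} and Lemma~\ref{lemm.abstraction.support.x} to manage the fresh-atom bookkeeping.
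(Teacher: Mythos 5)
Your proposal follows essentially the same route as the paper: the counterexample there is exactly of the shape you describe, namely $\phi=\Forall{X}\Exists{Y}(Y\#X)$ with $X:\tau$, $Y:\nu$ and $\#:(\nu,\tau)$ read as ``is fresh for'', refuted in a general interpretation by a valuation sending $X$ to an element whose support is all of $\atomsdown$, and satisfied in every medium-support interpretation because $\atomsdown\setminus\pi\act\atomsdd$ is infinite for every finite $\pi$, so a fresh atom for $Y$ always remains. The ``main obstacle'' you flag---making the medium-support validity uniform over \emph{all} equivariant interpretations of the predicate rather than only the intended one---is a genuine issue, but be aware that the paper's own proof does not discharge it either: it simply stipulates the intended meaning of $\#$. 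If you want to close that gap, one clean way is to guard the formula, taking $\phi=\bigl(\Forall{X}(b\,\#\,X)\bigr)\limp\Forall{X}\Exists{Y}(Y\,\#\,X)$ for a fixed atom $b\in\atomsup\cap\mathbb A_\nu$, so that $b\notin\pmss{X}=\atomsdown$. In any interpretation validating the premise, equivariance of $\#^{\hden}$ together with $(b\ c)\act x=x$ for $b,c\notin\supp(x)$ yields $c^{\hden}\mathrel{\#^{\hden}}x$ for \emph{every} atom $c\notin\supp(x)$, which is exactly the uniform fresh-atom step you need for $\mentint\phi$; meanwhile the countermodel with $\supp(\varsigma(X))=\atomsdown$ and $\#^{\hden}$ the genuine freshness relation still satisfies the premise and refutes the conclusion, giving $\not\ment\phi$. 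With that modification your argument is complete and, modulo the guard, coincides with the paper's.
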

\begin{proof}
The first part is immediate since an interpretation with medium support is also an interpretation.

For the second part it suffices to provide a counterexample.
Suppose a base sort $\tau$ and name sort $\nu$ and variables ${X:\tau}$ and ${Y:\nu}$.
Suppose a predicate $\#:(\nu,\tau)$ with intended meaning `is fresh for'/`is not in the support of'.
Consider the formula $\phi=\Forall{X}\Exists{Y}Y\#X$.
Then $\mentfin \phi$ and $\mentint\phi$, but not $\ment\phi$; it might be that $\varsigma(X)=l$ where $l$ lists all atoms in $\atomsdown$, so there exists no atom in $\atomsdown$ (by Definition~\ref{defn.phi.interp}, $Y$ ranges over atoms in $\atomsdown$) that is not in $\supp(l)$.
\end{proof}

The rest of this section is devoted to proving that $\mentint\phi$ if and only if $\mentfin\phi$ (Theorem~\ref{thrm.pnl.fin.nonfin}).
We discuss the relevance of these results in Subsection~\ref{subsect.pnl.relevance}.

%%%%%%%%%%%%%%%%%%%%%%%%%%%%%%%%%%%%%%%%%%%
\subsection{Finite support denotations from medium support denotations} 
\label{subsect.fin.med}

\begin{nttn}
\label{nttn.last.medium.set}
For the rest of this section, we will take $l_\ast$ from Definition~\ref{defn.L} to enumerate $\atomsdd$.
We write the $\mathbb L$ generated by Definition~\ref{defn.L} as $\Ldd$.
\end{nttn}

\begin{defn}
\label{defn.fmi}
Given a PNL interpretation $\interp H$ with medium support and a list $m\in|\Ldd|$, generate a PNL interpretation $[m]\interp H$ by extending Definition~\ref{defn.F} such that
\begin{frameqn}
\tf P^{\den{[m]\interp H}} = \{[l]x \mid x\in\tf P^\hden,\ l\in|\Ldd| \} .
\end{frameqn}
\end{defn}
Where does the $m$ in $[m]\interp H$ appear on the right-hand side here?  
It does not: $m$ is only used to reduce the support of the interpretations of constant symbols; see Definition~\ref{defn.F}.
PNL only allows equivariant (Definition~\ref{defn.u.equivariant}) interpretation of proposition-formers.
If we considered a flavour of PNL in which proposition-formers could receive non-equivariant interpretation (so that in the syntax we would allow terms of the form $(\pi\act\tf P)(r)$), then Definition~\ref{defn.fmi} would mention $m$ on the right.
This makes no difference to expressivity since we can emulate the effect of a non-equivariant proposition-former using $\tf P(C,r)$.
Our design follows the path of the simplest definitions and proofs.

\begin{lemm}
\label{lemm.LX.medium.ordinary}
Every $[l]x\in\fmdenot{\alpha}$ %|[\mathbb L]\ns X|$ 
has finite support.
\end{lemm}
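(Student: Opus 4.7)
The plan is to mimic the proof of Lemma~\ref{lemm.LX.ordinary} almost verbatim, with $\atomsdd$ playing the role that $\atomsdown$ played there.

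First I would invoke Lemma~\ref{lemm.abstraction.support.x} to rewrite
\[
\supp([l]x) \;=\; \supp(x)\setminus\supp(l).
\]
So it suffices to prove this difference is finite. Next I would unpack the hypotheses: by assumption $\interp H$ has medium support, so by Definition~\ref{defn.medium} there exists a permutation $\pi\in\mathbb P_{\fini}$ with $\supp(x)\subseteq\pi\act\atomsdd$; and by Notation~\ref{nttn.last.medium.set} together with Definition~\ref{defn.L}, $l=\pi'\act l_\ast$ for some $\pi'\in\mathbb P_{\fini}$ where $l_\ast$ enumerates $\atomsdd$, so $\supp(l)=\pi'\act\atomsdd$. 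Hence
\[
\supp([l]x) \;\subseteq\; (\pi\act\atomsdd)\setminus(\pi'\act\atomsdd).
\]

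The final step is to observe that the right-hand side is finite. Since $\pi,\pi'\in\mathbb P_{\fini}$ have finite $\nontriv$, the sets $\pi\act\atomsdd$ and $\pi'\act\atomsdd$ each differ from $\atomsdd$ by only finitely many atoms, so their symmetric difference is finite. In particular $(\pi\act\atomsdd)\setminus(\pi'\act\atomsdd)$ is finite, as required.

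I do not anticipate any real obstacle: the only thing to be careful about is that the bound on $\supp(x)$ given by medium support is \emph{up to a permutation}, and the bound on $\supp(l)$ coming from $l\in|\Ldd|$ is \emph{also} up to a permutation, but since both permutations are finite the set-difference argument goes through cleanly. This is exactly the analogue for $\atomsdd$ of the remark, used in Lemma~\ref{lemm.LX.ordinary}, that permission sets differ from one another only finitely.
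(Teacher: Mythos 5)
Your proof is correct and is essentially the paper's own argument spelled out in full: the paper's proof just says ``as for Lemma~\ref{lemm.LX.ordinary}, but now using Notation~\ref{nttn.last.medium.set} and medium support,'' which unpacks to exactly your chain $\supp([l]x)=\supp(x)\setminus\supp(l)\subseteq(\pi\act\atomsdd)\setminus(\pi'\act\atomsdd)$, finite because finite permutations perturb $\atomsdd$ only finitely.
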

\begin{proof}
As for Lemma~\ref{lemm.LX.ordinary}, but now using Notation~\ref{nttn.last.medium.set} and our assumption that $x\in\hdenot{\alpha}$ has medium support. 
\end{proof}

\begin{defn}
\label{defn.pnl.atoms}
Extend $\atoms(r)$ (Definition~\ref{defn.atomsof}) to propositions $\atoms(\phi)$ inductively by:
\begin{frameqn}
%\begin{tab2}
\begin{array}{r@{\ }l@{\qquad}r@{\ }l}
\atoms(\bot)=&\varnothing
&
\atoms(\phi\limp\psi)=&\atoms(\phi){\cup}\atoms(\psi)
\\
\atoms(\tf P(r))=&\atoms(r)
&
\atoms(\Forall{X}\phi)=&\atoms(\phi) 
\end{array}
\end{frameqn}
\end{defn}

Lemma~\ref{lemm.pnl.pi} extends Proposition~\ref{prop.commute.pi} to predicates, and is needed for Proposition~\ref{prop.pnl.mentfin.to.ment}.
Recall from Definition~\ref{defn.varsigma.pi} the definition of $\pi\fcomp\varsigma$: 
\begin{lemm}
\label{lemm.pnl.pi}
Suppose $\varsigma$ is a valuation to an interpretation $\interp H$. 
Suppose $\phi$ is a predicate and $\pi$ a permutation such that $\nontriv(\pi)\subseteq\atomsdown$ and $\nontriv(\pi)\cap\atoms(\phi)=\varnothing$.

Then $\denot{\interp H}{\pi\fcomp\varsigma}{\phi}=\denot{\interp H}{\varsigma}{\phi}$.
\end{lemm}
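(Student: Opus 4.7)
The plan is induction on $\phi$ along the grammar $\bot\mid\phi\limp\psi\mid\tf P(r)\mid\Forall{X}\phi$. The cases $\bot$ and $\phi\limp\psi$ are straightforward: the former because the denotation of $\bot$ is fixed independently of the valuation; the latter by applying the inductive hypothesis to each sub-proposition, since $\atoms(\phi\limp\psi)=\atoms(\phi)\cup\atoms(\psi)$ ensures the atom side-condition descends to both conjuncts.

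For $\tf P(r)$ I reduce to Proposition~\ref{prop.commute.pi}, which applies because $\atoms(\tf P(r))=\atoms(r)$. It delivers $\denot{\interp H}{\pi\fcomp\varsigma}{r}=\pi\act\denot{\interp H}{\varsigma}{r}$. Unwinding the clause for $\tf P(r)$ in Definition~\ref{defn.phi.interp}, it then suffices that $\pi\act\denot{\interp H}{\varsigma}{r}\in\tf P^\hden$ if and only if $\denot{\interp H}{\varsigma}{r}\in\tf P^\hden$, which is exactly the equivariance of $\tf P^\hden$ built into the semantics (Definitions~\ref{defn.u.equivariant} and~\ref{defn.phi.interp}).

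The main obstacle is $\Forall{X}\phi$. Using the universal clause in Definition~\ref{defn.phi.interp}, I must compare two infima indexed by $\{x\in\hdenot{\sort(X)}\mid\supp(x)\subseteq\pmss{X}\}$. My plan is to reindex via $x=\pi\act y$; a direct calculation gives the substitution identity $(\pi\fcomp\varsigma)[X\ssm\pi\act y]=\pi\fcomp(\varsigma[X\ssm y])$. For this reindexing to be a bijection on the index set I need $\pi$-stability of $\{x\mid\supp(x)\subseteq\pmss{X}\}$. Now $\pmss{X}=\atomsdown$ by the Remark after Definition~\ref{defn.fa}, and $\nontriv(\pi)\subseteq\atomsdown$ together with the fact that $\pi$ permutes $\nontriv(\pi)$ and fixes everything outside it yields $\pi\act\atomsdown=\atomsdown$; Lemma~\ref{lemm.supp.pi.x} then gives $\supp(\pi\act y)\subseteq\pmss{X}$ iff $\supp(y)\subseteq\pmss{X}$.

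After reindexing, each summand becomes $\denot{\interp H}{\pi\fcomp(\varsigma[X\ssm y])}{\phi}$, and the inductive hypothesis applies to $\phi$ with the valuation $\varsigma[X\ssm y]$, since the side-condition on $\pi$ is independent of the valuation and $\atoms(\Forall{X}\phi)=\atoms(\phi)$. This rewrites each summand as $\denot{\interp H}{\varsigma[X\ssm y]}{\phi}$, and the resulting infimum is precisely $\denot{\interp H}{\varsigma}{\Forall{X}\phi}$, completing the induction.
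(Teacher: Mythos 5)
Your proposal is correct and follows essentially the same route as the paper: induction on $\phi$, with the $\tf P(r)$ case discharged by Proposition~\ref{prop.commute.pi} plus equivariance of $\tf P^\hden$, and the $\Forall{X}\phi$ case handled by reindexing the infimum via $x\mapsto\pi\act x$ together with the identity $(\pi\fcomp\varsigma)[X\ssm\pi\act x]=\pi\fcomp(\varsigma[X\ssm x])$. Your explicit justification that the reindexing is a bijection on the index set (via $\pi\act\atomsdown=\atomsdown$ from $\nontriv(\pi)\subseteq\atomsdown$) is a detail the paper leaves implicit, but it is the same argument.
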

\begin{proof}
By a routine induction on $\phi$.
We consider two cases:
\begin{itemize*}
\item
\emph{The case of $\tf P(r)$.}
By definition $\denot{\interp H}{\pi\fcomp\varsigma}{\tf P(r)}=\top$ if and only if $\denot{\interp H}{\pi\fcomp\varsigma}{r}\in\tf P^\hden$.
By Proposition~\ref{prop.commute.pi} $\denot{\interp H}{\pi\fcomp\varsigma}{r}=\pi\act\denot{\interp H}{\varsigma}{r}$.
By assumption $\tf P^\hden$ is equivariant (Definition~\ref{defn.u.equivariant}).
\item
\emph{The case of $\Forall{X}\phi$.}\quad
$$
\begin{array}{r@{\ }l@{\qquad}l}
\denot{\interp H}{\pi\fcomp\varsigma}{\Forall{X}\phi}
=&\bigwedge\{\denot{\interp H}{(\pi\fcomp\varsigma)[X{\ssm}x]}{\phi}\mid x\in\denot{\interp H}{}{\sort(X)},\ \supp(x)\subseteq\pmss{X}\}
\\
=&\bigwedge\{\denot{\interp H}{(\pi\fcomp\varsigma)[X{\ssm}\pi\act x]}{\phi}\mid x\in\denot{\interp H}{}{\sort(X)},\ \supp(x)\subseteq\pmss{X}\}
\\
=&\bigwedge\{\denot{\interp H}{\pi\fcomp(\varsigma[X{\ssm}x])}{\phi}\mid x\in\denot{\interp H}{}{\sort(X)},\ \supp(x)\subseteq\pmss{X}\}
\\
=&\bigwedge\{\denot{\interp H}{\varsigma[X{\ssm}x]}{\phi}\mid x\in\denot{\interp H}{}{\sort(X)},\ \supp(x)\subseteq\pmss{X}\}
\\
=&\denot{\interp H}{\varsigma}{\Forall{X}\phi}
\end{array}
$$
\end{itemize*}
\end{proof}

\begin{lemm}
\label{lemm.fa.atoms.bound}
$\fa(\phi)\subseteq\atomsdown\cup\atoms(\phi)$ and
$\fa(r)\subseteq\atomsdown\cup\atoms(r)$. 
\end{lemm}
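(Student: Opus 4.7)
The plan is to prove both statements simultaneously by routine structural induction, with all the real content concentrated in the two cases where moderated atoms can appear through a permutation, namely $\pi\act C$ and $\pi\act X$. For every other syntactic constructor the sets $\fa$ and $\atoms$ behave compatibly enough that the inductive hypothesis transfers with no calculation: atoms give $\fa(a)=\{a\}=\atoms(a)$; $\tf f(r)$ and tuples preserve both sets verbatim; and for abstraction $\fa([a]r)=\fa(r)\setminus\{a\}$ while $\atoms([a]r)=\atoms(r)\cup\{a\}$, so removing $a$ from $\fa$ and adding it to $\atoms$ only strengthens the containment. The propositional cases $\bot$, $\phi\limp\psi$ and $\Forall{X}\phi$ are immediate from the inductive hypothesis on their subterms, and $\tf P(r)$ reduces directly to the term statement.

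The heart of the proof is the case $\pi\act C$. Here $\fa(\pi\act C)=\pi\act\pmssC(C)$ with $\pmssC(C)\subseteq\atomsdown$, while $\atoms(\pi\act C)=\nontriv(\pi/\pmssC(C))$. Fix any $a\in\pmssC(C)$ and split on whether $\pi$ moves $a$. If $\pi(a)=a$ then $\pi(a)=a\in\atomsdown$. If $\pi(a)\neq a$, then by Theorem~\ref{thrm.leq.least} we have $(\pi/\pmssC(C))(a)=\pi(a)\neq a$, so $a\in\nontriv(\pi/\pmssC(C))$; since $\pi/\pmssC(C)$ is a bijection, applying it once more and using injectivity shows $\pi(a)=(\pi/\pmssC(C))(a)$ itself lies in $\nontriv(\pi/\pmssC(C))=\atoms(\pi\act C)$. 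The case $\pi\act X$ is identical with $\pmss{X}=\atomsdown$ in place of $\pmssC(C)$; in fact this case also uses $\atomsdown\subseteq\atomsdown$ trivially to absorb fixed points.

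The main obstacle, such as it is, is just the bookkeeping observation that for any permutation $\sigma$, $a\in\nontriv(\sigma)$ implies $\sigma(a)\in\nontriv(\sigma)$; without this one would see only that $a$ lies in $\nontriv(\pi/\pmssC(C))$, not the image $\pi(a)$ that actually appears in $\fa(\pi\act C)$. Once this observation is in hand, the rest of the lemma is entirely bureaucratic: both clauses are essentially the same induction, and no new machinery beyond Theorem~\ref{thrm.leq.least} is required.
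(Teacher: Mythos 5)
Your proposal is correct and follows essentially the same route as the paper, which also proceeds by routine induction and invokes part~2 of Theorem~\ref{thrm.leq.least} precisely for the moderated cases $\pi\act X$ (and $\pi\act C$). Your extra observation that $\nontriv(\sigma)$ is closed under $\sigma$ is exactly the ``routine calculation'' the paper leaves implicit, and it is needed and correct.
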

\begin{proof}
By a routine induction on Definitions~\ref{defn.atomsof} and~\ref{defn.pnl.atoms} and by a routine calculation using part~2 of Theorem~\ref{thrm.leq.least} for the base case of $\pi\act X$.
\end{proof}

\begin{prop}
\label{prop.pnl.mentfin.to.ment}
Suppose $\phi$ is a proposition,\ $\interp H$ is a PNL interpretation with medium support (Definition~\ref{defn.medium}),\ and $\varsigma$ is a valuation to $\interp H$.
Suppose $A\subseteq\mathbb A$ is a finite set of atoms such that $\atoms(\phi)\subseteq A$, and suppose $l\in|\Ldd|$ and $\supp(l)\cap A=\varnothing$.

Then
$\interp H,\varsigma\ment \phi$ if and only if 
$[l]\interp H,[l]\varsigma\ment \phi$.
\end{prop}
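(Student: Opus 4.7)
I would proceed by induction on $\phi$. The base case $\bot$ and the implication case follow immediately from Definition~\ref{defn.phi.interp} together with the inductive hypothesis, using that by Definition~\ref{defn.pnl.atoms} the subformulas of an implication satisfy $\atoms(\phi_i)\subseteq\atoms(\phi\limp\psi)\subseteq A$, so the same $A$ and $l$ work for each. For the atomic case $\tf P(r)$, note $\atoms(r)=\atoms(\tf P(r))\subseteq A$ is disjoint from $\supp(l)$, so Theorem~\ref{thrm.pull.out.l} rewrites $\denot{[l]\interp H}{[l]\varsigma}{r}$ as $[l]\denot{\interp H}{\varsigma}{r}$; combining Definition~\ref{defn.fmi}, equivariance of $\tf P^\hden$ (Definition~\ref{defn.u.equivariant}), and Lemma~\ref{lemm.equality.of.abstractions.X}, one checks that $[l]y\in\tf P^{[l]\interp H}$ iff $y\in\tf P^\hden$, which finishes this case.

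The main obstacle is the quantifier case $\Forall X\phi$. The strategy is to prove that the two meets---one ranging over $x\in\hdenot{\sort(X)}$ with $\supp(x)\subseteq\pmss{X}$, the other over $\hat x\in\denot{[l]\interp H}{}{\sort(X)}$ with $\supp(\hat x)\subseteq\pmss{X}$---produce the same set of truth-values. One direction is straightforward: map $x\mapsto[l]x$, use Lemma~\ref{lemm.abstraction.support.x} to check $\supp([l]x)\subseteq\pmss{X}$, and combine Lemma~\ref{lemm.l.varsigma} with the inductive hypothesis on $\phi$ (applicable with the same $A$, since $\atoms(\phi)=\atoms(\Forall X\phi)$) at the valuation $\varsigma[X\ssm x]$ to match truth-values.

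The hard direction---producing, for each $\hat x$, a matching $x$---is the real content of the proof, and uses a renaming argument enabled by medium support. By Lemma~\ref{lemm.LX.medium.ordinary} $\supp(\hat x)$ is finite, and a direct calculation using medium support shows that $\supp([l]\varsigma(Y))=\supp(\varsigma(Y))\setminus\supp(l)$ is finite for each $Y\in\fv(\phi)$ (both $\supp(\varsigma(Y))$ and $\supp(l)$ differ from $\atomsdd$ by only finitely many atoms, so their set-difference is finite). Since $\atomsdown\setminus\atomsdd$ is infinite by Definition~\ref{defn.medium}, I can construct a permutation $\pi$ that swaps each atom of $B:=\supp(\hat x)\cap\supp(l)$ with a fresh atom of $\atomsdown\setminus\atomsdd$ chosen outside $\supp(l)\cup\atoms(\phi)\cup\supp(\hat x)\cup\bigcup_{Y\in\fv(\phi)}\supp([l]\varsigma(Y))$. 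Then $\nontriv(\pi)\subseteq\atomsdown$, $\nontriv(\pi)\cap\atoms(\phi)=\varnothing$ (using $B\subseteq\supp(l)$ with $\supp(l)\cap A=\varnothing$), and $\pi$ fixes every $[l]\varsigma(Y)$. Lemma~\ref{lemm.pnl.pi} therefore gives $\denot{[l]\interp H}{([l]\varsigma)[X\ssm\hat x]}{\phi}=\denot{[l]\interp H}{([l]\varsigma)[X\ssm\pi\act\hat x]}{\phi}$; and $\supp(\pi\act\hat x)\cap\supp(l)=\varnothing$, so Lemma~\ref{lemm.at.l} yields $\pi\act\hat x=[l]x$ for some $x$ with $\supp(x)\subseteq\supp(\pi\act\hat x)\cup\supp(l)\subseteq\pmss{X}$. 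Lemma~\ref{lemm.l.varsigma} and the inductive hypothesis then identify this truth-value with $\denot{\interp H}{\varsigma[X\ssm x]}{\phi}$, completing the argument.
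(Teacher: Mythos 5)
Your proposal is correct and follows essentially the same route as the paper: induction on $\phi$, Theorem~\ref{thrm.pull.out.l} plus equivariance of $\tf P^\hden$ for the atomic case, and for $\Forall{X}\phi$ the map $x\mapsto[l]x$ in one direction and a renaming via Lemma~\ref{lemm.pnl.pi} followed by Lemma~\ref{lemm.at.l} in the other. The only differences are cosmetic---you fold the paper's two subcases ($\supp(\hat x)\cap\supp(l)$ empty or not) into a single uniform renaming, and you are in fact slightly more careful than the paper in also requiring the fresh atoms to avoid $\supp(\hat x)$ and the supports $\supp([l]\varsigma(Y))$ for $Y\in\fv(\phi)$, which is what makes the application of Lemma~\ref{lemm.pnl.pi} return a valuation of the required form.
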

\begin{proof}
By induction on $\phi$.
We consider a selection of cases:
\begin{itemize}
\item
\emph{The case of $\tf P(r)$.}\quad
We consider the two implications separately.
\\[4pt]
\rulefont{\Leftarrow}\ 
Suppose $[l]\interp H,[l]\varsigma\ment \tf P(r)$. 
This means that $\denot{[l]\interp H}{[l]\varsigma}{r}\in\tf P^\flden$. %\denot{[l]\interp H}{}{\tf P}$.
By Theorem~\ref{thrm.pull.out.l} $\denot{[l]\interp H}{[l]\varsigma}{r}=[l]\denot{\interp H}{\varsigma}{r}$ (note that $\atoms(\tf P(r))=\atoms(r)$), and so by Definitions~\ref{defn.fmi} and~\ref{defn.abstraction.X} 
$\pi'\act\denot{\interp H}{\varsigma}{r}\in\tf P^\hden$ for some $\pi'\in\fix(\supp(\denot{\interp H}{\varsigma}{r})\setminus\supp(l))$.
By equivariance of $\tf P^\hden$ it immediately follows that $\denot{\interp H}{\varsigma}{r}\in\tf P^\hden$ and so that $\interp H,\varsigma\ment \tf P(r)$.
\\[4pt]
\rulefont{\Rightarrow}\ 
Now suppose $\interp H,\varsigma\ment \tf P(r)$, so that by definition $\denot{\interp H}{\varsigma}{r}\in\denot{\interp H}{}{\tf P}$.
As in the previous paragraph by Theorem~\ref{thrm.pull.out.l} $[l]\denot{\interp H}{\varsigma}{r}=\denot{[l]\interp H}{[l]\varsigma}{r}$. 
It follows by Definition~\ref{defn.pnl.atoms} that $\denot{[l]\interp H}{[l]\varsigma}{r}\in\tf P^{\flden}$.
\item
\emph{The case of $\Forall{X}\phi$.}\quad
Again we consider the two implications separately: 
\\[4pt]
\rulefont{\Leftarrow}\quad
Suppose $\interp H,\varsigma\not\ment\Forall{X}\phi$.
Unpacking definitions, this means there is some $x\in|\hdenot{\sort(X)}|$ with $\supp(x)\subseteq\pmss{X}$ and $\interp H,\varsigma[X{\ssm} x]\not\ment \phi$.

By inductive hypothesis %there exists an $l$ with $\supp(l)\cap A=\varnothing$ and 
$[l]\interp H,[l](\varsigma[X{\ssm} x])\not\ment\phi$.
We can use Lemma~\ref{lemm.l.varsigma} to write $[l](\varsigma[X{\ssm}x])$ as $([l]\varsigma)[X{\ssm} [l]x]$.
Furthermore, by assumption $\supp(x)\subseteq\atomsdown$ so by Lemma~\ref{lemm.abstraction.support.x} $\supp([l]x)\subseteq\atomsdown\setminus\supp(l)\subseteq\atomsdown$.
It follows by Definition~\ref{defn.phi.interp} that $[l]\interp H,[l]\varsigma\not\ment\Forall{X}\phi$.
\\[4pt]
\rulefont{\Rightarrow}\quad
Suppose $[l]\interp H,[l]\varsigma\not\ment\Forall{X}\phi$.
Unpacking Definition~\ref{defn.phi.interp} this means there are $x'\in|\hdenot{\sort(X)}|$ and $l'\in|\Ldd|$ such that 
$$
\supp([l']x')\subseteq\pmss{X}
\quad\text{and}\quad 
[l]\interp H,([l]\varsigma)[X{\ssm}[l']x']\not\ment\phi. 
$$
If $\supp([l']x')\cap\supp(l)=\varnothing$ then we may use Lemma~\ref{lemm.at.l} and write $[l']x'$ as $[l](([l']x')\at l)$ and deduce by inductive hypothesis that $\interp H,\varsigma\not\ment\phi$.

Otherwise, we choose some $\pi'$ that maps $\supp([l']x')\cap\supp(l)\neq\varnothing$ to a set of atoms in $\pmss{X}$ that is disjoint from $\supp(l)\cup\atoms(\phi)$, and $\pi'$ fixes all other atoms.
This is possible because by construction
$\supp([l']x')$ is finite and $\pmss{X}\setminus(\supp(l){\cup}\atoms(\phi))$ is infinite (recall that $\atomsdown\setminus\atomsdd$ is assumed infinite).
Using Lemma~\ref{lemm.pnl.pi} 
$$
[l]\interp H,([l]\varsigma)[X{\ssm}\pi'\act([l']x')]\not\ment\phi.
$$
We now proceed as in the case where $\supp([l']x') \cap \supp(l) = \varnothing$.
\item
\emph{The case of $\phi\limp\psi$.}\quad
Suppose $\interp H,\varsigma\ment\phi\limp\psi$. 
This means that $\interp H,\varsigma\not\ment \phi$ or $\interp H,\varsigma\ment\psi$.
By inductive hypothesis this is if and only if $[l]\interp H,[l]\varsigma\not\ment\phi$ or $[l]\interp H,[l]\varsigma\ment\psi$.
In either case $[l]\interp H,[l]\varsigma\ment\phi\limp\psi$, and we are done.
\end{itemize}
\end{proof}

\begin{thrm}
\label{thrm.pnl.fin.nonfin}
$\mentfin\phi$ if and only if $\mentint\phi$.
\end{thrm}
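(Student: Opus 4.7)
The plan is to dispatch the two directions separately, using Proposition~\ref{prop.pnl.mentfin.to.ment} as the workhorse.

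For the reverse direction $\mentint\phi \Rightarrow \mentfin\phi$, I would first verify that any interpretation $\interp F$ with finite support automatically has medium support. Given $x \in \fdenot{\alpha}$, the set $\supp(x)$ is finite; for each $a \in \supp(x)$ of sort $i$ with $a \in \mathbb A_i \setminus \atomsdd_i$ we can pick a distinct $b_a \in \atomsdd_i$ not appearing in $\supp(x)$, since each $\atomsdd_i$ is infinite by Definition~\ref{defn.medium}. Composing the swaps $(a\ b_a)$ yields a finite permutation $\pi$ with $\supp(x) \subseteq \pi\act\atomsdd$, so $\interp F$ has medium support and the implication follows at once.

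For the forward direction I would argue contrapositively. Assume $\not\mentint\phi$, so there exist an interpretation $\interp H$ with medium support and a valuation $\varsigma$ such that $\interp H,\varsigma \not\ment \phi$. The key step is to pick some $l \in |\Ldd|$ with $\supp(l)\cap\atoms(\phi) = \varnothing$; this is possible because $\atoms(\phi)$ is finite while both $\atomsdd_i$ and $\atomsdown_i\setminus\atomsdd_i$ are infinite, so we may shift $l_\ast$ by a finite permutation moving the finitely many atoms of $\atoms(\phi)\cap\atomsdd$ out into $\atomsdown\setminus\atomsdd$. Taking $A = \atoms(\phi)$ in Proposition~\ref{prop.pnl.mentfin.to.ment} then gives $[l]\interp H,[l]\varsigma \not\ment \phi$. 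Finally, Lemma~\ref{lemm.LX.medium.ordinary} shows $[l]\interp H$ has finite support and Proposition~\ref{prop.varsigma.Z.is.a.valuation} confirms $[l]\varsigma$ is a valuation to it, so $\not\mentfin\phi$.

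The main obstacle has already been absorbed into Proposition~\ref{prop.pnl.mentfin.to.ment}, where the real induction on $\phi$ lives (in particular the awkward universal quantifier case, which relies on Lemma~\ref{lemm.pnl.pi} to shift an offending $x'$ away from $\supp(l)$). The remaining assembly is routine, with the only subtlety being the two freshness choices above; both are underwritten precisely by the splitting $\atomsdown_i = \atomsdd_i \sqcup (\atomsdown_i\setminus\atomsdd_i)$ built into Definition~\ref{defn.medium}, which is the design feature that makes the medium-support layer mediate between infinite and finite support.
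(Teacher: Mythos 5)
Your proposal is correct and follows the paper's own route exactly: the right-to-left direction is the observation that finite support implies medium support, and the left-to-right direction is the contrapositive assembled from Proposition~\ref{prop.pnl.mentfin.to.ment} (with a fresh $l\in|\Ldd|$ avoiding $\atoms(\phi)$) together with Lemma~\ref{lemm.LX.medium.ordinary}. You in fact supply more detail than the paper, which states both directions in two lines; the freshness choices you justify via the splitting $\atomsdown_i=\atomsdd_i\sqcup(\atomsdown_i\setminus\atomsdd_i)$ are exactly the ones the paper leaves implicit.
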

\begin{proof}
The right-to-left implication is immediate, just as in Theorem~\ref{thrm.fin.nonfin}.
The left-to-right implication follows using Proposition~\ref{prop.pnl.mentfin.to.ment} and Lemma~\ref{lemm.LX.medium.ordinary}.
\end{proof}

%%%%%%%%%%%%%%%%%%%%%%%%%%%%%%%%%%%%%%%
\subsection{Relevance of the theorem}
\label{subsect.pnl.relevance}

It will help to establish some new terminology:
\begin{nttn}
Suppose $\ns X$ is a set with a permutation action and $x\in|\ns X|$.
\begin{itemize*}
\item
Say the element $x\in|\ns X|$ is \deffont{finite-namespace} when $\supp(x)$ is finite.
Similarly say $\ns X$ is finite-namespace when every $x\in|\ns X|$ has finite support.

This is synonymous with $\ns X$ being a nominal set in the sense of \cite{gabbay:newaas-jv}; cf. also Definition~\ref{defn.interp.finsupp}.
\item
Say $x$ is \deffont{$\atomsdd$-namespace} when $\supp(x)\subseteq\pi\act \atomsdd$ for some $\pi$.
Similarly say that $\ns X$ is $\atomsdd$-namespace when every $x\in|\ns X|$ is $\atomsdd$-namespace. 

This is synonymous with \emph{medium} support from Definition~\ref{defn.medium}.
\item
Say $x$ is \deffont{$\atomsdown$-namespace} when $\supp(x)\subseteq\pi\act \atomsdown$ for some $\pi$.
Similarly say that $\ns X$ is \deffont{$\atomsdown$-namespace} when every $x\in|\ns X|$ is $\atomsdd$-namespace.

This is synonymous with $\ns X$ being a permissive-nominal set in the sense of Definition~\ref{defn.nominal.set} or \cite{gabbay:nomtnl}. 
\end{itemize*}
Similarly we will call interpretations finite-namespace, $\atomsdd$-namespace, and $\atomsdown$-namespace in accordance with the support of their underlying sets.
\end{nttn}

The relevance of Theorem~\ref{thrm.pnl.fin.nonfin} is that a PNL predicate is valid over $\atomsdd$-namespace interpretations if and only if it is valid over finite-namespace interpretations.\footnote{For comparison, \emph{nominal logic} does not have this property \cite{pitts:nomlfo-jv}: there are nominal logic predicates that are valid of all finite-namespace interpretations but not of all $\atomsdd$-namespace interpretations (and thus also not valid of all $\atomsdown$-namespace interpretations).  Nominal logic, of course, is a first-order theory; an axiomatisation in first-order logic similar to the axiomatisation of Fraenkel-Mostowski sets from which it is descended. What makes the languages of this paper different is that they are purpose-built using the dedicated new syntax of (permissive-)nominal terms.}

The PNL of \cite{gabbay:pernl,gabbay:pernl-jv,gabbay:nomtnl} has a sequent system giving a notion of logical entailment which is proved sound and complete for $\ment$, that is, for the collection of $\atomsdown$-namespace interpretations.
This differs from the validity $\mentint$, which is validity over $\atomsdd$-namespace interpretations (those with medium support).
This is a more restricted class of models.

Medium support is a new idea to the theory of PNL.
When models are restricted, more statements become valid (usually).
In this case we get a family of theorems, which is exemplified by Proposition~\ref{prop.not.necessarily}.
It remains to devise a complete proof system for PNL over medium support. 

We would not speculate on whether large or medium support is `better'; we suspect that the situation is similar to the intuitionistic/classical question of whether to allow double negation elimination: sometimes we may want it and sometimes we may not.

The value of Theorem~\ref{thrm.pnl.fin.nonfin} is that it tells us that $\atomsdd$ is as small as we need go in exploring validity: restricting models of PNL further to smaller namespaces, and in particular to finite support, will not give us any extra valid propositions.
As we shall argue in the Conclusions, working with sets with infinite support is often easier than working with sets with finite support, so this matters.

And note the obvious: once we carried out our constructions and applied them to permissive-nominal algebra, we could re-use them for permissive-nominal logic with a relatively slight effort of two pages of mathematics in Subsection~\ref{subsect.fin.med}.

%%%%%%%%%%%%%%%%%%%%%%%%%%%%%%%%%%%%%%%%%%%%%%%%%%%%%%%%%%%%%%%
\section{More permission sets, more permutations}
\label{sect.shift}

In Definition~\ref{defn.the.comb} we followed \cite{gabbay:nomtnl} and took permission sets to be sets of the form $\pi\act \atomsdown$.
This captures a simple assertion language about the atoms permitted in unknowns.
The results in this paper are sensitive to the expressivity of this language: if we make it slightly more powerful then the results in this paper fail. 

%%%%%%%%%%%%%%%%%%%%%%%%%%%%%%%%%%%%%%%%%%%%%%%%%%%%%%%
\subsection{More permission sets}

If we follow e.g. \cite{gabbay:perntu-jv} and take permission sets to be sets of the form $(\atomsdown\setminus A)\cup B$ where $A\subseteq\atomsdown$ and $B\subseteq\atomsup$ are finite, then the results in this paper fail.

This genuinely enlarges the set of permission sets (and so makes the assertion language which they represent, more expressive).
For instance, if $b\in\atomsup$ then there is no finite permutation $\pi$ such that $\pi\act\atomsdown=\atomsdown\cup\{b\}$.

\begin{nttn}
Write $\mathcal P$ for the set of all sets of atoms differing finitely from $\atomsdown$ as just described.
\end{nttn}

\begin{prop}
\label{prop.upgrade.zero.fail}
There exists a theory ${\theory T}$ in permissive-nominal algebra with permission sets from ${\mathcal P}$, and an assertion ${r'=s'}$ in that theory, such that ${\theory T\mentfin r'=s'}$ and ${\theory T\not\ment r'=s'}$ (where models are permissive-nominal sets with permission sets in ${\mathcal P}$).
\end{prop}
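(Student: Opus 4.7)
The plan is to locate the exact step in the proof of Theorem~\ref{thrm.fin.nonfin} that relies on the restricted form $\pmss{X}=\pi\act\atomsdown$, and then exploit that step to exhibit an explicit counterexample. Tracing through, the critical use is not in $[m]\interp H$ being an interpretation (Proposition~\ref{prop.F.finite.interpretation} still goes through) nor in Theorem~\ref{thrm.pull.out.l}, but in the construction of a valuation $\varsigma$ to $\interp H$ out of a valuation $\varsigma'$ to $[m]\interp H$ in Proposition~\ref{prop.FT}(2). There one applies Lemma~\ref{lemm.at.l} to obtain $\varsigma(X_i)=\varsigma'(X_i)\at l$, and a short calculation gives
$$\supp(\varsigma(X_i))\subseteq\supp(\varsigma'(X_i))\cup\supp(l)\subseteq\pmss{X_i}\cup\atomsdown.$$
For $\varsigma$ to be a valid valuation we need this contained in $\pmss{X_i}$, which holds when $\pmss{X_i}$ is an orbit $\pi\act\atomsdown$ (where $|A|=|B|$ so nothing of $\atomsdown$ is lost without a matching atom from $\atomsup$ being added) but fails precisely for permission sets in $\mathcal{P}$ outside that orbit.

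Accordingly I would take the signature to contain a single unknown $X$ of base sort $\tau$ with $\pmss{X}=\atomsdown\setminus\{\lstel{\mn 1}\}$ for a fixed $\lstel{\mn 1}\in\atomsdown$; this lies in $\mathcal{P}$ but not in $\{\pi\act\atomsdown\}$ (any element of the latter must differ from $\atomsdown$ by equal finite numbers of atoms on either side of the $\atomsdown/\atomsup$ split). The theory $\theory T$ will contain an axiom forcing $\varsigma(X)$ to be fixed by one specific swap $(a\ a')$ with $a,a'\in\pmss{X}$, together with a test equation of shape $(\lstel{\mn 1}\ a)\act X = (\lstel{\mn 1}\ a')\act X$ or a $\tf f$-wrapped variant thereof. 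The point is that this test equation, although not derivable purely syntactically from the axiom, is forced in every finitely-supported model by an equivariance/fresh-atom argument: in such a model $\supp(\varsigma(X))$ is finite and entirely inside $\pmss{X}=\atomsdown\setminus\{\lstel{\mn 1}\}$, so $\lstel{\mn 1}\notin\supp(\varsigma(X))$ and we may apply Corollary~\ref{corr.notinsupp} and Lemma~\ref{lemm.supp.restricted} to rewrite the two sides into each other.

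For the failure direction I would construct an explicit $\atomsdown$-namespace interpretation $\interp H$ of $\theory T$ containing an element $x^\ast\in\hdenot{\tau}$ with $\supp(x^\ast)=\atomsdown\setminus\{\lstel{\mn 1}\}$---concretely, sit $x^\ast=\atomsdown\setminus\{\lstel{\mn 1}\}$ inside a power-set-style permissive-nominal set, verifying as in the discussion around $\atomsdown$-namespace elements that its support is exactly $\atomsdown\setminus\{\lstel{\mn 1}\}$. Set $\varsigma(X)=x^\ast$; since every atom of $\pmss{X}$ is in $\supp(\varsigma(X))$, the swap $(\lstel{\mn 1}\ a)$ genuinely moves $x^\ast$ (it exchanges $\lstel{\mn 1}\notin\supp$ with $a\in\supp$), and the two sides of the test equation evaluate to different elements; a symmetric check shows the axiom is satisfied on $x^\ast$.

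The main obstacle is calibrating the axiom and the test equation so that the fresh-atom step in the finitely-supported case really closes. The delicate point is that equational logic cannot quantify over atoms, so the finite-support consequence must be obtained semantically by applying equivariance of $\tf f^\iden$ and the model operations combined with the fact that $\supp(\varsigma(X))$ leaves infinitely many atoms of $\pmss{X}$ untouched---while the $\atomsdown$-namespace witness $x^\ast$ is designed precisely to obstruct this move by exhausting all of $\pmss{X}$. Once this balance is struck, $\theory T\mentfin r'=s'$ follows from the equivariance argument and $\theory T\not\ment r'=s'$ from the explicit witness.
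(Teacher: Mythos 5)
Your overall strategy---choose a permission set in $\mathcal P$ that lies outside the finite-permutation orbit of $\atomsdown$, and refute $\ment$ with an element whose support exactly exhausts it---is sound, and it is in fact the technique the paper itself uses for the \emph{shift} case in Proposition~\ref{prop.upgrade.fail}. But the finitely-supported direction has a genuine gap as written. From the axiom $(a\ a')\act X=X$ together with $\lstel{\mn 1}\notin\supp(\varsigma(X))$ you cannot ``apply Corollary~\ref{corr.notinsupp} and Lemma~\ref{lemm.supp.restricted} to rewrite the two sides into each other'': a swapping can fix an element whose support contains \emph{both} swapped atoms, so the axiom does not by itself force $a,a'\notin\supp(x)$, and if $a,a'\in\supp(x)$ then by Lemma~\ref{lemm.supp.pi.x} the elements $(\lstel{\mn 1}\ a)\act x$ and $(\lstel{\mn 1}\ a')\act x$ have different supports and are genuinely distinct---your test equation would then \emph{fail}. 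To close the argument you must apply the axiom not to $x$ but to a permuted copy $(a'\ c)\act x$ for some $c$ in the permission set of $X$ and fresh for $x$ (using closure of the carrier under the permutation action); Corollary~\ref{corr.notinsupp} then forces $a\notin\supp((a'\ c)\act x)$, a contradiction, whence $a,a'\notin\supp(x)$ and both sides of the test equation equal $x$. This is precisely the step that needs $\supp(x)$ not to exhaust $\fa(X)$, and you flag it yourself as ``the main obstacle'' without resolving it, so what you have is a plan with an acknowledged hole rather than a proof. A second, smaller imprecision: the witness cannot live in a full ``power-set-style'' carrier, since then $\{a\}$ has support inside $\fa(X)$ and already violates the axiom; the carrier must be cut down to (essentially) the single orbit $\{\pi\act x^\ast\mid\pi\ \text{finite}\}$, for which one checks that $x^\ast$ is its only element with support contained in $\atomsdown\setminus\{\lstel{\mn 1}\}$.

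For comparison, the paper's proof is much more economical and avoids the fresh-atom calibration entirely. It takes a single constant-like term-former $0$ with $\pmssC(0)=\varnothing$ and the single axiom $X=0$ with $\fa(X)=\atomsdown$, which collapses to $0^\fden$ every element whose support lies in $\atomsdown$. In a finitely-supported model every element can be moved into $\atomsdown$ by a finite permutation, so by equivariance every element equals $0^\fden$ and hence $\theory T\mentfin Z=0$ for an unknown $Z$ with $\fa(Z)=\atomsdown\cup\{b\}$, $b\in\atomsup$; whereas interpreting $\tau$ as $\{\pi\act(\atomsdown\cup\{b\})\mid\pi\ \text{finite}\}\cup\{\varnothing\}$ with $\varsigma(Z)=\atomsdown\cup\{b\}$ refutes $Z=0$, since no finite permutation moves $\atomsdown\cup\{b\}$ into $\atomsdown$. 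Your construction can be repaired along the lines above, but the collapse-to-a-point axiom is the cleaner instrument for this proposition.
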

\begin{proof}
Assume one base type $\tau$ and one term former $0:\tau$ with $\pmssC(0)=\varnothing$.
Assume an axiom $X=0$ where $\fa(X)=\atomsdown$. 
Assume an unknown $Z$ with $\sort(Z)=\tau$ and $\fa(Z)=\atomsdown\cup\{b\}$ where $b\in\atomsup$.
Then:
\begin{itemize}
\item
$\theory T\mentfin Z=0$.

For suppose $\interp F$ is an interpretation of $\theory T$ with finite support: then for any $x\in\denot{\interp F}{}{\tau}$, there is some finite $\pi$ with $\supp(\pi\act x)\subseteq\atomsdown$, hence by our single axiom $\pi\act x = 0$ and by equivariance $x = \pi^\mone \act 0 = 0$, since $0$ has empty support.
\item
However, $\theory T\not\ment Z=0$.

To see this, interpret $\tau$ to be the set $\{\pi\act(\atomsdown\cup\{b\})\mid \pi\text{ finite}\}\cup\{\varnothing\}$,
interpret $0$ by $\varnothing$, and take $\varsigma(Z)=\atomsdown\cup\{b\}$.
\end{itemize}
\end{proof}

Initially we used $\mathcal P$; notably in \cite{gabbay:pernl,gabbay:perntu-jv,gabbay:pernl-jv}.
However, in later papers such as \cite{gabbay:nomtnl} we preferred the design of Definition~\ref{defn.the.comb} because it seemed to make some proofs easier to express.
In the light of the results of this paper we can now better understand the significance of our design choices:
Proposition~\ref{prop.upgrade.zero.fail} suggests that our design in Definition~\ref{defn.the.comb} is mathematically more elementary and somewhat closer to the design `nominal terms + finitely-supported nominal sets' from the previous literature.
That is, the design of Definition~\ref{defn.the.comb} and \cite{gabbay:nomtnl} is the closest `permissive' version of traditional nominal techniques, and the design of \cite{gabbay:pernl,gabbay:perntu-jv,gabbay:pernl-jv} is slightly but measurably more expressive.

\subsection{\emph{shift}-permutations}
\label{subsect.shift}

In the presence of infinite permutations, the results in this paper fail.
We sketch the mathematics involved, starting with a justification of why infinite permutations are an interesting case to consider.

For simplicity assume a single sort of atom.
\begin{defn}
\label{defn.shift}
Suppose $a\in\atomsdown=\{a,a_{\mn 1},a_{\mn 2},a_{\mn 3},\dots\}$ %(recall Definition~\ref{defn.L}) 
and $\atomsup=\{a_1,a_2,a_3,\dots\}$.

Assume a bijection $\shift$ on atoms mapping $\atomsdown$ to $\atomsdown\setminus\{a\}$ and such that $\mathbb A\setminus\nontriv(\pi)$ is infinite (we can do this because we assumed that $\mathbb A$ is countable).
\end{defn}
We illustrate an example:
$$
\text{Illustration of $\shift$:}\ \scalebox{.8}{$\xymatrix@=3ex{
%f(\mn 9) \ar@/^1pc/@{<-}[r] & 
%f(\mn 8) \ar@/^1pc/@{<-}[r] & 
%f(\mn 7) \ar@/^1pc/@{<-}[r] & 
a_{\mn 6} \ar@/^1pc/@{<-}[r] & 
a_{\mn 5} \ar@/^1pc/@{<-}[r] & 
a_{\mn 4} \ar@/^1pc/@{<-}[r] & 
a_{\mn 3} \ar@/^1pc/@{<-}[r] & 
a_{\mn 2} \ar@/^1pc/@{<-}[r] & 
a_{\mn 1} \ar@/^1pc/@{<-}[r] & 
a \ar@/^1pc/@{<-}[rr]& 
a_{1} \ar@(ld,rd)& 
a_{2} \ar@/^1pc/@{<-}[rr]& 
a_{3} \ar@(ld,rd)& 
a_{4} \ar@/^1pc/@{<-}[rr]& 
a_{5} \ar@(ld,rd)&
a_{6} %\ar@/^1pc/@{<-}[rr]& 
%f(7) \ar@(ld,rd)&
%f(8) & 
}$}
%\\[1ex]
$$
Call $\shift$ a \deffont{shift}-permutation.

$\shift$ has a measurable and favourable effect on the mathematics and algorithmics of nominal syntax.
For instance:
\begin{itemize}
\item
$\shift$ nontrivially increases the deductive power of $\forall X$ in PNL \cite[Subsection~2.7]{gabbay:pernl-jv}.

If $\fa(X)=\atomsdown$ where $a\in\atomsdown$ then $\Forall{X}\tf R(X,X)$ does not entail $\tf R((X,a),(X,a))$ without $\shift$, but it \emph{does} entail $\tf R((X,a),(X,a))$ \emph{with} $\shift$ (for $\tf R$ having an appropriate arity).
This extra power is irrelevant if we only care about finitely-supported models, which is why the issue has not arisen in previous work.
\item
\emph{shift}-permutations can be used to obtain a particularly concise unification algorithm \cite[Section~4]{gabbay:nomtnl}.
\end{itemize}
For more discussion see \cite[Subsection~3.6]{gabbay:nomtnl}.

This extra power is not particularly expensive: we can do what we are used to in nominal techniques, in the presence of $\shift$.
Indeed, the results of \cite{gabbay:nomtnl} are parameterised over a permutation group general enough to admit $\shift$ because this was \emph{easier} than excluding it. In particular the specific design of the nominal unification algorithm and HSP result there, are shorter and simpler because of their use of $\shift$. 

However, in the presence of $\shift$ the results of this paper fail.
Proposition~\ref{prop.upgrade.fail} gives an example of a signature for which permissive-nominal algebra $\ment$ (all permissive-nominal models) is complete, but $\mentfin$ (models with finite support) are not.
In order to state this result we must `upgrade' the material in this paper with $\shift$.
\begin{defn}
To augment Sections~\ref{sect.perns} and~\ref{sect.nominal.terms.syntax} with a shift permutation $\shift$, we proceed as follows:
\begin{enumerate*}
\item
In Definition~\ref{def.permutation} permutations are finitely generated by swappings \emph{and} $\shift$ (they remain finitely representable, but $\nontriv(\pi)$ is now not always finite).

Write $\mathbb P_\shift$ for the group of bijections generated by swappings and $\shift$.
\item
In Definition~\ref{defn.perm.set} assume the permutation action has type $(\mathbb P_{\shift}\times |\ns X|)\to |\ns X|$.
So permissive-nominal sets have an action by swappings and $\shift$.
\item 
In Definition~\ref{defn.support} we say that $A\subseteq\mathbb A$ supports $x\in |\ns X|$ when for every permutation $\pi\in\mathbb P_{\delta}$, if $\pi(a) =a$ for all $a \in A$ then $\pi\act x =x$.\footnote{%
This is a little stronger than we need.
We could also retain the condition that $\pi$ be finite in the definition of supporting set, so we say that $A\subseteq\mathbb A$ supports $x\in |\ns X|$ when for every \emph{finite} permutation $\pi\in\mathbb P_{\fini}$ (so no $\shift$), if $\pi(a) =a$ for all $a \in A$ then $\pi\act x =x$.
We only ever $\alpha$-convert by finitely many atoms in this paper, so the proofs remain unchanged.

What does happen is that we admit models with elements which are fixed by finite permutations, but perhaps not by $\delta$.
For more on this design see \cite{gabbay:pernl-jv}, in particular Remark~3.3.
}
\item
In the examples of Subsection~\ref{subsect.pn.sets.examples} extend for the extra permutations in the natural way.
So $\pi\act a=\pi(a)$ for $\pi\in\mathbb P_\shift$ and $\pi\act[a]x=[\pi(a)]\pi\act x$ for $\pi\in\mathbb P_\shift$.
\item
In Definition~\ref{defn.terms} extend terms also with the extra permutations.
So $\pi\act X$ is a term for $\pi\in\mathbb P_\shift$.
The permutation action Definition~\ref{defn.fa} extends in the natural way.
\item
We extend the notion of equivariance (Definition~\ref{defn.f.equivar}) with the extra permutations.
So $F$ is \emph{equivariant} when $F(\pi\act x)=\pi\act F(x)$ for all permutations $\pi\in\mathbb P_\shift$ and $x\in|\ns X|$.\footnote{We then call the notion of equivariance from Definition~\ref{defn.f.equivar} \deffont{finite equivariance}.   It is possible to be equivariant for finite permutations but not for $\shift$.  The proof of Proposition~\ref{prop.upgrade.fail} will depend on this.}
\end{enumerate*}
\end{defn}

\begin{prop}
\label{prop.upgrade.fail}
There exists a theory ${\theory T}$ in permissive-nominal algebra with ${\shift}$, and an assertion ${r'=s'}$ in that theory, such that ${\theory T\mentfin r'=s'}$ and ${\theory T\not\ment r'=s'}$ (where models are permissive-nominal sets with ${\shift}$).
\end{prop}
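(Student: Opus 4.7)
The plan is to mimic the construction of Proposition~\ref{prop.upgrade.zero.fail}, but exploit the enlarged supply of permission sets that $\shift$ provides. Specifically, take $\theory T$ to have a single base sort $\tau$, a single constant $0:\tau$ with $\pmssC(0)=\varnothing$, an unknown $X$ with $\sort(X)=\tau$ and $\fa(X)=\atomsdown$, and the single axiom $X=0$. The witness assertion uses a further unknown $Z$ with $\sort(Z)=\tau$ and $\fa(Z)=\shift^{-1}\act\atomsdown$; the assertion is then $r'=s'$ given by $Z=0$. The key observation is that $\shift^{-1}\act\atomsdown$ \emph{is} a permission set in our $\shift$-augmented sense (take $\pi=\shift^{-1}\in\mathbb P_\shift$), and it properly contains $\atomsdown$ (it contains the atom $\shift^{-1}(a)\in\atomsup$); no finite permutation can achieve this, which is exactly where $\shift$ gives us extra expressive power.

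First I would establish $\theory T\mentfin Z=0$. Suppose $\interp F$ has finite support and $\varsigma$ is a valuation; let $x=\varsigma(Z)\in|\fdenot{\tau}|$. Since $\supp(x)$ is finite and $\atomsdown$ is infinite, there is a \emph{finite} permutation $\pi\in\mathbb P_{\fini}$ mapping $\supp(x)$ into $\atomsdown$. Then $\supp(\pi\act x)\subseteq\atomsdown=\pmss{X}$, so the axiom $X=0$ applied with $X\mapsto\pi\act x$ gives $\pi\act x=0^\fden$; since $0^\fden$ has empty support, applying $\pi^\mone$ yields $x=0^\fden$. Next, for $\theory T\not\ment Z=0$, I would construct a counterexample interpretation $\interp H$: take
\[
|\hdenot{\tau}| \;=\; \{0\}\,\cup\,\{\pi\act(\shift^{\mone}\act\atomsdown)\mid\pi\in\mathbb P_\shift\},
\]
viewing each non-$0$ element as a subset of $\mathbb A$, with the pointwise action; set $0^\hden=0$ and $\varsigma(Z)=\shift^\mone\act\atomsdown$. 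Every element has a supporting permission set: $0$ is supported by $\varnothing\subseteq\atomsdown$, and $\pi\act(\shift^\mone\act\atomsdown)=(\pi\fcomp\shift^\mone)\act\atomsdown$ is a permission set supporting itself. The axiom $X=0$ holds vacuously-in-structure, because the only element of $|\hdenot{\tau}|$ whose support is contained in $\atomsdown$ is $0$ itself (any other element has support $(\pi\fcomp\shift^\mone)\act\atomsdown\not\subseteq\atomsdown$). But $\varsigma(Z)$ has non-empty support, hence $\varsigma(Z)\neq 0^\hden$, so $\interp H\ment\theory T$ while $\interp H,\varsigma\not\ment Z=0$.

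The main obstacle is verifying that $\interp H$ really is a permissive-nominal set \emph{in the $\shift$-augmented sense}, i.e.~a set with a $\mathbb P_\shift$-action in which every element's least supporting set is contained in some permission set $\pi\act\atomsdown$ with $\pi\in\mathbb P_\shift$. This is precisely where $\shift$ is essential: without it, there would be no permutation witnessing $\shift^\mone\act\atomsdown$ as a permission set, and the element $\varsigma(Z)$ could not legitimately appear in a permissive-nominal model, so the counterexample would collapse (mirroring why $\mathcal P$ was needed in Proposition~\ref{prop.upgrade.zero.fail}). Once this point is granted, the argument parallels Proposition~\ref{prop.upgrade.zero.fail} line-for-line, with $\atomsdown\cup\{b\}$ replaced by $\shift^\mone\act\atomsdown$, illustrating that adding $\shift$ to the permutation group strictly enlarges the class of permissive-nominal models and thereby breaks completeness with respect to finite-support models.
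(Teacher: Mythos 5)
There is a genuine gap, and it sits exactly where you lean on the analogy with Proposition~\ref{prop.upgrade.zero.fail}. The feature that makes $\shift^\mone\act\atomsdown$ a permission set in the $\shift$-augmented sense---namely that it lies in the $\mathbb P_\shift$-orbit of $\atomsdown$---is precisely what destroys the counterexample. Concretely: your carrier $\{0\}\cup\{\pi\act(\shift^\mone\act\atomsdown)\mid\pi\in\mathbb P_\shift\}$ contains $\shift\act(\shift^\mone\act\atomsdown)=\atomsdown$ itself, a non-zero element whose support is contained in $\atomsdown=\pmss{X}$, so the axiom $X=0$ fails in your model; your parenthetical claim that every non-zero element has support $(\pi\fcomp\shift^\mone)\act\atomsdown\not\subseteq\atomsdown$ is refuted by $\pi=\shift$. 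More fundamentally, no model can rescue this theory/assertion pair: if $\supp(x)\subseteq\shift^\mone\act\atomsdown$ then $y=\shift\act x$ satisfies $\supp(y)=\shift\act\supp(x)\subseteq\atomsdown$, the axiom forces $y=0$, and since $0$ has empty support $x=\shift^\mone\act y=0$. Hence $\theory T\ment Z=0$ outright, and the assertion cannot separate $\ment$ from $\mentfin$. (Your first half, showing $\theory T\mentfin Z=0$, is correct but vacuously so.) In Proposition~\ref{prop.upgrade.zero.fail} the argument works only because $\atomsdown\cup\{b\}$ is \emph{not} of the form $\pi\act\atomsdown$ for any finite $\pi$, so elements supported by it are out of reach of the axiom; once $\shift$ joins the permutation group acting on models, every permission set is reachable from $\atomsdown$ and that escape route is closed.

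The paper's proof uses a different mechanism, which is the ingredient you are missing: \emph{fuzzy support}. It takes no term-formers, the axiom $(b\ a)\act X=X$ with $b\not\in\atomsdown$, and the assertion $\shift\act Y=Y$. In a finite-support model the axiom forces every element to have empty support and hence to be fixed by $\shift$, giving $\theory T\mentfin\shift\act Y=Y$; but the elements $x_i=\{(\pi\fcomp\shift^i)\act\atomsdown\mid\pi\text{ finite}\}$---orbits of $\shift^i\act\atomsdown$ under \emph{finite} permutations, not single literal sets of atoms---are fixed by every swap yet satisfy $\shift\act x_i=x_{i+1}\neq x_i$, so $\theory T\not\ment\shift\act Y=Y$. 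The essential point is to exhibit an element fixed by all finite permutations but moved by $\shift$; a construction that stays inside a single $\mathbb P_\shift$-orbit of literal sets of atoms, as yours does, cannot supply one.
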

\begin{proof}
Assume no term-formers and one base type $\tau$.
Assume $a\in\atomsdown$ and a shift permutation $\shift$ bijecting $\atomsdown$ with $\atomsdown\setminus\{a\}$, as illustrated just after Definition~\ref{defn.shift}.

Assume an axiom $(b\ a)\act X=X$ where 
$b\not\in\atomsdown$.
Then:
\begin{itemize*}
\item
If $\interp F$ is a model of $\theory T$ with finite support then $\supp(x)=\varnothing$ for every $x\in\denot{\interp F}{}{\tau}$.
For suppose there exists $x$ with $\supp(x)\neq\varnothing$.
By equivariance we may (apply a permutation to $x$ to) assume without loss of generality that $a,b\not\in\supp(x)$.

Now choose some $a'\in\supp(x)$ and choose some $\pi$ mapping $\supp(x)$ to a subset of $\atomsdown$ and such that specifically $\pi(a')=a$. 
By our axiom, $(b\ a)\act(\pi\act x) =\pi\act x$.
It follows by calculations on permutations that $(b\ a')\act x=x$ and so by Corollary~\ref{corr.notinsupp} that $a'\not\in\supp(x)$, a contradiction. 

Thus, $\theory T\mentfin \delta\act Y=Y$. 
\item
$\theory T\not\ment \delta\act Y=Y$.
To see this consider the elements $x_i=\{(\pi\fcomp\delta^i)\act \atomsdown\mid \pi\text{ finite}\}$ with the pointwise action, for every $i\in\mathbb Z$ (where $\mathbb Z$ is the integers; see Definition~\ref{defn.NZ}).

It is a fact that $(b\ a)\act x_i=x_i$, but it is also a fact that 
$\delta\act x_i=x_{i+1}\neq x_i$.
We interpret $\tau$ to be the set $\{x_i \mid i\in\mathbb Z\}$ and see that $(b\ a)\act x_i=x_i$ for every $i$ so the axiom $(b\ a)\act X=X$ is valid, but $\delta\act x_0\neq x_0$ so $\theory T\not\ment\delta\act X=X$.

This observation is exactly the \emph{fuzzy support} noted in \cite{gabbay:genmn}, see also Remark~3.3 from \cite{gabbay:pernl-jv}.
\end{itemize*}
\end{proof}

\begin{rmrk}
\label{rmrk.where.fail}
Where do the proofs fail?
Failure occurs in the interaction of $\atoms(r)$ (Definition~\ref{defn.atomsof}) with Theorem~\ref{thrm.pull.out.l}.

The reasonable definition for $\atoms(\shift\act X)$ is $\nontriv(\shift)/\pmss{X}$, which is infinite.
This gives us infinitely many atoms to avoid in order to guarantee $\atoms(r)\cap\supp(l)=\varnothing$ in Theorem~\ref{thrm.pull.out.l}.
Thus, $\supp([l]\denot{\interp H}{\varsigma}{r})$ need not be finite.
\end{rmrk}
 
As a corollary we can clarify something that is evident but only semi-formal in previous work: permissive-nominal terms with $\shift$ are strictly more expressive than permissive-nominal terms without $\shift$, and also strictly more expressive than `ordinary' nominal terms.

%%%%%%%%%%%%%%%%%%%%%%%%%%%%%%%%%%%%%%
\section{Conclusions}
\label{sect.conclusions}

%\paragraph*{Summary}
We have seen permissive-nominal sets and how, given a permissive-nominal set $\ns X$, we can build a corresponding nominal set $[\mathbb L]\ns X$ from $\ns X$ by applying to each $x\in|\ns X|$ an infinite simultaneous atoms-abstraction abstracting all but finitely many of the atoms in $\supp(x)$.
We have used this to translate between interpretations with differently sized sets of support, and we have used this translation to translate between different notions of validity for permissive-nominal syntaxes.

It can be easier to work with permissive-nominal models---even dramatically easier.
To see an example, compare the direct completeness proof for nominal algebra with respect to finitely-supported models in \cite{gabbay:nomuae} (subsections~4.3 and~4.4; over five pages including a `trick') with the completeness proof for the permissive-nominal algebra used in this paper with respect to permissive-nominal models in \cite{gabbay:nomtnl} (subsection~7.5; under two pages, and the maths is straightforward).
Even more extreme, compare the proof of the Nominal HSPA theorem from \cite{gabbay:nomahs} (twenty-eight pages) with the permissive-nominal HSP theorem from \cite{gabbay:nomtnl} (five pages).\footnote{This is unfair.  For instance, the five pages do not include setting up the syntax.  Still, looking at the maths, a leap in difference in complexity is clear.} 

What this means is that---based on this author's experience---even if the reader is interested specifically in finitely-supported models, it might be shorter and cleaner to prove completeness with respect to some flavour of infinitely-supported permissive-nominal models first, and then to use this paper off-the-shelf. 

\paragraph*{de Bruijn indexes}
The technical construction at the heart of this paper, $[\mathbb L]\ns X$ from Definition~\ref{defn.abstraction.X}, is arguably reminiscent of de Bruijn indexes \cite{bruijn:lamcnn}.
Given an $x\in|\ns X|$ with infinite support, we form $[l]x$ where $\supp(x)\setminus\supp(l)$ is finite.
In doing this, we in effect convert all but finitely many of the atoms in $\supp(x)$ into numerical indexes, where $a$ is identified by the position in $l$ at which it occurs (if any).

Of course this is not a literal replacement in $x$, since we assume no internal structure.
But since nominal elements have names, binding these names in order corresponds to turning them into numerical indexes.
An explicit connection is made in \cite[Section~4]{gabbay:genmn} as mentioned below.  

\paragraph*{Infinite sets of atoms in the literature}
The notions of infinite support, infinite lists of atoms, and infinite simultaneous atoms-abstraction were considered by the author in \cite{gabbay:fmhotn,gabbay:genmn}.

Translations between nominal abstract syntax, name-carrying syntax, and de Bruijn syntax were given in \cite[Section~4]{gabbay:genmn}.
The precise definition used in this paper is different and tailored to our intended application (we restrict to the subset of abstractions such that $\supp([l]x)$ is finite), and of course, we concentrate on things other than abstract syntax.

The notion of not-necessarily-finite support was also raised in \cite{pitts:nomlfo-jv}, and Cheney took up the suggestion in \cite{cheney:comhtn}.
The \emph{support ideals} considered there are not quite the same as the permission sets considered here (for instance, permission sets in this paper are never finite, and the emphasis on well-orderings is absent in Cheney's work), but the spirit of the maths is similar. 

\paragraph*{Implicit connection with unknowns}
A non-evident connection exists between $[l]x$ and unknowns, which this paper has not explored.
In a separate paper we demonstrate how a model of unknowns $X$ is given by infinite well-orderings of permission sets \cite{gabbay:twolns}.

One way to view $[l]x$ is as `$x$ abstracted by an unknown $X$'.

Now $\mathbb L$ from Definition~\ref{defn.L} is a \emph{single permutation orbit} under finite permutations of some list $l_\ast$ of atoms.
This author calls this a \emph{namespace}---$\mathbb L$ is a namespace, that is, a set of sets of atoms (in order) obtained by permuting \emph{finitely} many of them at a time.
We go from $\ns X$ to $[\mathbb L]\ns X$ essentially by abstracting a namespace, and because an unknown identifies a namespace, this can be read as a (level 1) abstraction of (the atoms in) a level 2 variable.

This paper does not make anything of these connections, but they exist in the background.
At some point, we hope to produce a broader account which will bring the threads in the various papers together and makes clearer the overall picture.
For now, the results in this paper have independent interest as discussed above. 

\paragraph*{\emph{shift}-permutations}
We concluded the technical part of this paper in Section~\ref{sect.shift} by discussing \emph{shift}-permutations.
These infinite permutations are useful and mathematically well-behaved, but they mark a point at which permissive-nominal techniques go strictly beyond the expressivity of nominal techniques, and this is made formal: we saw in Section~\ref{sect.shift} how the results of Section~\ref{sect.completeness.fin} depend on permutations $\pi$ being finite and when we include infinite permutations in syntax, the results begin to fail.
This is reasonable and as it should be.

So a lesson we can draw from this paper and from the translation in \cite[Section~4]{gabbay:perntu-jv} is this: permissive-nominal terms with finite permutations are essentially the same thing as (but somewhat better-behaved than) `ordinary' nominal terms; permissive-nominal terms with possibly infinite permutations are different, and they are strictly more expressive.

\paragraph*{Non-equivariance of $\atoms$}
One curious aspect of our proofs is that the function $\atoms$ from Definition~\ref{defn.atomsof}, which plays a role in Section~\ref{sect.two.comm}, is not invariant under $\alpha$-equivalence.
For instance, $\atoms([a]X)=\{a\}$ and $\atoms([b](b\ a)\act X)=\{b,a\}$ (where $a\in\atomsdown$ and $b\in\atomsup$).

It is shown in \cite{gabbay:twolns} and \cite{gabbay:nomtnl} that valuations $\varsigma$ (Definition~\ref{defn.valuation}) can be thought of as (finite-)equivariant maps out of \emph{moderated} unknowns $\pi\act X$ considered as a permissive-nominal set.

The non-equivariance of $\atoms$ is an artefact of the fact that the syntax of this paper amounts to choosing $\id\act X$ as a representative of the permutation equivalence class $\{\pi\act X\mid \pi\text{ a permutation}\}$.
Permissive-nominal syntax is \emph{already} non-equivariant, because for each unknown-up-to-permutations we have chosen a canonical representative $X$.

None of this matters for the proofs here, because we only care about avoiding capture with finitely many atoms.

\paragraph*{Final words on set theory}
The results of this paper are reminiscent of the upwards and downwards L\"owenheim-Skolem theorems, which express that a first-order theory cannot `count' the cardinality of its infinite models \cite{hodges:modt}.
The construction of this paper can be read as saying that first-order permissive-nominal syntax with finite permutations cannot `count' the cardinality of its supporting sets. 

We believe it would be fairly easy to strengthen and generalise this result to the following: first-order nominal syntax cannot `count' the cardinality of the set of atoms or its supporting sets, so long as these are no smaller than the permutations in that syntax.
Making this formal would require us to be just a little systematic 
but it should not be too hard using a free construction---and the syntax should be a natural generalisation sufficient to subsume permissive-nominal algebra and permissive-nominal logic.

More generally, we can also ask how the group of permutations can be independently enlarged or restricted in syntax and in the denotation.
For instance, in this paper we have considered syntax and semantics using the \emph{same} group of permutations.
But the semantics could allow more permutations than the syntax, e.g. we could allow \emph{shift} in the denotation (this is useful to `make support smaller' in some element), but not in the syntax (so that we might avoid the issues discussed in Subsection~\ref{subsect.shift} and specifically in Remark~\ref{rmrk.where.fail}).
In short, we see this paper as the first of a family of similar results which may become useful if and when further variations on logics based on nominal terms, and their models, are developed. 
We leave these thoughts to future work. 

\subsection*{Acknowledgements.}  
Many thanks to an anonymous referee, without whose careful input this paper would not have reached its current form.

\newpage 
%\bibliography{JamieBIB}

\end{document}